\newtheorem{example}{Example}
\newtheorem{definition}{Definition}
\newtheorem{lemma}{Lemma}
\newcommand{\ehi}{}
\newcommand{\romc}{\mathrm{C}}
\newcommand{\rome}{\mathrm{E}}
\newcommand{\romr}{\mathrm{R}}
\newcommand{\romp}{\mathrm{P}}
\newcommand{\romo}{\mathrm{O}}
\newcommand{\romi}{\mathrm{I}}
\newcommand{\romw}{\mathrm{W}}
\newcommand{\romu}{\mathrm{U}}
\newcommand{\calf}{\mathcal{F}}
\newcommand{\calm}{\mathcal{M}}
\newcommand{\caln}{\mathcal{N}}
\newcommand{\calp}{\mathcal{P}}
\newcommand{\calt}{\mathcal{T}}
\newcommand{\canm}{\calm^\romc}
\newcommand{\canw}{\romw^\romc}
\newcommand{\canr}{\romr^\romc}
\newcommand{\canu}{\romu^\romc}
\newcommand{\cani}{\romi^\romc}
\newcommand{\cane}{\rome^\romc}
\newcommand{\know}{\mathrm{K}}
\newcommand{\trust}{\mathrm{T}}
\newcommand{\lid}{\doteq}
\theoremstyle{thmstyleone}%
\newtheorem{theorem}{Theorem}%  meant for continuous numbers
\theoremstyle{thmstyletwo}%
\theoremstyle{thmstylethree}%
\newtheorem{corollary}{Corollary}
\begin{document}

\title[A Logic of Knowledge and Justifications, with an Application to Computational Trust]{A Logic of Knowledge and Justifications, with an Application to Computational Trust}

%%=============================================================%%
%% GivenName	-> \fnm{Joergen W.}
%% Particle	-> \spfx{van der} -> surname prefix
%% FamilyName	-> \sur{Ploeg}
%% Suffix	-> \sfx{IV}
%% \author*[1,2]{\fnm{Joergen W.} \spfx{van der} \sur{Ploeg} 
%%  \sfx{IV}}\email{iauthor@gmail.com}
%%=============================================================%%

\author*{\fnm{Francesco A.} \sur{Genco}}\email{francesco.genco@unito.it}
%
%\author[2,3]{\fnm{Second} \sur{Author}}\email{iiauthor@gmail.com}
%\equalcont{These authors contributed equally to this work.}
%
%\author[1,2]{\fnm{Third} \sur{Author}}\email{iiiauthor@gmail.com}
%\equalcont{These authors contributed equally to this work.}

\affil{\orgdiv{Center for Logic, Language and Cognition, Department of Philosophy and Education Sciences},
\orgname{University of Turin},
{\street{Via Santo'Ottavio 20}, \city{Turin}, \postcode{10124}, \country{Italy}}}
%
%\affil[2]{\orgdiv{Department}, \orgname{Organization}, \orgaddress{\street{Street}, \city{City}, \postcode{10587}, \state{State}, \country{Country}}}
%
%\affil[3]{\orgdiv{Department}, \orgname{Organization}, \orgaddress{\street{Street}, \city{City}, \postcode{610101}, \state{State}, \country{Country}}}

%%==================================%%
%% Sample for unstructured abstract %%
%%==================================%%

\abstract{We present a logical framework that enables us to define a formal theory of computational trust in which this notion is analysed in terms of epistemic attitudes towards the possible objects of trust and in relation to existing evidence in favour of the trustworthiness of these objects. The framework is based on a quantified epistemic and justification logic featuring a non-standard handling of identities. Thus, the theory is able to account for the hyperintensional nature of computational trust. We present a proof system and a frame semantics for the logic, we prove soundness and completeness results and we introduce the syntactical machinery required to define a theory of trust.}

\keywords{Epistemic logic, justification logic, trust, computation, intensionality, hyperintensionality.}

%%\pacs[JEL Classification]{D8, H51}

%%\pacs[MSC Classification]{35A01, 65L10, 65L12, 65L20, 65L70}

\maketitle

\section{Introduction}\label{sec:intro}

I can trust someone with my life, but if they come to me unrecognisably disguised\dots Well, one might argue that I do indeed trust them, disguised or not, and that I can simply fail to realise that I do. My trust in them still holds, because they are, after all, a person that I trust, I just cannot see that they are, momentarily. One might argue, on the other hand, that we should simply admit that trust is a fleeting attitude that does not really hold invariably between individuals, but actually depends on the way one individual is presented to another one. The second line of thought would definitely lead to a linguistic analysis of trust as an intensional---possibly even hyperintensional\footnote{Hyperintensionality has been introduced by \citet{cre02} for the first time with respect to notions that are not far from that of trust. A hyperintensional relation, formally, is one that can hold between two terms $t$ and $s$ and not hold between $t$ and $s'$ even though $s$ and $s'$ are provably equivalent.}---relation. Trust, according to this analysis, would be a relation that can hold between two terms $t$ and $s$ and, at the same time, not hold between the two terms $t$ and $s'$ even though $s$ and $s'$ denote the same individual. This discrepancy would arise if $s$ and $s'$ constituted different ways of presenting their common denotation. This distinction between ways of denoting objects can be traced back to the essay {\it On Sense and Reference} by \citet{frephil} and possibly even further back to the work by \citet{an96}.\footnote{Here, the relevant distinction is the one between the {\it comprehension} and the {\it extension} of an idea.}
%If $s$ and $t$ are different descriptions of the same person, does the fact that someone trusts $s$ implies that that person also trusts $t$? In other words, if someone is willing to state that they trust $s$ but not that they trust $t$, are they simply mistakenly assuming that $s$ and $t$ are not the same but they actually trust both? Perhaps so, but one might say that we are legitimated in thinking that they indeed trust $s$ but not $t$, simply because the way $s$ presents its denotation does induce trust in the agent while the way $t$ presents it does not. 

If we consider computational trust \cite{sim20}---that is, in brief, the relation that holds between an agent and a computational entity in case the former trusts the latter to accomplish a given computational task---the issue seems less problematic, at least from a practical perspective. Indeed, an agent might trust a software they bought on an official website and not trust a pirated version of the software, even though the code of the two is exactly the same. Since a computer program can have different implementations based on the exact same code---making them mathematically indiscernible but physically different---and since it often happens that the internal workings of a software are not epistemically accessible to an agent or not understandable to them, we might very well have cases in which two instances of the same computer program\footnote{We can also talk of different ways of presenting the same implementation. Indeed, we might very well think of two distinct ways of accessing and employing the same exact implementation of the same software. For instance, different websites might redirect a user to the exact same implementation of a software.} trigger different attitudes of the same agent in terms of trust. The agent might trust one and not the other even though the two correctly implement the same algorithm, the same program and even the same code. And this is not only a conceptual point; if we wish to account for the actual behaviour of an agent, this is obviously a distinction we must make.

%As for formal analyses aiming at characterising the epistemic and subjective facets of trust and computational trust, we can mention the hyperintensional systems for trust and knowledge presented in \cite{fp21,sed21,pri22}. 

%The logical system is meant to enable an analysis of computational trust based, for instance, on the analysis of trustworthiness conducted by \citet{dp21}
A rather conspicuous amount of work has been devoted to the conceptual analysis of computational trust---see, e.g., \cite{cof07,cf10,key16,sim20,ess20,gmw20,sul20,esc21,cof21}---and to the formal analysis of notions that can be described as positive or negative forms of trust or trustworthiness---see, for instance,  \cite{arh98,dem04,zl05,sin11,pt12,pr14,pk16,prbt17,pri20,dp21,dgp22,pcd23,kp24,kra15}. 
%Moreover, a non-monotonic semantics has been introduced by \citet{kra15}. 
A discussion of this literature will be presented in Section \ref{sec:related-work}.

The aim of this work is to present a logical framework that enables us to define a formal theory of computational trust. The framework is supposed to provide an analysis of this notion in terms of epistemic attitudes towards the possible objects of trust and in relation to existing evidence in favour of the trustworthiness of these objects. A further objective of the work is to account for the hyperintensional nature of trust with respect to the ways a possible object of trust is presented. That is, we would like the logical theory to admit the possibility that an agent trusts some object when presented in a certain way but does not trust it when presented in another way, even though it is provable that the two presentations of it actually have the same denotation. Finally, we would like to keep the logical language as simple as possible and avoid the introduction of {\it ad hoc} constructs explicitly representing linguistic elements different from those that we aim at analysing.\footnote{As we will discuss later, several very expressive intensional systems have been introduced in the modal logic literature. These systems feature formal mechanisms to explicitly denote intensions. This would probably do the trick for us, but at the cost of having in the language and in the semantics two sorts of objects, one for intensions and one for denotations. Since, nevertheless, we do not aim at an analysis of the notion of intension, but we simply would like to formalise trust by a hyperintensional operator, we avoid this enrichment and complication of the language and endeavour instead in the task of achieving this by only employing the traditional linguistic apparatus of first order modal logic. For the same reason, we do not employ the semantical framework introduced by \citet{kra15} but introduce a simpler one tailored on the language employed.}

In order to define such a theory, we will introduce a quantified modal logic, acting as base framework, which combines an epistemic logic and a justification logic. As we will argue later in more depth, these two subsystems of our base logic are reasonably simple but also considerably different---in non trivial ways---with respect to other, more traditional logical systems of the same kind.   
Since the main focus of the presented work is on {\it computational} trust, we include in the language all we need to formulate statements about the properties of programs and computations, and about the attitudes of agents towards them. This is why the term language of the presented logic contains the whole language of pure $\lambda$-calculus  \cite{chu32,chu36,bar84}, a well known and extremely versatile model of computation. Thanks to the reliance on pure $\lambda$-calculus, and without extending its simple and yet expressive language, we will be able to formulate definitions enabling us to express also statements about {\it opaque probabilistic} programs and computations. From a technical perspective, some ingenuities will be employed in order to treat certain formal elements of $\lambda$-calculus as probabilistic programs. Programs of this kind, indeed, can neither be represented by using the primitive constructs of $\lambda$-calculus  nor be encoded in the core version of the formalism without properly extending it \citep[Section 11.1]{bar84}.     

%The reader which is not favourably inclined towards models of computation need not worry: the language of $\lambda$-calculus is not much different with respect to the traditional term language of classical propositional logic featuring variables constants and function symbols. The main differences, when only pure logic---and no specific theory---is concerned, are that instead of constants we use variables, and that terms themselves can be applied to other terms and there is no need to distinguish between terms and symbols that can be applied to terms. No part of this will be particularly relevant, though, before we introduce the formal theory of trust: from a purely logical perspective,  a term is a term and all that matters is whether it is or it is not identical to another term. As for the theory, we will take care, in due time, of explaining almost from scratch the features of $\lambda$-calculus that will matter with respect to our topic. 

As for the peculiar features of the epistemic operator of the presented logic, we aim at introducing a  simple and robust logical formalism for representing epistemic attitudes that are intensional with respect to individual objects. That is to say, we wish to be able to represent cases in which, even though terms $t$ and $s$ denote the same object, an agent knows that $t$ enjoys a property $P$ but does not know that $s$ does. In other words, we wish the formal system to feature agent-relative---we might even say {\it subjective}---knowledge of identity. Technically, we restrict the substitution of identicals in such a way that it cannot permute with respect to epistemic operators. Contingent identity, we might say, is then opaque with respect to the subjective epistemic perspective  of individual agents. In the semantics, each epistemic state can then feature a different theory of identity, and agents having access to different epistemic states can have considerably different knowledge concerning the identity of objects in the actual world.

%Since the long-term objective of this work is the formalisation of attitudes towards computational systems and processes, such as, for instance,  computational trust, we adopt as terms of our language the set of terms of $\lambda$-calculus. 
%
%The intended application of the logic, as already mentioned, is to formalise more complex attitudes, of only partial epistemic nature, towards computational systems and processes---or, more generally, agents---such as trust. Any reasonably rich formalisation of trust cannot simply be based on knowledge, as many analyses of this notion in different contexts clearly witness \cite{luh00}. The idea that this work aims at capturing is that of formalising trust as a combination of knowledge and of the existence of justifications for the trustworthiness of a system or process. 

As for the justification logic fragment of the presented logic, it is meant to provide a hyperintensional system expressing state of affairs about the relations between formulae and pieces of evidence that one can regard as warranting their truth. This fragment of the presented logic is very similar to well-known existing justification logic systems \cite{art95,an05,art08,fit08,ks12,fs20,af21} but presents some non negligible differences. Indeed, the fragment is a first-order justification logic, as is the logic introduced by \citet{fs20}; but, unlike the system by \citet{fs20}, the first-order justification logic embedded in our system does not feature a necessitation rule enabling us to conclude that the truth of each theorem of our system is witnessed by a justification. The absence of such a rule for justifications makes the formalised notion of trust hyperintensional. Indeed, if we had a rule enabling us to infer that $A$ is justified for each theorem $A$, then we would also have that, if an agent $a$ knows that there is a justification for $A[t/x]$ and $t=s$ is provable, then $a$ also knows that there is a justification for $A[s/x]$.\footnote{Technically, if we had a justification rule of the form\[\infer{\Rightarrow j:A}{\Rightarrow A}\]or even of the form\[\infer{ j_1:A_1 , \dots , j_n:A_n \Rightarrow j^{j_1 , \dots , j_n}:A}{j_1:A_1 , \dots , j_n:A_n\Rightarrow A }\]for some justification $j^{j_1 , \dots , j_n}$ possibly constructed  on the basis of the justifications $j_1 , \dots , j_n$ for the formulae $A_1 , \dots , A_n$, respectively;    
or any other rule of a similar kind enabling us to introduce a justification for any formula that can be derived from justified hypothesis, then we would validate the following derivability judgement: $j_1:A[t/x], j_2:t=s\vdash j^{j_1,j_2}:A[s/x]$  because of factivity of justifications and of the validity of the following judgement: $A[t/x], t=s\vdash A[s/x]$. This would, in turn, entail that, if an agent $a$ knows that there is a justification for $A[t/x]$ and $t=s$ is provable, then $a$ also know that there is a justification for $A[s/x]$. Indeed, if $t=s$ is provable then $j_2:t=s$ would be too by necessitation, and if the judgement\[j_1:A[t/x], j_2:t=s\vdash j^{j_1,j_2}:A[s/x]\] is valid, then so is the judgement\[\know _a j_1:A[t/x], \know _a j_2:t=s\vdash \know _a  j^{j_1,j_2}:A[s/x]\]because of the presence in our system of the necessitation rule\[\infer{\know _a A_1 , \dots \know _a A_n\Rightarrow \know _a A}{\know _a A_1 , \dots \know _a A_n\Rightarrow A}\]for the epistemic operator $\know $.}
Since we will not introduce a necessitation rule for justifications, and considering---as we will see later---that trust will require that the considered agent knows that there exists a justification for a suitable formula, we formalise a hyperintensional notion of trust: even though $A[t/x]$ and $A[s/x]$ are provably equivalent---that is, both $A[t/x]\vdash A[s/x] $ and $A[s/x]\vdash A[t/x]$ hold---under suitable conditions, there might exist an agent that trusts $t$ with respect to a certain task---formally specified by $A[x/x]$---but does not trust $s$ with respect to the same task. 
\footnote{Indeed, as argued, $\know _a j_1:A[t/x]$ and $\know _a j^{j_1,j_2}:A[s/x]$ might not be provably equivalent even though $s=t$ is provable. These would not be provably equivalent, for instance, when $\know _a s=t$ and $\know _a j_2:s=t$ are not among our hypotheses.}

%This feature of the system also implies, for similar reasons, that the notion of trust is hyperintensional with respect to provable identities of terms. Indeed, even if $\vdash s=t$ holds, it does not mean that there is a justification $j_2$ such that $\vdash j_2: s=t$ holds. And hence, even though $s$ and $t$ are provably identical, an agent might trust $t$ with respect to a certain task but not trust $s$ with respect to the same task. 

%Since, nevertheless, our system does not contain any necessitation-like rule for justifications, the judgement $j_1:A[t/x], j_2:t=s\vdash k:A[s/x]$ is not valid in general and we can have an agent that ignores that a .

The absence of a necessitation rule for justifications for instance of the form\[\infer{ j_1:A_1 , \dots , j_n:A_n \Rightarrow j^{j_1 , \dots , j_n}:A}{j_1:A_1 , \dots , j_n:A_n\Rightarrow A }\]can be seen as odd if we think of justifications as connected with formal proofs in general. Indeed, the intuition behind this rule might be that, if something---that is, $A$---can be deduced from the assumption that each element of a set of formulae $A_1 , \dots , A_n $ is justified, then we can conclude that a justification for what we are deriving from these formulae must exist, that is, $j^{j_1 , \dots , j_n}:A$. In this sense, a justification would be considered as any formal conclusive proof that the formula holds and, therefore, the existence of a derivation of $A$ from $j_1:A_1 , \dots , j_n:A_n$ would be {\it ipso facto} a witness of the fact that a justification of $A$ exists.
Such a direct connection between derivations and justifications bears a strong appeal in the context of justification logic---has been, for instance, enforced in the first-order justification logic introduced by \citet{fs20}---and 
closely relates certain justification logics with proofs-as-programs correspondences \cite{how80,sor06} and with the BHK interpretation \cite{kol32,hey34}, that are very intimately connected with these logics since their very inception \cite{art95,art08,af21}. Nevertheless, another possible and very well established reading of justification logics tends to interpret justifications as non-conclusive and possibly admissible pieces of evidence in favour of a formula \cite{af21}. This second view on the nature of justifications is certainly better suited to understand the present work. Obviously, by stating this, we do not mean to make any general claim on the nature of justifications as a whole. We just mean that the particular kind of justifications that we are formalising and that are effective with respect to the formalisation of the notion of trust we are considering are, generally, closer in nature to pieces of evidence---just as those that would be considered during a trial in a court of law, as it were---than to formal mathematical proofs. Let me also remark that in the previous sentence we specified ``in general'' since, a mathematical proof, if pertinent, would make a brilliant piece of evidence also in a court of law, but only if available and understandable; just as a mathematical proof, if pertinent and available, would constitute a brilliant incentive with respect to the trust an agent might or might not have towards another, possibly computational, agent or system. Nevertheless, in both cases, the mere existence of a mathematical proof might not realistically produce the effect it might have in principle on an agent's epistemic state.

It is also relevant to remark that the idea of exploiting the expressive power of justification logics that do not validate any sort of necessitation rule is not new to the literature. Two justification logics of this kind have been  introduced by \citet{rs21}\footnote{The justification logics by \citet{rs21} have been introduced in order to act as explicit counterparts of non-normal deontic logics---a modal logic is {\it non-normal} if the necessitation rule is not admissible in it. Explicit counterparts of deontic logics have the advantage that well-known deontic paradoxes can be avoided thanks to the additional discriminatory power granted by the use of possibly distinct justification terms rather than different occurrences of the same modal operator.}. While these systems are hyperintensional, they are not first-order, and hence cannot be directly used for the purposes of the present work. First-order non-normal modal logics do exist in the literature and are, indeed, hyperintensional---see, for instance, \cite{ap06}---but they involve non-negligible complications as far as the definition of a semantics is concerned. Indeed, the standard way of defining a semantics for non-normal modal logics is by employing  neighbourhood models. While it is true that these models would enable us to invalidate several logical principles governing the interactions between the modal operators and the logical connectives---which constitutes a great gain in expressive power and can be very useful in some cases---for achieving the objective of the present work it is enough to define a system in which {\it identity} is handled in a hyperintensional way. Hence, we opt for a first-order justification logic, which admits a much simpler frame semantics.

%These models have the benefit---as far as expressivity is concerned---of invalidating several logical principles governing the interactions between the modal operators and the logical connectives, this is not really required in order to achieve the objective of the present work, which is simply a hyperintensional handling of identity. 

%See \cite{mon69,mon70,fit06} on intensional logics with a first-order fragment.

Even thought the introduced system is not an {\it intensional logic} in the very specific sense that this expression has acquired in the modal logic literature \cite{mon69,mon70,fit06,gal16,fm23}, it is a rather strongly intensional system if we compare it with  more traditional modal logics. Indeed, traditional modal logics---{\it normal} ones, for instance\footnote{By a {\it normal} modal logic, as usual, we mean a logic in which the necessitation rule is valid and the axiom K is a theorem.}---can be called {\it intensional} in the sense that the modal operators are semantically defined on the basis of the truth of their subformulae at certain {\it possible worlds} or {\it states}, and not simply in a truth-functional manner, or {\it extensionally} \cite[Chapter V]{car88}. Nevertheless, in these systems, modal contexts are often not opaque with respect to identity. Hence, if we consider two formulae $A$ and $B$, the fact that $A\leftrightarrow B$ is true at a state of a model does not imply that also $\square A\leftrightarrow \square B$ is true at that state. This points at the fact that, under the modal operator $\square$, formulae are considered intensionally: it does not matter what is their truth value {\it at the present state}---their {\it extension}, some might say---but what is their truth value in all the relevant states---that is, their {\it intension}, in some acceptations of the word. Thus: even though $A$ and $B$ have the same truth value at the present state (that is, $A\leftrightarrow B$ is true), they might have different truth values at some of the relevant states (that is, $\neg (\square A\leftrightarrow \square B)$ might be true).  But if we consider, instead, two terms $t$ and $s$, the fact that $t=s$ holds does imply, in most of these systems, that all properties that hold necessarily for $t$ also hold necessarily for $s$ (that is, for any formula $A$, $\square A[t/x]\leftrightarrow \square A[s/x]$).\footnote{For instance, if $\square P(t)$ is true at a state $w$ and $t=s$ is true at $w$ as well, then also $\square P(s) $. This is due to the fact that the {\it necessity of identity} principle obtains: $t=s\rightarrow \square t=s$ is a theorem of the logic---see, for instance, the systems presented by \citet[Lemma 2.1]{cor02}, \citet[Section 3.3]{fit04} and \citet[Section 11.4]{fm23}. Technically, the connection is quite a direct one: if $\square P(t)$ and $t=s$ are true at $w$, then by necessity of identity also $\square t=s$ will. Thus, for any state $v$ which is accessible from $w$, we will have that both $P(t)$ and $s=t$ are true at $v$. This clearly implies that $P(s)$ is true at $v$. But since this means that $P(s)$ is verified by  all successors of $w$, we can conclude that $\square P(s)$ is true at $w$.} If we define our knowledge operator as a $\square$-like operator of this kind, there is no chance of representing a situation in which an agent ignores that two terms are identical in case they are. 

%The quantified modal logic in \cite[11.4]{fm23} verifies the necessity of identity and difference principles: $\forall x.\forall y.x=y\rightarrow \square x=y$ and $\forall x.\forall y.\neg x=y\rightarrow \neg \square x=y$.
%
%
%The system in \cite[3.3]{fit04} validates the formulae $x=y\rightarrow \square x=y$ and $x\neq y\rightarrow \square x\neq y$.

More subtle mechanisms that make it possible to treat terms intensionally exist \cite{mon69,mon70,fit06,gal16,fm23}, but they are often much more expressive---and complicated---than what is featured in our logic. Intensional logics  \cite{mon69,mon70,fit06,gal16,fm23}, indeed, often feature, at the level of the logical language, mechanisms for expressing judgements about intensions and explicit terms denoting intensions---something equivalent can be said for explicitly hyperintensional logics or systems \cite{fp21,sed21,pri22}. This is absolutely outside the scope of the present work, which simply aims at providing a system flexible enough to encode predicates and sentential operators that can behave differently on terms  with different intensions even though they have the same extension. In other words, we precisely aim here at a system in which it is possible to have $\know _a  P(t)$ and $t=s$ true at a state $w$ and $\know _a  P(s) $ false in case $\know _a  t=s $ does not hold; and in which, rather obviously, $\know _a  P(t)$ and $\know _a t=s$ together imply $\know _a  P(s) $. Nothing more is required here as far as intensionality is concerned. And this is exactly, as we will show, what our system delivers.

Notice, moreover, that the presented system does not have to rely on a semantics in which objects have counterparts in other states rather than just occurring at more than one state and does not have to rely on non-rigid designators in order to invalidate the necessity of identity principle---\citet[Section 4]{fit06}, for instance, discusses this possibility---and thus to model a properly intensional notion of knowledge. Indeed, the semantics presented here simply admits the possibility of interpreting identity differently at different states. The formalism is then rather simple but the flexibility obtained is on a par with that obtained by more complex semantics. 

% would simply constitute a limit case in the space of all relevant pieces of evidence: ---a rather lucky---or unlucky, depending on the case, depending on the persepctive) for that matters---   
 
%And this connection is not at all arbitrary since $\lambda$-terms can be used as witnesses of the truth of formulae in the context of formalisms---that is, .
%
%Just like, in $\lambda$-calculus \cite{chu32,chu36,bar84}, we consider program variables as programs that we suppose to have but of which we do not really know the structure. 

\paragraph{Structure of the Article} The rest of the article is structured as follows. In Section \ref{sec:logic}, we introduce the logic in which we will formalise our theory of computational trust. In particular, Section \ref{sec:language} we introduce and discuss the language of the logic, in Section \ref{sec:calculus} a natural deduction calculus characterising it, and in Section \ref{sec:semantics} its frame semantics. In Section \ref{sec:hyper}, we formally show that our epistemic operator is intensional with respect to identity and that the justification logic fragment of the system is hyperintensional.
%In Section \ref{sec:express}, we discuss some relevant issues related to the expressive power of the logic and to its peculiar features. 
 In Section \ref{sec:sound-complete}, we prove that an exact correspondence between the calculus and the semantics exists---the proof of soundness is in Section \ref{sec:soundness} and that of completeness in Section \ref{sec:completeness}. In Section \ref{sec:theory}, after the discussion in Section \ref{sec:related-work} of the existing work related to this part of the article, we present our formal theory of computational trust based on the logic just introduced.
% in Section \ref{sec:conclusion}.

\section{An Intensional Epistemic Logic}
\label{sec:logic}

Let us begin by fixing the language of the intensional epistemic  logic that we will use as basis for our formal theory of computational trust.

\subsection{The Language}
\label{sec:language}

%{ Now, the structure of justifications is completely irrelevant. If there is a justification for $A$, then $j:A$ is true for any $j$. This is fine for the intended application, but very odd, I am sure, for anyone who knows justification logic. Do we need to adopt the whole semantics with a justification admissibility function, or can we simplify it a bit? We can also just introduce a simple box-like operator for $A$ is certified. Big problems with sets of formulae for which a justification is admissible and substitution. Terrible things, very poorly thought out for first-order logic.
%
%Ultimately, we just need a reasonable way to associate a justification to its formula in such a way to be able to meaningfully relate the fact  that an agent knows that a justification exists and the fact that the agent knows that {\it a particular formula} is true. } 

The language is rather standard, exception made for the fragment defining the language of terms. Indeed, it contains the whole language of $\lambda$-calculus. The reader which is not favourably inclined towards models of computation need not worry: the language of $\lambda$-calculus is not much different from  the traditional term language of classical propositional logic featuring variables constants and function symbols. The main differences, when only pure logic---and no specific theory---is concerned, are that instead of constants we use variables, and that terms themselves can be applied to other terms and there is no need to distinguish between terms and symbols that can be applied to terms. No part of this will be particularly relevant, though, before we introduce the formal theory of trust: from a purely logical perspective,  a term is a term and all that matters is whether it is or it is not identical to another term. As for the theory, we will take care, in due time, of explaining almost from scratch the features of $\lambda$-calculus that will matter with respect to our topic. 

Here is the formal grammar that defines the language of our logic:
\begin{align*}\mathrm{Variables~} x,y,z\dots   \;\; ::=  \;\; & x_0\;\; \mid\;\;x_1\;\; \mid\;\;x_2\;\;\mid\;\;\dots 
\\
\mathrm{Terms~} t,j,s,u,v,k\dots   \;\; ::=  \;\; & x \;\;\mid\;\; ts  \;\;\mid\;\; \lambda x.t   \;\;\mid\;\; !t
\\
\mathrm{Agent~names~} a,b,c\dots   \;\; ::=  \;\; & a_0 \;\;\mid\;\; a_1  \;\;\mid\;\;  a_2   \;\;\mid\;\; \dots
\\\mathrm{Formulae~}A,B,C\dots   \;\; ::=  \;\; & \bot  \;\;\mid \;\;t\lid s\;\;\mid\;\; P(t_1,\dots ,t_n)\;\;\mid \;\;A\rightarrow B  \;\;\mid\;\; A\wedge B\\
&  \forall x. A  \;\;\mid\;\; \know _a A   \;\;\mid\;\;  j:A
\end{align*}where $P$ is any $n$-ary predicate symbol and $a$ any agent name. 

Now on, we will use $=$ for syntactical identity and $\lid$ for the identity predicate in the language. 

%We use the elements of a designated set of variables as agent names. We denote its elements  by the metavariables $a,b \dots $  

%Our term language is quite simply pure $\lambda$-calculus, which is due to the fact that the logic is primarily meant to formalise  attitudes of agents towards programs and computations. 
The selection of sentential operators extends a rather typical, functionally complete, language for first-order classical logic with identity by adding to it an epistemic operator $\know $ indexed by agent symbols $a,b,c\dots $ and a justification operator $:$ that enables us to construct formulae of the form $j:A$ where $j$ is a term and $A$ is a formula. Intuitively, $\know _a A$ is meant to express that agent $a$ knows that $A$ is true. That is, $a$ believes that $A$ is true and $A$ is indeed true. The operator is formally very similar to a knowledge operator but does not require that $a$ has any sort of reason or justification to believe $A$. This is simply due to the fact that the phenomena we wish to analyse---those related to computational trust---would not be influenced in an interesting way by additional requirements of this kind. Justification formulae $j:A$, on the other hand, are included in the language in order to express that a piece of evidence $j$ of some kind exists for the truth of $A$. As mentioned, this kind of evidence will not be used to analyse the epistemic attitude expressed by formulae of the form $\know_a A$ but, as we will see later, to define a notion of trust. The latter will be indeed analysed in terms of knowledge of the existence of a justification.

%The notion of justification that underlies the logic will not be specified much further, a few remarks about this appear in Section \ref{sec:intro}. This due to the fact that different possible interpretations seem admissible and we do not think that any benefit  would come, for the moment, from a restriction in this sense. 

We do not introduce primitive sentential operators for the negation and existential quantification but simply define $\neg A$ as $A\rightarrow \bot$ and $\exists x . A$ as $\neg \forall x. \neg A$. Moreover, we adopt the usual conventions about parentheses; we assume that $n$-ary sentential operators bind more strictly than $n+1$-ary sentential operators; and, finally, we assume that $\rightarrow $ associates to the right, that is, for instance,  $A\rightarrow B\rightarrow C$ abbreviates $A\rightarrow (B\rightarrow C)$.

As for metavariables, we will use the capital Greek letters $\Gamma , \Delta , \Theta \dots $ for multisets of formulae and the capital Greek letters $\Sigma  , \Xi \dots $ for sets of formulae. Moreover, given a multiset $\Gamma $ or set $\Sigma$ of formulae, we will use the notation $\know _a \Gamma $  and $\know _a \Sigma$ for the multiset $\{\know _a G \mid G\in \Gamma\}$ and, respectively, for the set $\{\know _a G \mid G\in \Sigma\}$. Given a multiset $\Gamma$ and set $\Delta$, we indicate by $\Gamma \subseteq \Delta $ that all elements of $\Gamma$ are also elements of $\Delta$.
%Analogously, the notation $j: \Gamma $ will indicate the multiset or, respectively,  set $\{j:G \mid G\in \Gamma\}$.

Let us now define the two substitutions that we will use in the calculus. The first one, that is $[t/x]$, will be used to obtain particular instantiations of quantified formulae. The second one, that is $(t/x)$, will be used to enforce Leibniz's law in the calculus: when $t\lid s$ and $A(t/x)$ are known to follow from certain assumptions, then $A(s/x)$ follows from them too. The distinction between the two kinds of substitution is due to the fact that  epistemic and justification contexts are supposed to be opaque with respect to identity in our logic.
\begin{definition}[Substitutions]\label{def:substitution}
Quantifier instantiating substitution $[u/v]$ is inductively defined as follows.
\begin{itemize}
\item $t[u/v]=t$ if $t\neq v$ 
\item $t[u/v]=u$ if $t=v$   
%\item $x[u/v]=x$ if $x\neq v$ 
%\item $x[u/v]=u$ if $x=v$   
%\item $ts[u/v]=t[u/v]s[u/v]$ if $ts\neq v$ 
%\item $ts[u/v]=u$ if $ts=v$  
%\item $(\lambda x. t)[u/v]=\lambda x. (t[u/v])$ if $\lambda x. t\neq v$  and $x\neq v$
%\item $(\lambda x. t)[u/v]=\lambda x. t$ if $\lambda x. t\neq v$  and $x=v$
%\item $(\lambda x.t )[u/v]=u$ if $\lambda x.t=v$
\item $\bot[u/v]=\bot$ 
\item $P(t_1,\dots ,t_n)[u/v]=P(t_1[u/v],\dots ,t_n[u/v])$ 
\item $(t_1\lid t_2)[u/v]=t_1[u/v]\lid t_2[u/v]$
\item $(A\rightarrow B)[u/v]=A[u/v]\rightarrow B [u/v]$
\item $(A\wedge B  )[u/v]=A[u/v]\wedge B [u/v]$
\item $(\forall y. A)[u/v]=\forall y. A$  if $v=y$
\item   $(\forall y. A)[u/v]=\forall y.( A[u/v])$ if $v\neq y\neq u$
\item $(\forall y. A)[u/v]=\forall z. (A[z/y][u/v])$ if $v\neq y=u$, for $z$ that does not occur in $\forall y. A$, in $u$ and in $v$
%\item $(j:A)[u/v]=t[u/v]:A[u/v]$
\item $(\know _a A)[u/v]=\know _a (A[u/v])$
\item $(j: A)[u/v]=j: (A[u/v])$
%\item $(Tw(t):A)[u/v]=Tw(t[u/v]):A [u/v]$
%\item $(Tr_t(s):A)[u/v]=Tr_t(s):A$
\end{itemize}
Substitution of identicals $(u/v)$ is inductively defined as follows.
\begin{itemize}
\item $t (u/v)=t$ if $t\neq v$ 
\item $t (u/v)=u$ if $t=v$   
%\item $x (u/v)=x$ if $x\neq v$ 
%\item $x (u/v)=u$ if $x=v$   
%\item $ts (u/v)=t (u/v)s (u/v)$ if $ts\neq v$ 
%\item $ts (u/v)=u$ if $ts=v$  
%\item $(\lambda x. t) (u/v)=\lambda x. (t (u/v))$ if $\lambda x. t\neq v$  and $x\neq v$
%\item $(\lambda x. t) (u/v)=\lambda x. t$ if $\lambda x. t\neq v$  and $x=v$
%\item $(\lambda x.t ) (u/v)=u$ if $\lambda x.t=v$
\item $\bot (u/v)=\bot$ 
\item $P(t_1,\dots ,t_n) (u/v)=P(t_1 (u/v),\dots ,t_n (u/v))$ 
\item $(t_1\lid t_2)(u/v)=t_1(u/v)\lid t_2(u/v)$
\item $(A\rightarrow B ) (u/v)=A (u/v)\rightarrow B  (u/v)$
\item $(A\wedge B  ) (u/v)=A (u/v)\wedge B  (u/v)$
\item $(\forall y. A) (u/v)=\forall y. A$  if $v=y$
\item   $(\forall y. A) (u/v)=\forall y.( A (u/v))$ if $v\neq y\neq u$
\item $(\forall y. A) (u/v)=\forall z. (A (z/y) (u/v))$ if $v\neq y=u$, for $z$ that does not occur in $\forall y. A$, in $u$ and in $v$
%\item $(j:A) (u/v)=t (u/v):A (u/v)$
\item $(\know _a A) (u/v)=\know _a A$
\item $(j: A) (u/v)=j:A$
%\item $(Tw(t):A) (u/v)=Tw(t (u/v)):A  (u/v)$
%\item $(Tr_t(s):A) (u/v)=Tr_t(s):A$
\end{itemize}
\end{definition}
As just mentioned, the substitution $(t/x)$ will be used for substituting identicals inside formulae. The reason why this substitution does not permute inside epistemic operators is that, as discussed, we wish to model possible ignorance of agents with respect to the identity of objects. The reason why this substitution does not permute inside a justification operator is that one description of the object---or, more generally, way of presenting it---might enable us to show, or even prove, that the object has a certain property while another way of presenting it might not.

Finally, let us fix some rather standard definitions about term occurrences and free variables. 
\begin{definition}[Free variables of a formula]\label{def:free-var}
A term $t$ occurs in a formula if the formula contains 
\begin{itemize}
\item a subformula of the form $P(t_1 , \dots , t_n)$ and $t=t_i$ for $i\in \{1, \dots , n\}$, or 
\item a subformula of the form $t_1\lid t_2$ and $t=t_i$ for $i\in \{1,2\}$.
\end{itemize}

The outermost occurrence of the quantifier $\forall x $ in the formula $\forall x . A $ has as range the displayed occurrence of the formula $A$.

A variable $x$ occurs free in a formula $A$ if the term $x$ occurs in $A$ and some of its occurrences are not inside the range of a quantifier of the form  $\forall x$. 
\end{definition}
Notice, in particular, that the instance of the term $t$ displayed in $t:A$ does not count as a term occurrence. It is treated just as the displayed $a$ in $\know _a A $.

\begin{definition}[Free variables of a term]\label{def:free-var-term}
An occurrence of the variable $x$ in a term $t$ is any instance of the symbol $x$ that appears in $t$ not immediately to the right of the symbol $\lambda$. 

The outermost occurrence of the binder $\lambda x $ in the term $\lambda  x . t $ has as range the displayed instance of the term $t$.

A variable $x$ occurs free in a term $t$ if $x$ occurs in $t$ and some of its occurrences are not inside the range of a binder of the form  $\lambda x$.
\end{definition}

\subsection{The Calculus}
\label{sec:calculus}

%NEW IDEA FOR INTENSIONAL/denotationAL IDENTITIES:
%We have necessitation for awareness (and trust? perhaps one can always go through awareness instead of having logical rules also for trust) but only if the derivation of the premiss does not depend on any local assumption. Only logical rules and global assumptions can be used to derive the premiss of necessitation.

%We introduce a zero-premiss rule to introduce global hypotheses that count as hypotheses but do not block necessitation. Or something of the sort.

Let us now present a calculus for the logic. We deviate from the standard practice---well-entrenched among modal logicians---of providing an axiomatic system for the logic since a natural deduction calculus induces much more natural and useful notions of derivation and proof without really hindering the soundness and completeness arguments. In any case, if the reader cannot absolutely do without an axiomatic system, we assure them that a system of this kind can be very easily read---in a standard way---from the presented rules.

%If we wish to model the fact that someone might not be aware that two expressions are intensionally equivalent, we need two equalities in the language. I will not do this though.
We present the rules of the calculus and briefly discuss them. The logical rules, shown in Figure \ref{fig:con-quant-rules}, for connectives and quantifiers are quite standard.

\begin{figure}[h]\centering
\[\infer{A\Rightarrow A}{}\quad \infer{\Gamma  \Rightarrow A\rightarrow B}{\Gamma , A \Rightarrow B}\quad\infer{\Gamma  , \Delta \Rightarrow B}{\Gamma \Rightarrow A\rightarrow  B & \Delta \Rightarrow A}\quad  \infer{\Gamma \Rightarrow P}{\Gamma \Rightarrow \bot}\quad \infer{\Gamma \Rightarrow A}{\Gamma  \Rightarrow \neg\neg A}\]
\[ \infer{\Gamma , \Delta \Rightarrow A_1\wedge A_2}{\Gamma \Rightarrow A_1& \Delta \Rightarrow A_2}\quad \infer[i\in \{1,2\}]{\Gamma \Rightarrow A_i}{ \Gamma \Rightarrow A_1\wedge A_2}\quad \infer{\Xi \Rightarrow \forall x . A }{\Xi \Rightarrow A[y/x]}\quad \infer{\Gamma \Rightarrow A[t/x]}{\Gamma \Rightarrow \forall x . A}\]where $P$ is any atomic formula and $y$ does not occur free in $\Xi$

%, and 
%
%the free variable occurrences different from $y$ in $A[y/x]$ are free in $\forall x. A$
\caption{Connective and quantifier rules}\label{fig:con-quant-rules}
\end{figure}

The rules for identity are shown in Figure \ref{fig:id-rules} and differ from the usual ones only insofar as the substitution used does not permute inside epistemic and justification operators.

\begin{figure}[h]\centering
\[\infer{ \Rightarrow t\lid t}{}\qquad\infer{\Gamma \Rightarrow s\lid t}{\Gamma \Rightarrow t\lid s}\qquad \infer{\Gamma , \Delta \Rightarrow u\lid v}{\Gamma \Rightarrow u\lid t&  \Delta \Rightarrow t\lid v}\qquad \infer{\Gamma , \Delta \Rightarrow A(s/x)}{\Gamma \Rightarrow t\lid s& \Delta \Rightarrow A(t/x)} \]
\caption{Identity rules}
\label{fig:id-rules}
\end{figure}
This kind of substitution implements---minimally, one could say---a form of intensionality. Indeed, even though $s\lid t$ is assumed to be true, $\know _a A(t/x)$ is not equivalent to $\know _a A(s/x)$ unless also $\know _a s\lid t$ is assumed to be true.
%
%Notice that, if we implement a theory of program equivalence through identity, program identity  does not become universally known. Unless, obviously, we actually introduce 0-premiss rules corresponding to the axioms of this theory.  

The epistemic rules are in Figure \ref{fig:epi-rules} and the justification rules in Figure \ref{fig:just-rules}.
\begin{figure}[h]\centering
\[\infer{\know _a A_1 , \dots \know _a A_n\Rightarrow \know _a A}{\know _a A_1 , \dots \know _a A_n\Rightarrow A}
\qquad
\infer{\Gamma ,\Delta \Rightarrow \know _a B}{\Gamma \Rightarrow \know _a(A\rightarrow B) &\Delta  \Rightarrow \know _a A}
\]\[
\infer{\Gamma \Rightarrow A}{\Gamma \Rightarrow \know _a A}
\qquad
\infer{\Gamma \Rightarrow \know _a \neg \know _a  A }{\Gamma \Rightarrow \neg \know _a A }
%\qquad { CONVERSE BARCAN}
%\infer{\Gamma \Rightarrow \forall x. \know _a A}{\Gamma \Rightarrow \know _a \forall x. A}
\]
%\[ { \infer{\know _a : \know _a :  A }{\know _a :  A } \qquad \infer{\know _a : \neg \know _a  \neg A }{A }}\]
\caption{Rules for the epistemic operator}
\label{fig:epi-rules}
\end{figure}
The epistemic rules exactly correspond to an axiomatic system for the quantified modal logic S5.
% with the {\it converse Barcan formula} \cite{bar46} as an axiom.  
The first rule corresponds to the necessitation rule in presence of axioms T and 4 \citep[Chapter VI, §1]{pra06}, the second rule corresponds to axiom , the third to axiom T, the fourth to axiom 5.
%, and the fifth to the converse Barcan formula. { Technically, it is possible to derive the rule for the converse Barcan formula by a few applications of the rule for axiom T, the rules for $\forall$ and the necessitation rule.}

%The epistemic rules exactly correspond to an axiomatic system for the quantified modal logic S5 with the {\it converse Barcan formula} $(\know _a \forall x. A)\rightarrow( \forall x. \know _a A) $ as an axiom.  The first rule corresponds to the necessitation rule in presence of axioms T ($\know _a A\rightarrow  A$) and 4 ($\know _a A\rightarrow \know _a \know _a A$)  \citep[Chapter VI, §1]{pra06}; the second formula corresponds to axiom K ($\know _a (A\rightarrow B) \rightarrow \know _a A \rightarrow  \know _a B$); the third to axiom 5 ($\diamond  A \rightarrow  \square \diamond A$ in the usual notation of alethic modal logic); and the fourth to the converse Barcan formula.

%(We can make this actually hyperintensional w.r.t. terms by also requiring that no identity rule has been used in the proof of $C$. Then, also provably identical terms can be distinguished by some agent. If we add the requirement, in the semantics, the equivalence relation relative to an agent in the model should not be forced anymore to be a union of intensions.)

\begin{figure}[h]\centering
%%%%NecJustifications
%\[ 
%\infer{ j_1:A_1 , \dots , j_n:A_n \Rightarrow x^{j_1 , \dots , j_n}:A}{j_1:A_1 , \dots , j_n:A_n\Rightarrow A }\]where $x^{j_1 , \dots , j_n}$ is a designated  variable. 

\[ \infer{\Gamma , \Delta\Rightarrow jk:B}{\Gamma \Rightarrow j: (A\rightarrow B) &\Delta  \Rightarrow k: A }
\qquad 
\infer{\Gamma\Rightarrow A }{\Gamma\Rightarrow  j:A}
\qquad
\infer{\Gamma \Rightarrow {}!j:(j:A)}{\Gamma \Rightarrow j:A } 
\]
% =-closure justifications
\[ \infer{\Gamma , \Delta \Rightarrow j:A(s/x)}{\Gamma \Rightarrow k:t\lid s & \Delta  \Rightarrow j:A(t/x)  } \qquad \infer{\Gamma , \Delta \Rightarrow j:A(s/x)}{\Gamma \Rightarrow k:s\lid t & \Delta  \Rightarrow j:A(t/x)  } \]
\caption{Justification rules}
\label{fig:just-rules}
\end{figure}

The first three rules for justifications are rather standard and  correspond, respectively, to axiom K, axiom T and axiom 4. The only atypical rules are those concerning identity and they are meant to enable the substitution of terms involved in justified identities inside a justified formula $A$.

In general, for the sake of simplicity, we assume that
\begin{itemize}
\item the occurrences of formulae to the left of $\Rightarrow $ can be duplicated; 
\item if $n$ copies of the same formula occur to the left of $\Rightarrow $, then we can erase $n-1$ copies of it; and
\item any finite number of formula occurrences can be added to the left of $\Rightarrow $ at any moment in a derivation.
\end{itemize}
Since the proof-theoretical properties of the calculus are not a focus of this work, there is no need to  explicitly formalise structural rules that enable us to execute these operations.

We can finally define a notion of derivability.
\begin{definition}[Derivability]\label{def:der}
Rules are schematic and derivations are inductively defined as usual. Any instance of a zero-premiss rule is a derivation, the application of a rule to the conclusions of derivations is a derivation. For any formula $A$ and possibly infinite set of formulae $\Sigma$, $A$ is derivable from $\Sigma$ (formally, $\Sigma \vdash A$) if, and only if, there exists a finite multiset $\Gamma$ such that $\Gamma \subseteq \Sigma$ and there exists a derivation with conclusion $\Gamma \Rightarrow A $.
\end{definition}

\subsection{The Semantics}
\label{sec:semantics}

We now introduce a semantics for the logic. It is a rather standard combination of a frame semantics for epistemic logic and one for justification logic. We employ constant-domain first-order models, a family of accessibility relations $\romr_a$ for the epistemic operator $\know _a $ and an accessibility relation $\romr_\gamma$ for justifications. Identity is interpreted just as a predicate and its interpretation can vary from state to state. This will enable us to obtain epistemic contexts which are opaque with respect to identity. Justifications are handled by a family of functions $\rome _w$ that associate to any term $t$ the set of  formulae for which $t$ is an admissible justification at state $w$.    

\begin{definition}[Frame]\label{def:frame}
A Frame $\calf$ is a triple  $( \romw , \romr, \romu)$ where
\begin{itemize}
\item $\romw$ is a non-empty set, the elements of which we will call \emph{states}.
\item $\romr : \{a,b\dots \} \Rightarrow \calp(\romw\times \romw) $ is a function mapping each agent name into a binary relation between states which is reflexive and Euclidean and the symbol $\gamma $ into a binary relation between states which is reflexive and transitive.

Hence, for any agent name $a$, we have that 
\begin{itemize}
\item $\romr_a\subseteq \romw\times \romw$,
\item for any $w\in \romw$, $w\romr_aw$,
\item for any $w,u,v\in \romw$, if $w\romr_au$ and $w\romr_av$, then $u\romr_av$.
\end{itemize}

And, moreover, we have that
\begin{itemize}
\item $\romr_\gamma\subseteq \romw\times \romw$,
\item for any $w\in \romw$, $w\romr_\gamma w$,
\item for any $w,v,u\in \romw$, $w\romr_\gamma v$ and $v\romr_\gamma u$ imply that $w\romr_\gamma u$.
\end{itemize}

\item $\romu$ is a non-empty set, the domain on which quantifiers range.
%\item $\romd : \romw \Rightarrow \calp (\romu ) $ is a function mapping each state in $\romw$ to a subset of $\romu$ (hence, for any $w\in \romw$, we have that $\romd_w\subseteq \romu$) such that, if for some agent name $a$, $w\romr_a v$, then $\romd _w \subseteq \romd_v$.
\end{itemize}
\end{definition}

Intuitively, $\romw $ is the set of all possible epistemic states; $\romr_a$, for any agent $a$, associates to each epistemic state all epistemic states that are consistent with it according to agent $a$; and $\romu$ is the set of all
%---existing and non-existing---
objects that populate the epistemic states in $\romw$.

%; and $\romd$ is the function that associates to each state $w$ in $\romw $ the set of objects that populate $w$, that is, the domain of $w$.

\begin{definition}[Model]\label{def:model}
A model $\calm$ is a quintuple  $( \romw , \romr, \romu ,\romi, \rome)$ where $( \romw , \romr, \romu  )$ is a frame, $\romi$ is a function such that
\begin{itemize}
\item for any state $w\in \romw$ and term $t$,  $\romi _w (t) \in \romu $;

\item for any two states $w,v\in \romw$ and term $t$,  $\romi _w (t)=\romi _v (t) $
%\in \romu $ 

(we will sometimes abbreviate $\romi _w (t)$ by $t^w$);

\item for any state $w\in \romw$ and $e\in \romu$, there exists a term $t$ such that $\romi _w (t) =e $;

\item for any state $w\in \romw$, $\romi _w (\lid ) \subseteq \romu ^2$ such that
\begin{itemize}
\item for any $e\in \romu$, $(e,e)\in \romi _w (\lid )$,

\item for any $d,e\in \romu$, if $(d,e)\in \romi _w (\lid )$ then  $(e,d)\in \romi _w (\lid )$,  

%\item for any $c,d,e\in \romu$, if $(c,d), (d,e)\in \romi _w (\lid )$ then $(c,e)\in \romi _w (\lid ) $,

\item for any $c,d,e\in \romu$, if $(c,d), (d,e)\in \romi _w (\lid )$ then $(c,e)\in \romi _w (\lid ) $;
\end{itemize}

\item for any state $w\in \romw$ and predicate symbol $P$ of arity $n$,  $\romi _w (P) \subseteq \romu ^n$ such that, if $(e_1 , \dots  , e_n)\in \romi _w (P)$ and $(e_i,d)\in \romi_w (\lid )$ for some $1\leq i \leq n$, then $(e_1' , \dots , e_n')\in \romi _w (P)$ where, for any $1\leq j\leq n$,  $e_j '=e_j$ in case $e_j\neq e_i$ and $e_j '=d$ in case $e_j= e_i$;

(Notice that the replacement is uniform, that is, we replace by $d$ all $e_j$s such that $e_j= e_i$ and not only the $i$th element of $e_1 , \dots  , e_n$.)
\end{itemize}
and $\rome$ is a function such that, for any state $w\in \romw$, the following hold:
\begin{itemize}

\item for any state $v\in \romw$ and term $t$, if $w\romr _\gamma v$, then $\rome_w(t)\subseteq \rome_v(t)$;

%%%%NecJustifications
%\item  { if $j_1:A_1 ,  \dots , j_n:A_n \vdash A$ holds, then $A\in \rome_w(x^{j_1 , \dots , j_n})$ for the designated variable $x^{j_1 , \dots , j_n}$,}

%\item for any two terms $t,s$, if $(\romi_w(t),\romi_w(s))\in \romi_w(\lid)$, then $ \rome_w(t)=\rome_w(s)$, 

%\item for any term $t$, if $(\romi_w(t),\romi_w(s))\in \romi_w(\lid)$ and $ A[t/x]\in \rome_w(t)$, then $ A[s/x]\in \rome_w(t)$,

\item for any term $j$, if $A\in \rome_w(j)$, then $ j:A \in\rome_w(!j)$;

\item for any two terms $j,k$, if $ A\rightarrow B \in \rome_w(j)$ and $ A \in \rome_w(k)$, then $ B\in \rome_w(jk)$;

% =-closure justifications
\item for any term $j$, if $A(t/x)\in \rome_w(j)$ and there exists a $k$ such that $
t\lid s \in \rome_w(k)$ or $
s\lid t \in \rome_w(k)$, then $A(s/x)\in \rome_w(j)$.
\end{itemize}

For any model $\calm=( \romw , \romr, \romu , \romi , \rome)$, we call $\romi$ the interpretation function of $\calm$ and $\romi _w $ the interpretation function at the state $w$ of $\calm$. Moreover, we say that $\calm$ is \emph{based on the frame} $ \calf=( \romw , \romr, \romu )$.
\end{definition}

Notice that the conditions on $\rome$ are consistent since one can always, for any term $t$ and $w\in \romw$, take the set of all formulae as $\rome_w(t)$.

We now introduce the notion of {\it assignment}. This is required since, as we will see in Definition \ref{def:forcing}, while checking whether a formula is true or not, the considered interpretation function can be modified in different ways due to clauses for $\forall$ and $\know_a$. What we obtain by thus modifying an interpretation function is not, strictly speaking, an interpretation function related to a state---and thus expressible as $\romi _w$ for some state $w$---but it is still a function that we use to interpret terms and predicates. We call objects of this kind {\it assignments}.

%%%NecJustifications
%The second condition on $\rome$  corresponds to the idea that we can always suppose to have a justification for any formula.
%Formally, this is required to suitably handle in the semantics the version of the necessitation rule for justifications that we employ here:\[\infer{ j_1:A_1 , \dots , j_n:A_n \Rightarrow x^{j_1 , \dots , j_n}:A}{j_1:A_1 , \dots , j_n:A_n\Rightarrow A }
%\]The intuition behind this rule is that, if something---that is, $A$---can be deduced from the assumption that each element of a set of formulae $A_1 , \dots , A_n $ is justified, then we can conclude that a justification for what we are deriving from these formulae must exist, that is, $x^{j_1 , \dots , j_n}:A$. In this sense, justification variables should be considered as hypothetical justifications that we assume to have but of which we do not really know the content. Just like, in $\lambda$-calculus \cite{chu32,chu36,bar84}, we consider program variables as programs that we suppose to have but of which we do not really know the structure. And this connection is not at all arbitrary since $\lambda$-terms can be used as witnesses of the truth of formulae in the context of formalisms---that is, proofs-as-programs correspondences \cite{how80,sor06}---that share with justification logic an extremely intimate connection \cite{art95,art08,af21} to the BHK interpretation \cite{kol32,hey34}.

\begin{definition}[Assignment and $x$-variant of an assignment]\label{def:x-variant}
An assignment is any function $f$  such that, for any term $t$, $f(t)\in \romu$ and, for any predicate  symbol $P$ of arity $n$, $f(P)\subseteq \romu^n$. 

For any assignment $f$, an $x$-variant $g$ of $f$ is any function $g$ such that\begin{itemize} 
\item $g(x)\in \romu $, 
\item for any term $t\neq x$, $g(t)=f(t)$, and 
\item for any predicate  symbol  $P$, $g(P)=f(P)$. 
\end{itemize}\end{definition}
For instance, obviously, for any model $\calm = ( \romw , \romr, \romu , \romi)$ and state $w\in \romw$, the interpretation function $\romi _w$ is an assignment.

\begin{definition}[Assignment and interpretation function combination $f_{w\hookrightarrow v}$]\label{def:ass-comb}
For any model $\calm = ( \romw , \romr, \romu , \romi)$, states $w,v\in \romw$ such that $w\romr v$, assignment $f$, the assignment $f_{w\hookrightarrow v}$ is defined as follows:
\begin{itemize}
\item $f_{w\hookrightarrow v}(t)=f(t)$ for any term $t$
\item $f_{w\hookrightarrow v}(P)=\romi_v (P)$ for any predicate symbol $P$
\end{itemize}
\end{definition}

\begin{definition}[Truth at a state]\label{def:forcing}
For any formula $A$, model $\calm = ( \romw , \romr, \romu ,  \romi)$ and state $w\in \romw$, the fact that a formula $A$ is true at the state $w$ of model $\calm$ under the assignment $f$ (formally, $\calm , w , f  \Vdash A $) is defined by the following clauses.
\begin{itemize}
\item $\calm , w , f  \Vdash t_1 \lid t_2$ if, and only if, $(f(t_1) , f(t_2))\in f(\lid)$

\item $\calm , w , f  \Vdash P(t_1, \dots , t_n)$ if, and only if, $(f(t_1), \dots , f(t_n))\in f(P)$  

\item $\calm , w , f  \not\Vdash \bot$ for any model $\calm$, state $w$ and assignment $f$

%\item { REMOVE: $\calm , w , f  \Vdash \neg A$ if, and only if, $\calm , w , f  \not\Vdash  A$}

%\item $\calm , w , f  \Vdash  A\vee B $ if, and only if, $\calm , w , f  \Vdash  A$ or $\calm , w , f  \Vdash  A$

\item $\calm , w , f  \Vdash  A\wedge B $ if, and only if, $\calm , w , f  \Vdash  A$ and $\calm , w , f  \Vdash  A$

\item $\calm , w , f  \Vdash  A\rightarrow B $ if, and only if, $\calm , w , f  \Vdash  A$ only if $\calm , w , f  \Vdash  B$

%\item $\calm , w , f  \Vdash  \exists x. A $ if, and only if, $\calm , w , g \Vdash  A$, for some $x$-variant $g$ of $f$

\item $\calm , w , f  \Vdash  \forall x. A $ if, and only if, $\calm , w , g \Vdash  A$, for any $x$-variant $g$ of $f$

\item $\calm , w , f  \Vdash  \know _a A $ if, and only if, $\calm , v , f_{w\hookrightarrow v} \Vdash  A$, for any $v\in \romw $ such that $w\romr _a v$. 
%and such that $f_{w\hookrightarrow v} (t)\in \romd_v$ for any $t$ occurring in $A$.

\item $\calm , w , f  \Vdash  j: A $ if, and only if, $\calm , v , f_{w\hookrightarrow v} \Vdash  A$, for any $v\in \romw $ such that $w\romr _\gamma  v$, and, moreover, for some list of terms $t_1\dots t_n$ it holds that $A[t_1/x_1\dots t_n/x_n]\in \rome_w(j)$ where $x_1,\dots , x_n$ are all free variables of $A$ and, for $i\in \{1,\dots , n\}$, $f(t_i)= f(x_i)$.
\end{itemize} 
For any formula $A$ and state $w$, if $A$ is true at $w$, we say that $w$ verifies $A$. Moreover, if $w$ verifies all elements of $\Gamma$, we say that $w$ verifies $\Gamma$.
\end{definition}
Even though a bit complicated, the last condition on the truth of $j:A$ at $w$ under the assignment $f$ simply guarantees that all formulae that have the same propositional and quantificational structure, and all their terms interpreted in the same way by $f$ can be justified by the same terms---as specified by $\rome_w(j)$. That is, if $j$ can be a justification for $A$ at $w$ under the assignment $f$, then any formula $A'$, obtained from $A$ by substituting some terms for variables that have the same interpretation as these terms under $f$, can also be justified by $j$. For instance, if $f(x)=f(t)\in \romu$, then $P(x)$ and $P(t)$ can be justified by the same justifications at $w$ under $f$. 

It is easy to see that this weakening of the obvious condition $A\in \rome_w(j)$ is necessary and rather natural if one thinks that 
a free variable $x$ occurring in a formula $A$ under an $x$-variant $f$ of some interpretation function is supposed to play the {\it semantical role}, so to say, of the object $f(x)\in \romu $ rather than that of $\romi _w(x)$ itself. Notice, moreover, that this cannot be directly captured by the  content of $\rome_w(j)$ since this set must be  syntactically defined before the definition of the assignment $f$ is known.

\begin{definition}[Satisfiability, logical consequence and validity]\label{def:sem-rel}
%For any formula $A$, model $\calm = ( \romw , \romr, \romu , \romd, \romi)$, $A$ is satisfiable on the model $\calm$ (formally, $ \calm\Vdash A$) if, and only if, there exists a state $w\in \romw$ such that $\calm , w, \romi _w\Vdash A $.

For any formula $A$, $A$ is satisfiable if, and only if, there exists a model $\calm = ( \romw , \romr, \romu , \romi)$ and a state $w\in \romw$ such that $\calm , w, \romi _w\Vdash A $.

%For any formula $A$ and model $\calm$,  $A$ is valid on the model $\calm$ (formally, $ \calm\vDash A$) if, and only if, for any state $w\in \romw$ it holds that $\calm , w, \romi _w\Vdash A $.
For any set of formulae $\Gamma $ and formula $A$, $A$ is a logical consequence of $\Gamma$ (formally, $\Gamma \vDash A$) if, and only if, for any model $\calm = ( \romw , \romr, \romu ,  \romi)$ and state $w\in \romw$, if $\calm , w, \romi _w\Vdash G $ for each $G\in \Gamma$, then $\calm , w, \romi _w\Vdash A$.

For any formula $A$, $A$ is valid (formally, $\vDash A$) if, and only if, for any model $\calm = ( \romw , \romr, \romu ,  \romi)$ and state $w\in \romw$, $\calm , w, \romi _w\Vdash A $.

\end{definition}

%\subsection{Remarks on Expressivity}
%\label{sec:express}
%
%
%{ Semantically, an intensional identity will simply be an identity of which all agents know that it holds, either because the identity follows from rules for identity or because the identity is globally known for contingent reasons. We do not care. This, indeed, (unfortunately?) agrees with the reading according to which intensionally equivalent formulae are modally equivalent formulae. But this works only if agents agree on intensions, that is, if no agent can have an equivalence class which is smaller than an intension. Proof-theoretically, this is the case when necessitation can be used on any theorem, and not only on those proved without using identity rules. Then intensions are closed with respect to provable identity and agents' equivalence classes are unions of intensions.
%
%Semantically, a denotational but not intensional identity is simply an identity which holds but of which some agent is not aware.
%}

\section{Intensionality of $\know$ and Hyperintensionality of Justifications}
\label{sec:hyper}

In Section \ref{sec:intro}, we stated that we wanted to formally define a logic in which the justification logic fragment is hyperintensional and the epistemic logic fragment enables us to distinguish between extensional and intensional identity. Now that we have defined the system, we can formally show that it is indeed a system of this kind.

Let us begin with the hyperintensionality of our justification logic fragment. First of all, we need to formally fix a notion of hyperintensionality. We adapt the one by \citet{cre02} to the case at hand. We will simply say that an operator $\romo $ is hyperintensional if, and only if, there exist two formulae $A$ and $B$, a model $\calm= ( \romw , \romr, \romu , \romi,\rome)$ and a state $w\in \romw$ such that
\begin{itemize}
\item $A\vDash B$,
\item $B\vDash A$,
\item $\calm , w , \romi _w\Vdash \romo A$
\item $\calm , w  , \romi _w \not\Vdash \romo B$
\end{itemize}
Let us then show that $j:\;$ is indeed hyperintensional. Consider the formulae $P(x)\wedge Q(y)$ and $Q(y)\wedge P(x)$. It clearly holds that $P(x)\wedge Q(y)\vDash Q(y)\wedge P(x)$ and $Q(y)\wedge P(x)\vDash P(x)\wedge Q(y) $. Consider then any model $\calm= ( \romw , \romr, \romu , \romi,\rome)$ and state $w\in \romw$ such that 
\begin{enumerate}
\item $I_v(x)\in I_v(P)$ for any $v\in \romw$, 
\item $I_v(y)\in I_v(Q)$ for any $v\in \romw$,
\item $P(x)\wedge Q(y)\in \rome _v (j)$  for any $v\in \romw$, and 
\item  \label{proof-check-only} $ A \notin \rome _v (t)$ for any $v\in \romw$, term $t$ and formula $A$ unless $A=!\dots ! j: \dots : j:P(x)\wedge Q(y)$ where the terms occurring in $:\dots :$ are of the form $!\dots ! j$.
\end{enumerate}Intuitively, the model is defined in such a way that $P(x)\wedge Q(y)$ is true at all states. Moreover, as for admissible justifications, $j$ is admissible for $ P(x)\wedge Q(y)$ at $w$, any justification obtained by applying possibly several times the proof checker $!$ to $j$ is admissible for the corresponding formula $!\dots ! j: \dots : j:P(x)\wedge Q(y)$. No other terms is an admissible justification for any other formula.

Let us first argue that the model $\calm$ is well defined. The only possible problem is the definition of $\rome $ and, in particular, the possibility of defining it in such a way that only formulae of the form $!\dots ! j: \dots : j:P(x)\wedge Q(y)$ have admissible justifications. First, the only conditions on  $\rome $ that force us to consider terms as admissible justifications are the one stating that, for any term $j$, if $A\in \rome_w(j)$, then $ j:A \in\rome_w(!j)$ and the one stating that, for any two terms $j,k$, if $ A\rightarrow B \in \rome_w(j)$ and $ A \in \rome_w(k)$, then $ B\in \rome_w(jk)$. The latter is irrelevant for us since no implication has an admissible justification in the model. As for the second one, if we consider that the only justification that we need to be admissible is $j$, the definition of $\calm$ complies with it due to \cref{proof-check-only} above. The model $\calm$ is therefore well defined.

Clearly, then, the formulae $P(x)\wedge Q(y)$, $ Q(y) \wedge P(x)$ and $j:P(x)\wedge Q(y)$, the model $\calm$ and the state $w\in \romw$ show that justification operators $t:$ are hyperintensional. Indeed, $j:P(x)\wedge Q(y)$ is true at $w$ while no justification of $Q(y) \wedge P(x)$ is admissible at $w$.

Let us now show that any epistemic operator $\know _a$ is intensional with respect to identity. That is, there exist two terms $t$ and $s$, a formula $A$ in which $x$ occurs free, a model $\calm= ( \romw , \romr, \romu , \romi,\rome)$ and a state $w\in \romw$ such that
\begin{itemize}
\item $\calm , w , \romi _w\Vdash t\lid s$,
\item $\calm , w , \romi _w\Vdash \know _a A[t/x]$, and
\item $\calm , w , \romi _w\not\Vdash \know _a A[s/x]$
\end{itemize}

Consider the formula $P(x)$ and any model $\calm= ( \romw , \romr, \romu , \romi,\rome)$ and state $w\in \romw$ such that
\begin{enumerate}
\item \label{wid} $(\romi _w(t), \romi _w(s)) \in \romi _w(\lid )$,
\item  \label{allpt} $\romi _v(t)\in\romi _v (P)$ for any $v\in \romw$,
\item there exists a $v\in \romw$ such that 
\begin{enumerate}
\item  \label{vdiff} $v\neq w$,
\item \label{vrel} $w\romr_a v$,
\item \label{vnid} $(\romi _v(t), \romi _v(s)) \not\in \romi _v(\lid )$, and 
\item \label{vnps} $\romi _v(s)\not\in\romi _v (P)$.
\end{enumerate}
\end{enumerate}Clearly, we have that $\calm , w , \romi _w\Vdash t\lid s$ and $\calm , w , \romi _w\Vdash \know _a P(t)$ hold because of \cref{wid} and \cref{allpt} above. Nevertheless, due to \cref{vrel}, \cref{vnid} and \cref{vnps}, we also have that $\calm, w, \romi _w \not\vDash \know _a P(s)$.

Hence, the  operator $\know _a $, for any agent name $a$, is intensional with respect to identity. In other words, the operator $\know _a $ enables us to distinguish between extensional identity and intensional identity: the fact that $t\lid s$ is true at a state---extensional identity---is not equivalent to the fact that $t\lid s$ is true at all accessible states---intensional identity. In the second case, indeed, we would have that also $\know _a t \lid s $ is true at $w$ and hence---it is easy to see---that $\know _a P(s)$ is true at $w$.

Notice, finally, that both characterisations of intensionality and hyperintensionality used above can be rephrased in purely proof-theoretical terms. We can do so by employing $\vdash $ instead of $\vDash $ to express the equivalence of formulae and the provable identity of terms, and by employing derivability from hypotheses instead of truth at a state. As for the presented arguments, even though it is easier to discuss them in semantical terms since they have the conceptual form of counterexamples.\footnote{To show that the operators are intensional or hyperintensional, indeed, boils down to show that something does not have to hold in all cases in which certain assumptions hold.}, they can be formally run also in the presented proof-theoretical framework. This is formally guaranteed by the soundness and completeness results that we will prove in Section \ref{sec:sound-complete}

\section{Soundness and Completeness}
\label{sec:sound-complete}
We now prove that there exists an exact correspondence between the natural deduction calculus and the semantics of the logic. We will show, in particular, that derivability and logical consequence perfectly match: $\Gamma \vdash A$ if, and only if, $\Gamma \vDash A$.

\subsection{The Soundness of the Calculus}
\label{sec:soundness}

In order to prove the soundness of the calculus, we will read an expression of the form $\Gamma\Rightarrow A $ as a judgement stating that all states that verify all elements of $\Gamma$ must also verify $A$; and then show that, if the premisses of any rule hold under this reading, then also its conclusion holds.
But let us first prove a lemma about substitution and assignments which will be essential in the main proof. 
\begin{lemma}\label{lem:subst-variant-corr}
For any formula $A$, term $t$, model $\calm$, state $w$ in $\calm$, assignment $g$ such that $g(x)=\romi_w(x)$ and $x$-variant $f$ of $g$ such that $f(x)=g(t)$, it holds that $\calm , w , f \Vdash A$ if, and only if,  it holds that $\calm , w , g\Vdash A [t/x] $.
\end{lemma}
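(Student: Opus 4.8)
The statement is a standard "substitution lemma" relating syntactic substitution $[t/x]$ to semantic modification of the assignment. The natural proof is by structural induction on the formula $A$. The slightly delicate points are (i) the quantifier case, where the definition of $[t/x]$ involves a renaming clause when $t$ is the bound variable, and (ii) the epistemic and justification cases, where the assignment gets modified via the $f_{w\hookrightarrow v}$ operation and where, for $j:B$, the last conjunct of the forcing clause mentions the evidence function $\rome_w$ together with a choice of terms whose $f$-values match those of the free variables.

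\medskip

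\noindent\textbf{Plan of the induction.} First I would fix the hypotheses: $g(x)=\romi_w(x)$ and $f$ is the $x$-variant of $g$ with $f(x)=g(t)$, so $f$ and $g$ agree on every term except possibly $x$, and they agree on every predicate symbol. The atomic cases ($t_1\lid t_2$ and $P(t_1,\dots,t_n)$) are the crux of the base: here one checks that for every term $s$, $f(s)=g(s[t/x])$, which follows by a sub-induction on the term $s$ — if $s=x$ then $s[t/x]=t$ and $f(x)=g(t)$ by assumption; if $s\neq x$ then $s[t/x]=s$ and $f(s)=g(s)$; and since the term language only otherwise builds terms by application, $\lambda$-abstraction and $!$, and substitution $[t/x]$ distributes appropriately (note the lemma as stated only substitutes for variables occurring in atomic positions, and bound occurrences in $\lambda$ are untouched), the claim propagates. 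Then $\calm,w,f\Vdash P(t_1,\dots,t_n)$ iff $(f(t_1),\dots,f(t_n))\in f(P)$ iff $(g(t_1[t/x]),\dots,g(t_n[t/x]))\in g(P)$ iff $\calm,w,g\Vdash P(t_1,\dots,t_n)[t/x]$, using $f(P)=g(P)$; the identity case is identical. The case $\bot$ is trivial, and $\wedge$, $\rightarrow$ follow immediately from the induction hypothesis since $[t/x]$ distributes over them.

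\medskip

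\noindent\textbf{The binding and modal cases.} For $\forall y.B$ I would split on the three clauses of Definition~\ref{def:substitution}. If $y=x$, then $(\forall x.B)[t/x]=\forall x.B$ and truth of $\forall x.B$ does not depend on the value of $f$ resp.\ $g$ at $x$ (both get overwritten by the $x$-variant quantifier clause), so the two sides coincide directly. If $y\neq x$ and $y\neq t$ (i.e.\ $y$ does not occur in $t$, reading the side condition $v\neq y\neq u$ as on variables), then for any $y$-variant, applying the induction hypothesis with the same $t,x$ gives the result, after noting the $y$-variant of $g$ still satisfies $g'(x)=\romi_w(x)$ because $y\neq x$. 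The renaming clause ($y=t$) requires first applying the induction hypothesis for the bound-variable renaming $[z/y]$ and then for $[t/x]$; here one uses that $z$ is fresh so that re-binding does not interfere — this is the only genuinely fiddly sub-case, and I would handle it by composing two applications of the induction hypothesis and checking the freshness side conditions carefully. For $\know_a B$: $(\know_a B)[t/x]=\know_a(B[t/x])$, and $\calm,w,g\Vdash\know_a(B[t/x])$ iff for all $v$ with $w\romr_a v$, $\calm,v,g_{w\hookrightarrow v}\Vdash B[t/x]$. One then applies the induction hypothesis at the state $v$ with the assignments $f_{w\hookrightarrow v}$ and $g_{w\hookrightarrow v}$: these differ only at $x$ (since $f_{w\hookrightarrow v}(s)=f(s)$ and $g_{w\hookrightarrow v}(s)=g(s)$ for all terms $s$), $f_{w\hookrightarrow v}(x)=f(x)=g(t)=g_{w\hookrightarrow v}(t)$, and $f_{w\hookrightarrow v}(P)=\romi_v(P)=g_{w\hookrightarrow v}(P)$; but the induction hypothesis requires $g_{w\hookrightarrow v}(x)=\romi_v(x)$, and indeed $g_{w\hookrightarrow v}(x)=g(x)=\romi_w(x)=\romi_v(x)$ by the rigidity condition on $\romi$ in Definition~\ref{def:model}. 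Hence the hypothesis applies at $v$ and the equivalence transfers state by state.

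\medskip

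\noindent\textbf{The justification case and the main obstacle.} For $j:B$, the forcing clause has two parts: the accessibility-relative truth of $B$ at all $\romr_\gamma$-successors, handled exactly as in the $\know_a$ case via $f_{w\hookrightarrow v}$, $g_{w\hookrightarrow v}$ and rigidity of $\romi$; and the evidence condition, namely that there is a list of terms $s_1,\dots,s_n$ with $B[s_1/x_1\cdots s_n/x_n]\in\rome_w(j)$ where $x_1,\dots,x_n$ are the free variables of $B$ and $f(s_i)=f(x_i)$. The subtlety is that $B[t/x]$ may have a different list of free variables than $B$: if $x$ is free in $B$ it is replaced by whatever variables occur free in $t$. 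I expect this matching-up of the two evidence conditions to be the main obstacle. The key observation is that the evidence clause is itself invariant under replacing a free variable by a term with the same assignment value: given witnesses for $B$ under $f$, one obtains witnesses for $B[t/x]$ under $g$ by composing the substitutions, using $f(x)=g(t)$ and that $f,g$ agree off $x$ — so the very same member of $\rome_w(j)$ (up to the already-available closure of $\rome_w(j)$ under the substitution condition in Definition~\ref{def:model}) witnesses both sides. I would make this precise by showing that $B[s_1/x_1\cdots s_n/x_n]$ and $(B[t/x])[s_1'/y_1\cdots s_m'/y_m]$, for suitable choices of the primed terms and variables, are syntactically equal, so that membership in $\rome_w(j)$ is literally the same statement; and then the value-matching conditions $f(s_i)=f(x_i)$ versus $g(s_j')=g(y_j)$ correspond under $f(x)=g(t)$. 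This closes the induction.
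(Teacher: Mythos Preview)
Your proposal is correct and follows essentially the same route as the paper: a structural induction on $A$, with the $\know_a$ and $j{:}$ cases handled by passing to $f_{w\hookrightarrow v}$ and $g_{w\hookrightarrow v}$ and invoking the inductive hypothesis at successor states. You are in fact slightly more explicit than the paper in two places---you invoke the rigidity clause $\romi_w(x)=\romi_v(x)$ to check that $g_{w\hookrightarrow v}(x)=\romi_v(x)$, and you flag that the free variables of $B[t/x]$ may differ from those of $B$ in the evidence clause---whereas the paper simply asserts the latter equivalence is ``clearly true'' and assumes the free variables of $B[t/x]$ are $x_2,\dots,x_n$.
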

\begin{proof}
The proof is by a somewhat tedious, but not too complex, induction on the structure of $A$, and relies on Definition \ref{def:ass-comb} as far as the cases of epistemic and justification formulae are concerned. See Appendix \ref{appendix}.
\end{proof}

We can now prove the main theorem about derivations. A more standard statement of the soundness of the calculus can be found in the subsequent corollary. 
\begin{theorem}\label{thm:der-preservation}
If  $\Gamma \Rightarrow A $ is derivable, then all states that verify $\Gamma$ also verify $A$.
\end{theorem}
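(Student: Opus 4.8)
The plan is to proceed by induction on the structure of the derivation of $\Gamma \Rightarrow A$, following the standard pattern for soundness of natural-deduction calculi: check each zero-premiss rule directly, and for each rule with premisses assume the claim for the premisses (the induction hypothesis) and derive it for the conclusion. Throughout, "a state $w$ in a model $\calm$ verifies a multiset $\Delta$" will be unpacked via Definition \ref{def:forcing} as $\calm , w , \romi_w \Vdash G$ for every $G\in\Delta$, and the goal in each case is to show $\calm , w , \romi_w \Vdash A$ for the conclusion formula $A$, for an arbitrary model $\calm$ and state $w$ verifying the conclusion's antecedent.

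First I would dispatch the purely propositional rules (axiom $A\Rightarrow A$, the $\rightarrow$, $\wedge$, $\bot$ and double-negation rules): these are immediate from the truth clauses for $\rightarrow$, $\wedge$ and $\bot$, using that the ambient logic at a fixed state is classical. The identity rules need the reflexivity, symmetry and transitivity conditions imposed on $\romi_w(\lid)$ in Definition \ref{def:model}, together with the congruence condition on predicate interpretations for the Leibniz rule $\Gamma,\Delta \Rightarrow A(s/x)$ from $\Gamma \Rightarrow t\lid s$ and $\Delta \Rightarrow A(t/x)$; here one argues by a sub-induction on $A$ that, since $(t^w,s^w)\in\romi_w(\lid)$, truth of $A(t/x)$ and $A(s/x)$ coincide at $w$ — crucially the substitution $(\cdot/\cdot)$ does not enter $\know_a$ or $j{:}$ contexts, so no appeal to identity holding at accessible states is needed, which is exactly what makes the rule sound for the intensional semantics. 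The quantifier rules require a standard argument about $x$-variants; the $\forall$-introduction rule (with $y$ not free in $\Xi$) uses that the interpretation of $y$ can be varied freely without affecting the premisses, and the $\forall$-elimination rule $\Gamma \Rightarrow A[t/x]$ from $\Gamma \Rightarrow \forall x.A$ is precisely where Lemma \ref{lem:subst-variant-corr} is invoked to convert between $\calm,w,\romi_w\Vdash A[t/x]$ and $\calm,w,f\Vdash A$ for the appropriate $x$-variant $f$ with $f(x)=t^w$.

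The modal and justification rules are the substantive cases. For the epistemic necessitation rule $\know_a A_1,\dots,\know_a A_n \Rightarrow \know_a A$ from $\know_a A_1,\dots,\know_a A_n \Rightarrow A$: assuming $w$ verifies all $\know_a A_i$, one must show $\calm,v,(\romi_w)_{w\hookrightarrow v}\Vdash A$ for every $v$ with $w\romr_a v$; the point is that the Euclidean-plus-reflexive ($\mathrm{S5}$) conditions on $\romr_a$ force $w\romr_a v$ to entail $v\romr_a v'$ whenever $w\romr_a v'$, so each $\know_a A_i$ remains true at $v$, hence by the induction hypothesis applied at $v$ (with the modified assignment) $A$ holds there — one needs a small compatibility check that running the induction hypothesis under the assignment $(\romi_w)_{w\hookrightarrow v}$ rather than $\romi_v$ is legitimate, using Definition \ref{def:ass-comb} and the fact that the two assignments differ only on predicate symbols in a way already absorbed by the truth clauses. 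The axiom-$K$, $T$ and $5$ rules for $\know_a$ follow from reflexivity, the relation being a relation, and the Euclidean property respectively. The justification rules are analogous but additionally require tracking the $\rome_w$-membership side condition in the truth clause for $j{:}A$: application ($jk{:}B$ from $j{:}(A\rightarrow B)$ and $k{:}A$) uses the closure condition $A\rightarrow B\in\rome_w(j),\,A\in\rome_w(k)\Rightarrow B\in\rome_w(jk)$; proof-checking ($!j{:}(j{:}A)$ from $j{:}A$) uses $A\in\rome_w(j)\Rightarrow j{:}A\in\rome_w(!j)$; the justification-identity rules use the last $\rome$-condition; and factivity uses reflexivity of $\romr_\gamma$. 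I expect the main obstacle to be the bookkeeping in the epistemic and justification cases around which assignment the induction hypothesis is applied under — that is, making precise that $\calm,v,(\romi_w)_{w\hookrightarrow v}\Vdash G$ for the hypotheses follows from $\calm,w,\romi_w\Vdash\know_a G$, and that the resulting conclusion at $v$ under $(\romi_w)_{w\hookrightarrow v}$ correctly witnesses $\calm,w,\romi_w\Vdash\know_a A$ — rather than any single deep step; this is where Definition \ref{def:ass-comb} and Lemma \ref{lem:subst-variant-corr} must be deployed carefully, especially in the $j{:}A$ clause where the free-variable list and the condition $f(t_i)=f(x_i)$ interact with substitution.
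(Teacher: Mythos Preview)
Your proposal is correct and follows essentially the same induction-on-derivations strategy as the paper, with the same key ingredients: the semantic clauses for the propositional and identity rules, the sub-induction on $A$ for the Leibniz rule, Lemma~\ref{lem:subst-variant-corr} for the quantifier rules, the S5 frame conditions for the $\know_a$ rules, and the $\rome$-closure conditions for the justification rules. Your anticipated bookkeeping issue about $(\romi_w)_{w\hookrightarrow v}$ versus $\romi_v$ dissolves once you note that Definition~\ref{def:model} makes term interpretation state-independent, so these two assignments in fact coincide; the paper uses this silently.

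One small point where your sketch diverges from the paper: for $\forall$-introduction you suggest the ``standard'' argument that the interpretation of the eigenvariable $y$ can be varied freely. But the induction hypothesis as stated is only about truth under the canonical assignment $\romi_w$ at each state, not under arbitrary assignments, so you cannot directly apply it after passing to a $y$-variant. The paper sidesteps this by a syntactic manoeuvre: since $y$ is fresh for $\Xi$, one can uniformly substitute any term $t$ for $y$ throughout the sub-derivation to obtain a derivation of $\Xi \Rightarrow A[t/x]$ of the same length, apply the induction hypothesis to \emph{that}, and then use the model condition that every domain element is the interpretation of some term (together with Lemma~\ref{lem:subst-variant-corr}) to conclude $\forall x.A$. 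Either you strengthen the induction hypothesis to arbitrary assignments, or you follow the paper's substitution-plus-surjectivity route; be aware that the naive variant argument does not go through as written.
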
\begin{proof}
The proof is by induction on the number of rules applied in the derivation of $\Gamma \Rightarrow A$ and employs Lemma \ref{lem:subst-variant-corr} for the cases of $\forall$ introduction and elimination. See Appendix \ref{appendix}.
\end{proof}

\begin{corollary}[Soundness]\label{cor:soundness}
For any possibly empty multiset of formulae $\Gamma$ and formula $A$, if $\Gamma \vdash A $, then $\Gamma \vDash A$.
\end{corollary}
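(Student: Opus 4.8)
The plan is to derive Corollary~\ref{cor:soundness} as an essentially immediate consequence of Theorem~\ref{thm:der-preservation}, since the corollary is just the reformulation of that theorem in terms of the relation $\vDash$ from Definition~\ref{def:sem-rel}. First I would unfold the definition of $\Gamma\vdash A$ (Definition~\ref{def:der}): assuming $\Gamma\vdash A$, there is a finite multiset $\Gamma'$ with $\Gamma'\subseteq\Gamma$ and a derivation of $\Gamma'\Rightarrow A$. (Here I am tacitly using that the statement of the corollary quantifies over multisets $\Gamma$, so one can either take $\Gamma'=\Gamma$ directly or carry the more general $\Sigma\vdash A$ phrasing; either way the finite multiset $\Gamma'$ actually used in the derivation is what matters.) By Theorem~\ref{thm:der-preservation}, every state of every model that verifies $\Gamma'$ also verifies $A$.

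Next I would connect ``verifies'' (Definition~\ref{def:forcing}) with ``logical consequence'' (Definition~\ref{def:sem-rel}). Fix an arbitrary model $\calm=(\romw,\romr,\romu,\romi,\rome)$ and state $w\in\romw$ such that $\calm,w,\romi_w\Vdash G$ for every $G\in\Gamma$. Since $\Gamma'\subseteq\Gamma$, every element of $\Gamma'$ is an element of $\Gamma$, hence $\calm,w,\romi_w\Vdash G'$ for every $G'\in\Gamma'$; that is, $w$ verifies $\Gamma'$ (using $\romi_w$ as the assignment, which is legitimate by the remark following Definition~\ref{def:x-variant}). By the conclusion of Theorem~\ref{thm:der-preservation} applied to the derivation of $\Gamma'\Rightarrow A$, it follows that $\calm,w,\romi_w\Vdash A$. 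Since $\calm$ and $w$ were arbitrary, this is exactly the statement that $\Gamma\vDash A$.

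I do not expect any genuine obstacle here: the only things to be careful about are bookkeeping points rather than mathematical ones. Specifically, one must make sure that the passage from the multiset language of the sequents ($\Gamma'\Rightarrow A$, with $\Gamma'$ a finite multiset) to the set-based relation $\vDash$ is handled correctly---this is harmless because the notion of ``$w$ verifies $\Gamma'$'' in Theorem~\ref{thm:der-preservation} depends only on the \emph{set} of formulae occurring in $\Gamma'$, and that set is contained in $\Gamma$ by the definition of $\subseteq$ between multisets and sets given in Section~\ref{sec:language}. The multiset contraction, weakening, and duplication conventions adopted for the calculus also guarantee there is no subtlety in going between $\Gamma'$ and $\Gamma$. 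Thus the proof is a two-line deduction, and all the real work has already been discharged in Lemma~\ref{lem:subst-variant-corr} and Theorem~\ref{thm:der-preservation}.
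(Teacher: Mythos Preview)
Your proposal is correct and follows essentially the same approach as the paper: the paper's proof consists of a single sentence invoking Theorem~\ref{thm:der-preservation} together with Definitions~\ref{def:der} and~\ref{def:sem-rel}, and you have simply unpacked that invocation carefully, including the bookkeeping about the finite sub-multiset $\Gamma'$ and the passage from ``verifies'' to $\vDash$.
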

\begin{proof}By Theorem \ref{thm:der-preservation} and Definitions \ref{def:der} and \ref{def:sem-rel}. 
\end{proof}

\subsection{The Completeness of the Calculus}
\label{sec:completeness}

The completeness proof is reasonably similar to a traditional proof by canonical model construction---see, for instance, \citet{brv01}---and several of its details are inspired by the general framework introduced by \citet{cor02}. Nevertheless, the choice of providing it directly for a natural deduction calculus and other technical subtleties---such as the fact that we formalise a subjective (agent-relative) identity with respect to which epistemic and justification contexts are opaque---makes it impossible to simply apply existing proof techniques.

The idea behind the proof, as usual, is the definition of a model $\canm$ that contains a counterexample for each non provable formula and incorrect derivability judgement, that is, a state that falsifies the formula or the judgement. This proves the contrapositive of the completeness statement: if $\Gamma\not\vdash A$, then $\Gamma\not\vDash A$. The model $\canm$ is usually called {\it canonical model}, its set of states is defined as the set of all sets of formulae which are maximally consistent with respect to the logic, and the interpretation function and of accessibility relation definitions guarantee that the formulae {\it true at} some state $w$ are exactly those {\it contained in} $w$ if we see it as a maximally consistent set of formulae.

To this basic structure of the proof, one needs to add details and requirements related to the particular logic under consideration. In our case, it is enough to require that the set of states of $\canm$ is the set of all maximally consistent sets of formulae that contain 
$\neg A[t/x]$, for some $t$, whenever they contain $\neg \forall x. A$. We call these sets  {\it counterexemplar}. This condition is  completely analogous to the one required to handle existential quantifiers in first-order logic.

Let us begin with the usual definition of maximally consistent sets of formulae.

\begin{definition}[Maximally consistent set of formulae]\label{def:max-cons}
A set of formulae $\Sigma$ is consistent if and only if, $\Sigma \not\vdash \bot$. Rather unsurprisingly, if a set is not consistent, we say that it is inconsistent.

A set of formulae $\Sigma$ is maximally consistent if, and only if, $\Sigma$ is consistent and, for any formula $A\notin \Sigma$, $\Sigma\cup \{A\}$ is inconsistent.
\end{definition}

We now formally define the notion of {\it counterexemplar} set. 

\begin{definition}[Counterexemplar set of formulae]\label{def:counterex}
A set of formulae $\Sigma$ is counterexemplar  if, and only if, for any formula $\neg \forall x. A\in \Sigma$, there exists a term $t$ such that $\neg A[t/x]\in \Sigma$.
\end{definition}

We can now show that we can always obtain a maximally consistent counterexemplar set (MCC) by extending a consistent one. This will guarantee that, for any non provable formula $A$, there exists a state in $\canm$ that contains---and hence verifies---$\neg A$.

\begin{lemma}\label{lem:ultrafilter}
For any consistent set of formulae $\Sigma$, there exists a maximally consistent, counterexemplar  set (MCC) of formulae $\Sigma^*$ such that $\Sigma \subseteq \Sigma ^*$.
\end{lemma}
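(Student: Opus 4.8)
The plan is to carry out a Lindenbaum-style construction enriched with a Henkin-style witness step for negated universal formulae, exactly analogous to the classical first-order case. Fix an enumeration $A_0, A_1, A_2, \dots$ of all formulae (the set of formulae is countable, by inspection of the grammar) and define an increasing chain of consistent sets $\Sigma = \Sigma_0 \subseteq \Sigma_1 \subseteq \Sigma_2 \subseteq \dots$ as follows. Given $\Sigma_n$: if $\Sigma_n \cup \{A_n\}$ is inconsistent, set $\Sigma_{n+1} = \Sigma_n$; if $\Sigma_n \cup \{A_n\}$ is consistent and $A_n$ is not of the form $\neg \forall x. B$, set $\Sigma_{n+1} = \Sigma_n \cup \{A_n\}$; and if $\Sigma_n \cup \{A_n\}$ is consistent and $A_n = \neg \forall x. B$, choose a variable $y$ with no free occurrence in $\Sigma_n$ nor in $A_n$ and set $\Sigma_{n+1} = \Sigma_n \cup \{\neg \forall x. B, \neg B[y/x]\}$. (Such a $y$ is available: as in the Henkin construction, we may reserve from the outset an infinite supply of variables not occurring free in $\Sigma$ — passing to a suitable language expansion if necessary — and only finitely many of them have been used as witnesses before stage $n$.) Finally put $\Sigma^* = \bigcup_{n \in \mathbb{N}} \Sigma_n$.

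I would then verify that $\Sigma^*$ has the required properties. That $\Sigma \subseteq \Sigma^*$ is immediate. For consistency of $\Sigma^*$: any derivation witnessing $\Sigma^* \vdash \bot$ uses a finite multiset $\Gamma \subseteq \Sigma^*$, hence $\Gamma \subseteq \Sigma_n$ for some $n$, so $\Sigma_n$ would be inconsistent — contrary to the fact that consistency is preserved at every stage, the only non-routine case being the witness step, which is handled by the lemma below. For maximality: if $A = A_n \notin \Sigma^*$ then in particular $A \notin \Sigma_{n+1}$, so by construction $\Sigma_n \cup \{A\}$ was inconsistent, whence $\Sigma^* \cup \{A\} \supseteq \Sigma_n \cup \{A\}$ is inconsistent too. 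For counterexemplarity: if $\neg \forall x. B \in \Sigma^*$, let $n$ be such that $A_n = \neg \forall x. B$; then $\Sigma_n \cup \{\neg \forall x. B\} \subseteq \Sigma^*$ is consistent, so at stage $n+1$ the witness step was actually performed and $\neg B[y/x] \in \Sigma_{n+1} \subseteq \Sigma^*$ for the variable $y$ chosen there.

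The crux — and, to my mind, the only genuinely delicate point — is the consistency-preservation of the witness step: if $\Sigma_n \cup \{\neg \forall x. B\}$ is consistent and $y$ occurs free neither in $\Sigma_n$ nor in $\neg \forall x. B$, then $\Sigma_n \cup \{\neg \forall x. B, \neg B[y/x]\}$ is consistent. I would prove this by contraposition, using the eigenvariable side-condition of the $\forall$-introduction rule. Suppose the larger set is inconsistent; by finiteness of derivations there is a finite multiset $\Gamma \subseteq \Sigma_n$ and a derivation of $\Gamma, \neg \forall x. B, \neg B[y/x] \Rightarrow \bot$. Since $\neg C$ abbreviates $C \rightarrow \bot$, one application of $\rightarrow$-introduction yields $\Gamma, \neg \forall x. B \Rightarrow \neg \neg B[y/x]$, and the double-negation rule then gives $\Gamma, \neg \forall x. B \Rightarrow B[y/x]$. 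Because $y$ occurs free neither in $\Gamma$ (as $\Gamma \subseteq \Sigma_n$) nor in $\neg \forall x. B$, the $\forall$-introduction rule applies and gives $\Gamma, \neg \forall x. B \Rightarrow \forall x. B$; combining this with the axiom $\neg \forall x. B \Rightarrow \neg \forall x. B$ via $\rightarrow$-elimination (and contracting the duplicated hypothesis) produces $\Gamma, \neg \forall x. B \Rightarrow \bot$, so $\Sigma_n \cup \{\neg \forall x. B\}$ is inconsistent — a contradiction. Everything else in the argument is entirely standard.
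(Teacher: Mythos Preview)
Your proof is correct and follows essentially the same Henkin--Lindenbaum strategy as the paper, including the identical key lemma that adding a fresh witness $\neg B[y/x]$ preserves consistency (proved via $\rightarrow$-introduction, double-negation elimination, and $\forall$-introduction exactly as you do). The only cosmetic difference is bookkeeping: the paper adds $\neg A_n$ whenever $A_n$ is rejected (triggering the witness step when $A_n=\forall x.B$ is inconsistent), whereas you simply skip rejected formulae and trigger the witness step when $A_n=\neg\forall x.B$ is accepted; both variants are standard and yield the same result.
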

\begin{proof} The proof is an adaptation of the modal logic version of the classical proof by \citet{hen49}. 

We construct an infinite chain $\Sigma _0\subset \Sigma _1 \subset \Sigma _2 \subset  \Sigma _3\dots $ of consistent sets of formulae {\ehi by standardly using a countable  set of new variables}. Let $\Sigma _0=\Sigma $ and, given any ordering $A_1, A_2, A_3\dots $ of all the formulae of our language, we define the set $\Sigma _n$, for $n>0$, as follows:\begin{itemize} 
\item $\Sigma _n = \Sigma _{n-1}\cup \{A_n\}$ if $\Sigma _{n-1}\cup \{A_n\}$ is consistent,

\item $\Sigma _n = \Sigma _{n-1}\cup \{\neg A_n\}$ if  $\Sigma _{n-1}\cup \{A_n\}$ is not consistent and either $A_n\neq \forall x.B$ for any variable $x$ and formula $B$, or $A_n= \forall x.B$ and $\neg B[t/x] \in \Sigma _{n-1}$ for some term $t$,

\item $\Sigma _n = \Sigma _{n-1}\cup \{\neg A_n , \neg B[y/x]\}$ for some variable $y$ that does not occur in $\Sigma _{n-1}\cup \{\neg A_n\}$ if  $\Sigma _{n-1}\cup \{A_n\}$  is not consistent, $A_n= \forall x.B$, and, for each term $t$, $B[t/x] \notin \Sigma _{n-1}$.
\end{itemize}

Let\[\Sigma ^*=\bigcup_{i\in \mathbb{N}} \Sigma _i\]

See Appendix \ref{appendix} for the proofs of the facts that the obtained set is maximally consistent and counterexemplar.
\end{proof}

We show that MCCs always contain either a formula or its negation. This will be very often used in the following proofs. One might say that this is the most precious feature of MCCs.

\begin{lemma}\label{lem:maximality}
For any formula $A$ and any maximally consistent, counterexemplar set of formulae (MCC) $\Sigma$, either $A\in \Sigma$ or $\neg A\in \Sigma$.
\end{lemma}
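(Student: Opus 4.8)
The plan is to prove the claim by the classical Lindenbaum-style argument: assuming $A\notin\Sigma$, I will show $\neg A\in\Sigma$. The two ingredients are (i) the maximality of $\Sigma$, which converts ``$B\notin\Sigma$'' into ``$\Sigma\cup\{B\}\vdash\bot$'', and (ii) a small transitivity (cut) property of $\vdash$, namely that $\Sigma\vdash B$ together with $\Sigma\cup\{B\}\vdash C$ entails $\Sigma\vdash C$. Property (ii) I would establish by composing the two witnessing derivations using the $\rightarrow$-introduction and $\rightarrow$-elimination rules of Figure \ref{fig:con-quant-rules}, keeping track of the multiset-versus-set bookkeeping imposed by Definition \ref{def:der}.

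First I would suppose $A\notin\Sigma$. By Definition \ref{def:max-cons}, $\Sigma\cup\{A\}$ is inconsistent, so $\Sigma\cup\{A\}\vdash\bot$; by Definition \ref{def:der} there is a finite multiset $\Gamma$ with $\Gamma\subseteq\Sigma\cup\{A\}$ and a derivation of $\Gamma\Rightarrow\bot$. Using the admissible structural rules (weakening to insert $A$ if it is absent, contraction to collapse repeated copies), I may assume $\Gamma$ has the shape $\Gamma_0,A$ with $\Gamma_0\subseteq\Sigma$. One application of $\rightarrow$-introduction then gives a derivation of $\Gamma_0\Rightarrow A\rightarrow\bot$, i.e.\ $\Gamma_0\Rightarrow\neg A$; since $\Gamma_0\subseteq\Sigma$, this yields $\Sigma\vdash\neg A$.

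Next I would argue that this forces $\neg A\in\Sigma$. If not, maximality again gives $\Sigma\cup\{\neg A\}\vdash\bot$, and combining this with $\Sigma\vdash\neg A$ via the transitivity property produces $\Sigma\vdash\bot$, contradicting the consistency of $\Sigma$. Concretely: pick a derivation of $\Delta_0,\neg A\Rightarrow\bot$ with $\Delta_0\subseteq\Sigma$, turn it by $\rightarrow$-introduction into $\Delta_0\Rightarrow\neg A\rightarrow\bot$, and compose it via $\rightarrow$-elimination with a derivation of $\Gamma_0\Rightarrow\neg A$ ($\Gamma_0\subseteq\Sigma$) to obtain $\Delta_0,\Gamma_0\Rightarrow\bot$ with $\Delta_0,\Gamma_0\subseteq\Sigma$. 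Hence $\neg A\in\Sigma$, which completes the proof.

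There is no deep obstacle here. The one point that requires care is that $\vdash$ is defined through \emph{finite multisets contained in} $\Sigma$ rather than through $\Sigma$ itself, so every invocation of maximality or of a calculus rule must be paired with the explicit extraction of an appropriate finite $\Gamma_0\subseteq\Sigma$ and with the remark that the structural rules (weakening, contraction) let me normalise the left-hand contexts; the cut/transitivity step is the only place that needs its own brief derivation-composition argument.
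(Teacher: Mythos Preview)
Your proposal is correct and essentially mirrors the paper's own proof: both obtain $\Gamma_0\Rightarrow\neg A$ from the inconsistency of $\Sigma\cup\{A\}$ by $\rightarrow$-introduction, obtain $\Delta_0\Rightarrow\neg\neg A$ from the inconsistency of $\Sigma\cup\{\neg A\}$ in the same way, and then combine the two by $\rightarrow$-elimination to derive $\Gamma_0,\Delta_0\Rightarrow\bot$, contradicting consistency. The only cosmetic difference is that the paper assumes $A\notin\Sigma$ and $\neg A\notin\Sigma$ simultaneously at the outset, whereas you nest the second assumption inside the first; the derivation-level content is identical.
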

\begin{proof} Suppose, by reasoning indirectly, that, for some formula $A$, neither $A$ nor $\neg A $ is an element of  $\Sigma $. Since $\Sigma $ is maximally consistent, we must have that $\Sigma \cup \{A\}$ is inconsistent. By Definition \ref{def:max-cons},  there must exist a derivation with conclusion $\Gamma , A \vdash \bot$ for a multiset $\Gamma $  of elements of $\Sigma $. By an implication introduction, then, we can construct a derivation with conclusion $\Gamma \vdash \neg A$. 
Since $\Sigma $ is maximally consistent, we must have also that $\Sigma \cup \{\neg A \}$ is inconsistent. By Definition \ref{def:max-cons},  there must exist a derivation with conclusion $\Delta  , \neg A \vdash \bot$ for a multiset $\Delta $  of elements of $\Sigma $. By an implication introduction, then, we can construct a derivation with conclusion $\Delta  \vdash \neg\neg  A$. Since we have a derivation with conclusion  $\Gamma \vdash \neg A$ and one with conclusion $\Delta  \vdash \neg\neg  A$, by an implication elimination, we can construct a derivation with conclusion $\Gamma, \Delta  \vdash \bot$. Since $\Gamma \cup \Delta \subseteq \Sigma $, this means that $\Sigma $ is inconsistent.  This goes against the fact that $\Sigma $ is maximally consistent and thus we can conclude, as desired, that either $A$ or $\neg A $ must be an element of $\Sigma $. 
\end{proof}

Another very useful feature of MCCs is that they contain all formulae that can be derived from their elements. Let us prove it.
\begin{lemma}\label{lem:closure}
Maximally consistent, counterexemplar sets of formulae (MCCs) are closed with respect to derivability.
\end{lemma}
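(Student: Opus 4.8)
The plan is to prove the contrapositive-free, direct statement: if $\Sigma$ is an MCC and $\Sigma \vdash A$, then $A \in \Sigma$. First I would unpack the hypothesis $\Sigma \vdash A$ via Definition \ref{def:der}: there is a finite multiset $\Gamma$ with $\Gamma \subseteq \Sigma$ and a derivation whose conclusion is $\Gamma \Rightarrow A$. The goal is then to show $A \in \Sigma$, and the natural route is to argue by contradiction using Lemma \ref{lem:maximality}.

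Here are the key steps in order. Suppose toward a contradiction that $A \notin \Sigma$. By Lemma \ref{lem:maximality}, since $\Sigma$ is an MCC, we then have $\neg A \in \Sigma$. Now combine this with the derivation of $\Gamma \Rightarrow A$: since $\neg A$ abbreviates $A \rightarrow \bot$, the axiom rule gives a derivation of $\neg A \Rightarrow \neg A$, and an implication elimination (the third rule in Figure \ref{fig:con-quant-rules}) applied to $\neg A \Rightarrow A \rightarrow \bot$ and $\Gamma \Rightarrow A$ yields a derivation with conclusion $\Gamma, \neg A \Rightarrow \bot$. Since $\Gamma \subseteq \Sigma$ and $\neg A \in \Sigma$, the multiset $\Gamma \cup \{\neg A\}$ consists of elements of $\Sigma$, so by Definition \ref{def:der} we get $\Sigma \vdash \bot$, contradicting the consistency of $\Sigma$ (part of Definition \ref{def:max-cons}). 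Hence $A \in \Sigma$, which is exactly the claim that $\Sigma$ is closed under derivability.

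I do not expect any serious obstacle here; the only point requiring a little care is the bookkeeping with multisets versus sets when invoking Definition \ref{def:der} — specifically, noting that $\Gamma$ is a multiset of elements of $\Sigma$ and that adding the single formula $\neg A$ (allowed by the structural conventions for the left-hand side stated just before Definition \ref{def:der}) keeps us inside $\Sigma$ as a set. One should also make sure to use the correct combination of rules to pass from $\Gamma \Rightarrow A$ and the assumption $\neg A$ to $\bot$; the implication elimination rule as stated handles this directly once $\neg A \Rightarrow \neg A$ is obtained from the axiom rule. Everything else is a routine unwinding of definitions, and the argument mirrors closely the structure already used in the proof of Lemma \ref{lem:maximality}.
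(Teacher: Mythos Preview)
Your proposal is correct and follows essentially the same approach as the paper's own proof: assume $A\notin\Sigma$, invoke Lemma~\ref{lem:maximality} to obtain $\neg A\in\Sigma$, use implication elimination to derive $\Gamma,\neg A\Rightarrow\bot$, and conclude that $\Sigma$ is inconsistent, a contradiction. Your version is slightly more explicit about the use of the axiom rule and the multiset bookkeeping from Definition~\ref{def:der}, but the argument is the same.
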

\begin{proof} 

Suppose, by reasoning indirectly, that this is not the case. That is, even though $\Gamma \vdash A$,  there is an MCC $\Sigma$ such that $\Gamma\subseteq \Sigma$ but $A\notin \Sigma$. By Lemma \ref{lem:maximality}, since we have $A\notin \Sigma$, we must have  $\neg A \in \Sigma$. Since $\Gamma \vdash A$, it is possible, by implication elimination, to construct a derivation with conclusion $\Gamma , \neg A \Rightarrow \bot $. Since, moreover, $\neg A \in \Sigma$ and  $\Gamma \subseteq \Sigma$, we can thus show that $\Sigma\vdash \bot $. But this contradicts the assumption that $\Sigma$ is an MCC, because MCCs are consistent. Hence, if $\Sigma$ is an MCC, $\Gamma\subseteq \Sigma$ and $\Gamma \vdash A$, then $A\in\Sigma$.
\end{proof}

We can finally define the canonical model $\canm$ and actually prove that it is a model of our logic. Now, the canonical model is defined by following the usual intuition: each state of the model is a maximally consistent---and, in our specific case, {\it counterexemplar}---set of formulae; the domain of the model should contain, for each term $t$ of the language, an object $\underline{t}$; the accessibility relation and the interpretation function should be defined in such a way that each state verifies all its elements and no other formula. Additionally, we have to specify the function $\cane$, which indicates, for each state $w$ and term $j$, the set of formulae for which $j$ is an admissible justification at $w$. This function is defined by following the same intuition that guides the definition of the interpretation function; that is, the interpretation of a predicate $P$ includes $(\underline{t_1} , \dots  , \underline{t_n})$ exactly when $P(t_1 , \dots  , t_n)$ is an element of the state, and the function $\cane$ establishes that  $j$ is an admissible justification for the formula $A$ at a state exactly when $j:A$ is an element of the state. This, quite clearly, enable us to show that an atomic formula is true at a state exactly when the state contains it and to show---along with the definition of $\canr_\gamma$---that an analogous fact holds for justification formulae. Finally, the definitions of accessibility relations are meant to comply with the fact that each state must verify all formulae that are justified or known---depending on the accessibility relation---at the predecessor states. Clearly, these definitions cannot do this by directly requiring that a certain state {\it verifies} certain formulae. This would be circular since $\Vdash $ is defined in terms of $\romr_\gamma $ and $\romr_a$. Hence, instead of directly mentioning $\Vdash $, the definitions of $\canr_\gamma$ and $\canr_a$ simply require that certain states {\it contain} certain formulae: if a state contains $j:A$, for instance, then all its  $\canr_\gamma$-successors must contain $A$. As we mentioned, we will then be able to prove that, for each formula, being an element of a state $w\in \canw$ and being true at $w$ coincide.

\begin{definition}[Canonical Model]\label{def:canonical-model}
The canonical model $\canm$ is the quintuple $( \canw , \canr, \canu ,\cani, \cane)$ where
\begin{itemize}
\item $\canw$ is the set of all maximally consistent, counterexemplar sets of formulae;
\item $\canr$ is defined as follows: 
\begin{itemize}
\item for any agent name $a$ and $w,v\in \canw$, $w\canr_a v$ if, and only if, $\know_a ^-(w)\subseteq v$ where $\know_a ^-(w)=\{A\mid \know _a  A \in w\}$, and 
\item for any $w,v\in \canw$, $w\canr_\gamma v$ if, and only if, $\gamma ^-(w)\subseteq v$ where $\gamma ^-(w)=\{A\mid j:A \in w\}$ for any justification $j$;
\end{itemize}

\item $\canu$ is the set $\{\underline{t}\mid t \text{ is a term}\}$;

\item $\cani$ is the function such that 
\begin{itemize}
\item for any state $w\in \canw$ and term $t$,  $\cani _w (t)=\underline {t}\in \romu ^\romc$,

\item for any state $w\in \canw$, $(\underline{t_1}, \underline{t_2})\in \cani _w (\lid )$ if, and only if, $t_1\lid t_2\in w$,

\item for any state $w\in \canw$ and predicate symbol $P$ of arity $n$,  $(\underline{t_1} , \dots  , \underline{t_n})\in\cani _w (P)$ if, and only if,  $P(t_1 , \dots  , t_n)\in w $;
\end{itemize}

\item $\cane$ is the function such that 
\begin{itemize}
\item for any state $w\in \canw$, term $j$ which is not a variable, and formula $A$,  $A\in \cane _w (j)$ if, and only if, $j:A\in w$.
%\item for any state $w\in \canw$ and formula $A$, if $j_1:A_1,  \dots , j_n:A_n \vdash A$ holds, then $A\in \cane _w (x^{j_1 , \dots , j_n})$.
\end{itemize}
\end{itemize}
\end{definition}

\begin{lemma}\label{lem:canonical-model-model} The canonical model $\canm$ is a model.
\end{lemma}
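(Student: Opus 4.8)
The plan is to check, clause by clause, that the quintuple $\canm=(\canw,\canr,\canu,\cani,\cane)$ of Definition \ref{def:canonical-model} meets every requirement of Definitions \ref{def:frame} and \ref{def:model}. The toolkit is small: Lemma \ref{lem:ultrafilter} (to see $\canw\neq\emptyset$ --- the empty set is consistent, e.g.\ by Corollary \ref{cor:soundness} since no state verifies $\bot$, hence it extends to an MCC), Lemma \ref{lem:closure} (closure of MCCs under $\vdash$), Lemma \ref{lem:maximality} (an MCC contains $A$ or $\neg A$), and the rules of the calculus. Throughout, the recurring move would be: a rule of the calculus shows that from formulae already lying in a state $w$ one derives a further formula $C$, and Lemma \ref{lem:closure} then puts $C\in w$; occasionally Lemma \ref{lem:maximality} is used first to manufacture a needed negation, and the consistency of $w$ to close a case by contradiction.

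\emph{Frame conditions.} Non-emptiness of $\canw$ and $\canu$ is immediate. Reflexivity of $\canr_a$ and of $\canr_\gamma$ I would get from the factivity rules (third epistemic rule, second justification rule) and closure: $\know_a A\in w$ yields $w\vdash A$, hence $A\in w$, i.e.\ $\know_a^-(w)\subseteq w$, and likewise for $\gamma^-(w)$. Transitivity of $\canr_\gamma$ uses the fourth justification rule: from $j:A\in w$ we obtain $!j:(j:A)\in w$, so $j:A\in\gamma^-(w)$; thus if $\gamma^-(w)\subseteq v$ and $\gamma^-(v)\subseteq u$ then $j:A\in v$, whence $A\in\gamma^-(v)\subseteq u$. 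For the Euclidean property of $\canr_a$: assume $\know_a^-(w)\subseteq u$, $\know_a^-(w)\subseteq v$ and $\know_a B\in u$; if $B\notin v$ then $\know_a B\notin w$, so $\neg\know_a B\in w$ by Lemma \ref{lem:maximality}, and the fifth epistemic rule plus closure give $\know_a\neg\know_a B\in w$, hence $\neg\know_a B\in\know_a^-(w)\subseteq u$, contradicting the consistency of $u$; therefore $B\in v$, i.e.\ $u\canr_a v$.

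\emph{Interpretation and evidence function.} Rigidity ($\cani_w(t)=\cani_v(t)=\underline t$), the fact that every element of $\canu$ is named, and $\cani_w(t)\in\canu$ are immediate from the definitions of $\cani$ and $\canu$. That each $\cani_w(\lid)$ is an equivalence relation on $\canu$ follows from the remaining three identity rules of Figure \ref{fig:id-rules} and closure (reflexivity since $t\lid t$ is a theorem, symmetry and transitivity by the corresponding rules, using that distinct terms name distinct objects). For the clause that $\cani_w(P)$ is closed under uniform replacement of identicals: given $P(t_1,\dots,t_n)\in w$ and $t_i\lid s\in w$, I would choose a variable $x$ not occurring in any $t_j$ and let $A$ be the atom obtained from $P(t_1,\dots,t_n)$ by replacing every occurrence of $t_i$ by $x$; then $A(t_i/x)=P(t_1,\dots,t_n)$ while $A(s/x)$ is exactly the uniformly replaced atom, so the substitution-of-identicals rule and closure put it in $w$. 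Finally, the four conditions on $\cane$ are read off directly from the rules of Figure \ref{fig:just-rules} plus closure: monotonicity along $\canr_\gamma$ and the proof-checker clause from the fourth justification rule (as in the transitivity argument above), the application clause from the first justification rule, the Leibniz clause from the last two justification rules; for a variable term $j$ one reads $\cane_w(j)=\{A\mid j:A\in w\}$ as well, and all instances go through for the same reasons.

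The step I expect to cost the most thought is not any single verification but the interplay around $\canr_\gamma$: since its definition transfers a formula $A$ only when \emph{some} justification witnesses it at the source state, to obtain both transitivity of $\canr_\gamma$ and the monotonicity condition on $\cane$ one must transfer the compound formula $j:A$ rather than $A$ itself, and this is precisely where axiom~4 (from $j:A$ derive $!j:(j:A)$) is indispensable. The only other genuinely non-routine point is the Euclidean argument for $\canr_a$, which is the one place where Lemma \ref{lem:maximality} and the S5 axiom~5 together do real work; everything else reduces to one application of a single calculus rule followed by Lemma \ref{lem:closure}.
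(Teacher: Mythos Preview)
Your proposal is correct and follows essentially the same route as the paper: both verify the frame and model clauses one by one, invoking precisely the calculus rules you name together with Lemma~\ref{lem:closure} (and, for the Euclidean clause, Lemma~\ref{lem:maximality}). The only discrepancies are cosmetic rule-numbering slips---what you call the ``fourth'' justification rule and the ``fifth'' epistemic rule are the third and fourth in Figures~\ref{fig:just-rules} and~\ref{fig:epi-rules}---and you are in fact slightly more scrupulous than the paper in flagging non-emptiness of $\canw$ and the handling of variable justification terms in $\cane$.
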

\begin{proof}The proof is quite long but no surprising technique is employed. See Appendix \ref{appendix}.
\end{proof}

We now prove a useful lemma that guarantees that each state containing $\neg \know_a A$ is  related by $\canr_a$ to a state containing $\neg A$. This will be essential to show that the truth of $\neg \know_a A$ at a state depends on the fact that $\neg \know_a A$ is contained in the state, if we see it as an MCC.

\begin{lemma}[Existence lemma]\label{lem:existence}
For any $w\in \canw$, if $\neg \know  _a A\in w $, then there exists a $v\in \canw$ such that $\neg A\in v$ and $w\canr_a v$.
%%NecJustifications
%For any $w\in \canw$, 
%\begin{itemize}
%\item if $\neg \know  _a A\in w $, then there exists a $v\in \canw$ such that $\neg A\in v$ and $w\canr_a v$; 
%\item if $\neg j: A\in w $, then there exists a $v\in \canw$ such that $\neg A\in v$ and $w\canr_\gamma v$.
%\end{itemize}
\end{lemma}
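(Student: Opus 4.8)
The plan is to use the standard maximally-consistent-extension strategy. Given $w \in \canw$ with $\neg \know_a A \in w$, I would form the set $\Sigma = \know_a^-(w) \cup \{\neg A\}$, where $\know_a^-(w) = \{B \mid \know_a B \in w\}$ as in Definition \ref{def:canonical-model}. If $\Sigma$ can be shown to be consistent, then by Lemma \ref{lem:ultrafilter} it extends to an MCC $v = \Sigma^*$; this $v$ is a state of $\canm$, it contains $\neg A$ since $\neg A \in \Sigma \subseteq v$, and since $\know_a^-(w) \subseteq \Sigma \subseteq v$ we obtain $w \canr_a v$ directly from the definition of $\canr_a$. That is exactly the conclusion required, so the whole argument reduces to the consistency of $\Sigma$.

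I would establish the consistency of $\Sigma$ by contradiction. If $\Sigma \vdash \bot$, then by Definition \ref{def:der} there is a finite multiset $B_1, \dots, B_n$ with each $\know_a B_i \in w$ such that $B_1, \dots, B_n, \neg A \Rightarrow \bot$ is derivable. One implication introduction followed by the double-negation elimination rule of Figure \ref{fig:con-quant-rules} yields a derivation of $B_1, \dots, B_n \Rightarrow A$.

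The key step is to promote this to $\know_a B_1, \dots, \know_a B_n \Rightarrow \know_a A$. For each $i$, starting from the axiom $\know_a B_i \Rightarrow \know_a B_i$ and applying the T-rule (the third epistemic rule of Figure \ref{fig:epi-rules}) gives $\know_a B_i \Rightarrow B_i$; composing these with $B_1, \dots, B_n \Rightarrow A$, using that cut is derivable in the calculus via implication introduction and elimination, yields $\know_a B_1, \dots, \know_a B_n \Rightarrow A$. Now the epistemic necessitation rule (the first rule of Figure \ref{fig:epi-rules}, applicable precisely because every open assumption is of the form $\know_a(\cdot)$) delivers $\know_a B_1, \dots, \know_a B_n \Rightarrow \know_a A$. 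Since every $\know_a B_i$ belongs to $w$ and $w$ is closed under derivability by Lemma \ref{lem:closure}, we conclude $\know_a A \in w$; together with $\neg \know_a A \in w$ this contradicts the consistency of $w \in \canw$. Hence $\Sigma$ is consistent and the lemma follows.

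I expect the only real subtlety to be that the necessitation rule in this calculus has the restricted form requiring all open assumptions to already be boxed, so one cannot simply necessitate $B_1, \dots, B_n \Rightarrow A$; the detour through the T-rule together with the derivability of cut, used to rewrite each antecedent $B_i$ as $\know_a B_i$, is the essential manoeuvre. Everything else is routine bookkeeping with the definitions of $\canr_a$ and $\know_a^-$ and the MCC machinery of Lemmas \ref{lem:ultrafilter}--\ref{lem:closure}.
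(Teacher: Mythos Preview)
Your consistency argument for $\Sigma = \know_a^-(w) \cup \{\neg A\}$ is correct and matches the paper's own base case. The gap lies in the next step, where you invoke Lemma~\ref{lem:ultrafilter} to extend $\Sigma$ to an MCC and then assert ``this $v$ is a state of $\canm$''. In the paper's setup, Lemma~\ref{lem:ultrafilter} works by adjoining a countable supply of \emph{new} variables in order to produce the witnesses required for the counterexemplar property. So the $\Sigma^*$ you obtain lives in a strictly larger language than $w$, and you have not argued that such a set belongs to $\canw$, nor that the relation $\canr_a$ and the rest of the canonical-model machinery behave well across this language change. This is exactly the well-known difficulty in first-order modal completeness proofs: once $w$ is already an MCC it may mention every variable available, leaving no fresh ones for the Henkin witnesses of $v$.

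The paper does not sidestep this; it confronts it directly. After establishing the consistency of $\know_a^-(w)\cup\{\neg A\}$ (your argument), it builds an increasing sequence $G_0=\neg A$, $G_{n+1}=G_n\wedge(\neg\forall x D\rightarrow\neg D[y/x])$ in which, for each universal formula $\forall x D$ of the language, a witness $y$ already occurring in $w$ is chosen so that $\know_a^-(w)\cup\{G_{n+1}\}$ remains consistent. The existence of such a $y$ inside $w$'s language is the substantive step: it uses that $w$ itself is counterexemplar and, crucially, that the Barcan formula $\forall x\,\know_a A\rightarrow\know_a\forall x A$ is derivable in the calculus, allowing a universal quantifier to be pushed inside $\know_a$. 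Only after pre-packaging all Henkin witnesses in this way does the paper invoke Lemma~\ref{lem:ultrafilter}, and now no fresh variables are needed, so the resulting MCC lives in the same language as $w$ and is genuinely an element of $\canw$. Your proposal is missing this entire manoeuvre; what you call ``routine bookkeeping'' is in fact the heart of the proof here.
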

\begin{proof} {\ehi The proof is an adaptation of the proofs in \citep[Theorem 4.20]{brv01} and \citep[Theorem 14.1]{hughes-cresswell-1996}
}. See Appendix \ref{appendix}.
\end{proof}

We now show that the states of $\canm$ verify exactly the formulae that they contain if we see them as MCCs. 

\begin{lemma}[Truth lemma]\label{lem:truth}For any $w\in \canw$ and formula $A$, $A\in w$ if, and only if, $\canm , w, \cani_w \Vdash A$.
\end{lemma}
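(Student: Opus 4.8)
The plan is to prove the Truth Lemma by induction on the structure of the formula $A$, establishing both directions ($A\in w$ iff $\canm,w,\cani_w\Vdash A$) simultaneously. The atomic cases are handled directly by the definition of the canonical model: for $P(t_1,\dots,t_n)$ we have $\canm,w,\cani_w\Vdash P(t_1,\dots,t_n)$ iff $(\underline{t_1},\dots,\underline{t_n})\in\cani_w(P)$ iff $P(t_1,\dots,t_n)\in w$ by the definition of $\cani$; the case of $t_1\lid t_2$ is identical using the clause for $\cani_w(\lid)$. The case $\bot$ is immediate since $\bot\notin w$ (as $w$ is consistent) and $\canm,w,\cani_w\not\Vdash\bot$ always.

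For the propositional connectives $\rightarrow$ and $\wedge$, I would use Lemma \ref{lem:maximality} (every MCC contains $A$ or $\neg A$) together with Lemma \ref{lem:closure} (MCCs are closed under derivability). For instance, for $A\wedge B$: if $A\wedge B\in w$ then by $\wedge$-elimination and closure both $A\in w$ and $B\in w$, so by IH both hold at $w$, hence $A\wedge B$ holds; conversely if $A\wedge B\notin w$ then $\neg(A\wedge B)\in w$ by maximality, which (by closure) forces $\neg A\in w$ or $\neg B\in w$, so by IH $A\wedge B$ fails. The $\rightarrow$ case is analogous. For $\forall x.A$: if $\forall x.A\in w$, then for every term $t$, $A[t/x]\in w$ by $\forall$-elimination and closure; since $\canu=\{\underline t\mid t\text{ a term}\}$, every $x$-variant $g$ of $\cani_w$ has $g(x)=\underline t$ for some $t$, and by Lemma \ref{lem:subst-variant-corr} plus IH we get $\canm,w,g\Vdash A$, so $\forall x.A$ holds. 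Conversely if $\forall x.A\notin w$, then $\neg\forall x.A\in w$ by maximality, and since $w$ is \emph{counterexemplar} there is a term $t$ with $\neg A[t/x]\in w$; by IH $A[t/x]$ fails at $w$, and again by Lemma \ref{lem:subst-variant-corr} this exhibits an $x$-variant at which $A$ fails.

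For the modal case $\know_a A$: if $\know_a A\in w$, then $A\in\know_a^-(w)$, so $A\in v$ for every $v$ with $w\canr_a v$; by IH $A$ holds at each such $v$ under $\cani_v$, and since $\cani_w$ and $\cani_v$ agree on terms while $(\cani_w)_{w\hookrightarrow v}$ uses $\cani_v$ on predicates, one checks $(\cani_w)_{w\hookrightarrow v}=\cani_v$, so $\canm,w,\cani_w\Vdash\know_a A$. Conversely if $\know_a A\notin w$, then $\neg\know_a A\in w$ by maximality, and the Existence Lemma \ref{lem:existence} yields $v$ with $w\canr_a v$ and $\neg A\in v$; by IH $A$ fails at $v$, so $\know_a A$ fails at $w$. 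The justification case $j:A$ is similar but has an extra conjunct: for $\canm,w,\cani_w\Vdash j:A$ we need both that $A$ holds at every $\canr_\gamma$-successor and that $A$ (suitably instantiated) lies in $\cane_w(j)$. If $j:A\in w$ then $A\in\gamma^-(w)$ gives the first conjunct via IH and $\cani_v=(\cani_w)_{w\hookrightarrow v}$; for the second, $A\in\cane_w(j)$ directly by the definition of $\cane$, and since $\cani_w(t_i)=\underline{t_i}$ the required instantiation condition $\cani_w(t_i)=\cani_w(x_i)$ is trivially satisfiable by taking $t_i=x_i$. Conversely, if $\canm,w,\cani_w\Vdash j:A$, then in particular $A[t_1/x_1\dots t_n/x_n]\in\cane_w(j)$ for terms $t_i$ with $\cani_w(t_i)=\cani_w(x_i)$, i.e.\ $\underline{t_i}=\underline{x_i}$ hence $t_i=x_i$, so $A\in\cane_w(j)$, meaning $j:A\in w$.

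The main obstacle I anticipate is the bookkeeping around assignments versus interpretation functions: one must carefully verify that $(\cani_w)_{w\hookrightarrow v}$ coincides with $\cani_v$ (using that $\cani$ is term-rigid and that $\hookrightarrow$ overwrites only the predicate values), and that Lemma \ref{lem:subst-variant-corr} can be invoked in the quantifier cases with the hypothesis $g(x)=\cani_w(x)$ genuinely met. A secondary subtlety is the justification case, where the semantic clause for $j:A$ mixes a $\canr_\gamma$-universal condition with a syntactic membership condition in $\cane_w(j)$; one must check these are handled by two \emph{different} parts of the canonical construction ($\canr_\gamma$ and $\cane$ respectively) and that the free-variable instantiation clause collapses harmlessly because the canonical domain consists exactly of term-denotations $\underline{t}$. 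I would relegate the full verification to the appendix, as the paper does for the surrounding lemmas.
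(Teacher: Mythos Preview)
Your proposal is correct and follows essentially the same approach as the paper's proof: induction on the structure of $A$, with the atomic cases handled directly by the definition of $\cani$, the propositional cases by Lemmas~\ref{lem:maximality} and~\ref{lem:closure}, the quantifier case by counterexemplarity (plus the surjectivity of $t\mapsto\underline{t}$), the $\know_a$ case by the Existence Lemma, and the justification case by the definition of $\cane$. If anything you are slightly more explicit than the paper about the identification $(\cani_w)_{w\hookrightarrow v}=\cani_v$ and about invoking Lemma~\ref{lem:subst-variant-corr} in the $\forall$ case, but these are exactly the right details to flag.
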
\begin{proof} The proof is by induction on the structure of $A$. The argument is rather standard, it often employs Lemma \ref{lem:maximality}, and Lemma \ref {lem:existence} is crucial for the case in which  $A=\know_a B$. The argument for the case in which $A=j:B$, on the other hand, almost exclusively relies on the fact that $w$ verifies $j:B$ if, and only if, $B\in \cane_w(j)$ and on the fact that $j:B\in w$ if, and only if, $B\in \cane_w(j)$. See Appendix \ref{appendix} for all the details.
\end{proof}

We can finally prove that the completeness of the calculus follows from the previous results and definitions.

\begin{theorem}[Completeness]\label{thm:completeness}
For any multiset of formulae $\Gamma $ and formula $A$, if $\Gamma \vDash A $, then $\Gamma \vdash A$.
\end{theorem}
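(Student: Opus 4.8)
The plan is to prove the contrapositive: if $\Gamma\not\vdash A$, then $\Gamma\not\vDash A$, where $\Gamma$ is a finite multiset viewed as a set of hypotheses. First I would argue that $\Gamma\cup\{\neg A\}$ is consistent: if it were not, there would be a derivation with conclusion $\Gamma',\neg A\Rightarrow\bot$ for some multiset $\Gamma'$ drawn from $\Gamma$, hence by an implication introduction a derivation of $\Gamma'\Rightarrow\neg\neg A$, and then by the double-negation rule of Figure \ref{fig:con-quant-rules} a derivation of $\Gamma'\Rightarrow A$, contradicting $\Gamma\not\vdash A$. So $\Gamma\cup\{\neg A\}$ is a consistent set, and by Lemma \ref{lem:ultrafilter} it extends to a maximally consistent, counterexemplar set $w\in\canw$ with $\Gamma\cup\{\neg A\}\subseteq w$.

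Next I would invoke the Truth Lemma (Lemma \ref{lem:truth}) at the state $w$ of the canonical model $\canm$, which is a genuine model by Lemma \ref{lem:canonical-model-model}. Since every $G\in\Gamma$ lies in $w$, the Truth Lemma gives $\canm,w,\cani_w\Vdash G$ for each $G\in\Gamma$. Since $\neg A\in w$, the Truth Lemma gives $\canm,w,\cani_w\Vdash\neg A$, and because $\neg A$ abbreviates $A\rightarrow\bot$ and $\bot$ is forced nowhere, this is exactly $\canm,w,\cani_w\not\Vdash A$. Thus $w$ is a state of a model that verifies all of $\Gamma$ but not $A$, so by Definition \ref{def:sem-rel} we have $\Gamma\not\vDash A$, as required.

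I would close by remarking that this, together with Corollary \ref{cor:soundness}, establishes the promised exact correspondence $\Gamma\vdash A$ iff $\Gamma\vDash A$. I do not expect any serious obstacle in this final step: all the heavy lifting — the construction of the canonical model, the verification that it satisfies all the frame and model conditions (in particular the reflexive-Euclidean and reflexive-transitive accessibility relations and the closure conditions on $\cane$), the Existence Lemma handling $\neg\know_a$, and the inductive Truth Lemma — has already been carried out in Lemmas \ref{lem:ultrafilter}--\ref{lem:truth}. The only mild point of care is the bookkeeping that $\Gamma$ is a multiset while the canonical-model states are sets, and that derivability $\Gamma\vdash A$ is defined via \emph{some} finite multiset $\Gamma'\subseteq\Sigma$; this is handled exactly by the implication-introduction/elimination manipulations sketched above, which stay within the finite multiset $\Gamma'$ used to witness the (hypothetical) derivation of $\bot$.
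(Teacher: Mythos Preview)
Your proposal is correct and follows essentially the same approach as the paper's own proof: prove the contrapositive, show $\Gamma\cup\{\neg A\}$ is consistent (via implication introduction and double-negation elimination), extend it to an MCC by Lemma~\ref{lem:ultrafilter}, and then invoke the Truth Lemma~\ref{lem:truth} on the canonical model to obtain a state verifying $\Gamma$ and falsifying $A$. Your account is in fact slightly more explicit than the paper's in spelling out the $\neg\neg A$ intermediate step and the multiset/set bookkeeping, but the strategy is identical.
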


\begin{proof}
We prove the contrapositive: if $\Gamma \not\vdash A$, then $\Gamma \not\vDash A $.

In order to show that $\Gamma \not\vDash A $, we need to find a state in model that verifies each element of  $\Gamma$ and does not verify $A$. That is, a state that verifies $\Gamma \cup \{\neg A\}$. If $\Gamma \not\vdash A$, then we know that $\Gamma \cup \{\neg A\}$ is a consistent set of formulae. Indeed, otherwise, we could derive $\bot$ from $\Gamma \cup \{\neg A\}$ and by applying a double negation elimination to the derivation witnessing the derivability of $\bot $ from $\Gamma \cup \{\neg A\}$ we could show that $\Gamma \vdash A$. 
Since $\Gamma \cup \{\neg A\}$ is consistent, by Lemma \ref{lem:ultrafilter}, we know that there exists an MCC $\Sigma$ such that $\Gamma \cup \{\neg A\}\subseteq \Sigma$. By Definition \ref{def:canonical-model}, we know that $\Sigma=w\in \canw$. By Lemma \ref{lem:truth}, moreover, we know that $\canm , w, \cani_w\Vdash G $ for each $G\in \Gamma$ and  $\canm , w, \cani_w \not\Vdash A$. Hence, $w$ is exactly what we needed: a state in a model that verifies $\Gamma$ and falsifies $A$. 
\end{proof}

\section{Towards a Theory of Computational Trust}
\label{sec:theory}

As already shown in Section \ref{sec:hyper}, even though the presented system is quite simple and employs reasonably traditional mechanisms---that is, the machinery of epistemic logic and justification logic---it possesses rather interesting and non trivial formal features. As a consequence, its expressive power enables us to draw distinctions concerning extensional, intensional and hyperinstensional identities that are similar to those that we can draw in much more complex and expressive logical systems. This is particularly interesting with respect to the problem of formalising intensional and hyperintensional properties and relations. Indeed, the system enables us to formalise properties and relations of these kinds without the burden of the additional notation required by more complicated systems such as those introduced by \citet{mon69,mon70,fit06,gal16,fm23}. In other words, the presented system provides us with the expressive power required to formalise certain intensional and hyperintensional properties and relations but nothing more, thus yielding a reasonably simple and yet expressive enough framework for formalising them. 

In order to exemplify an application of this kind, we will now consider the problem of formally characterising a relation of computational trust. More detailed and in-depth applications of the system to similar problems are left for future work. Before discussing the technical details related to the formalisation in our system of a predicate for computational trust, though, let us review existing works sharing a similar aim.

\subsection{Related work: formalisation of trust}
\label{sec:related-work}

Before presenting our formalisation of trust in the presented logical system, let us briefly review the existing works concerning formal analyses of {\it trust}, {\it trustworthiness} and related negative notions such as {\it negative trust}, {\it mistrust} and {\it distrust}. The formal endeavours concerning the study of notions of trust in computational settings can be divided in three classes: one concerning the design of formal means to implement trust attitudes in computational systems involving several agents, one concerning the  definition of formal systems for assessing the trustworthiness of computational agents, and one concerning the definition of logical system to reason about the interplay among epistemic attitudes, information exchange and trust-related attitudes. The assignment of a work to a particular class can obviously be non-exclusive, but the obvious reference class of the system introduced in the present work is the third one. 

As for the design of formal means to implement trust attitudes in computational systems, we can certainly mention the system introduced by \citet{arh98}. This work focuses on distributed systems---computational systems consisting of different, more or less autonomous, agents that can communicate and possibly interact in different ways---and presents a formal language that enables the agents of a distributed system to exchange information about the reputation of other agents and their trustworthiness.

%What we aim at here is instead a logical framework in which we can represent and reason about the knowledge of agents and their trust attitudes.

As far as trustworthiness assessment by formal means is concerned, \citet{zl05} presented a class of network-structured  models for the study of the propagation of trust and ditrust in social networks. More recently, a logic approach to the same problem has been adopted by \citet{pcd23}. On the basis of a language similar to the one by \citet{pr14}---which we will discuss later--- \citet{pcd23} introduced a formal model for assessing the trustworthiness of agents on the basis of their interactions.

We now turn to those systems that have been introduced in order to reason about the interplay among epistemic attitudes, information exchange and trust-related attitudes.
First of all, an axiomatic system of this kind has been introduced by \citet{dem04} in order to formally capture properties of trust in a language featuring a belief and a knowledge operator, and an operator indicating that an agent informs another agent about some fact. While the work explores several possible features of the trust relation and ways of formalising them, no semantics or study of the formal properties of the system are provided. 
With a similar intent, \citet{sin11} introduced a propositional logic extended by a modality for trust. The system focuses on formalising the conceptual relations connecting trust, cooperative actions and, more generally, the interactions between different agents. Trust is not defined here by using other operators, though, but introduced as an elementary notion formalised by a primitive operator. A linear time semantics is also defined.
The work by \citet{pt12}, on the other hand, focuses on communication and its epistemic effects on agents. The system is a type theory in which it is possible to formalise trust as a property of communications. The logical system on which the calculus is based could be described as a propositional second-order one---see \cite[Chapter 11]{glt89} for a standard example of a system of this kind---and, indeed, the calculus features polymorphic terms, whose types are commonly interpreted as second-order logical formulae. \citet{pr14} introduced a propositional language in which atomic formulae represent resources that can be exchanged---or pieces of information that can be communicated---between agents. The language also features operators to describe exchanges of this kind. Trust, also in this system, is defined as a property characterising a communication: in this language, an agent can be said to trust an incoming resource. On the basis of a similar definition of trust, the semantics of several, related negative notions---such as mistrust and distrust---are formally introduced by \citet{pk16}. Similar negative notions of trust are analysed by extensions and adaptations of the formal machinery employed by \citet{pr14} to several negative notions of trust \citep{prbt17,pri20}---we also remark that the latter article contains an extensive review of the literature on trust and its formalisations. A language similar to the one used by \citet{pr14} has also been employed to introduce a type system for reasoning about the trustworthiness of nondeterministic computational processes \citep{dp21,dgp22}. A frame semantics for the systems by \citet{dp21} and \citet{dgp22} has also been introduced by \citet{kp24}.

The system introduced in the present work differs quite strongly from all these systems with respect to the formalism employed. Indeed, we use the a first-order system combining an epistemic logic and a justification logic and we define trust in terms of the modal operators of these two fragments. Moreover, also the way in which we will define trust diverges with respect to these works since, as shown below,  trust will be defined as a relation between an agent and another agent relatively to a property of the second agent.

Finally, a non-monotonic semantics has been introduced by \citet{kra15} in order to capture the formal behaviour of the trust relation. While this semantical framework is very expressive, it only provides a semantical analysis of the issue and requires a very rich language and a very complex semantical machinery, especially if compared to the system that we introduce in the present work.

%A non-monotonic semantics for trust has been introduced by \citet{kra15}. While this semantics constitute a very expressive framework, it only provides a semantical analysis of the issue and it is very complex, especially if compared to the system that we introduce in the present work

\subsection{Formalising computational trust}
\label{sec:computational-trust}

%<<<<<<< HEAD
%Let us finally demonstrate the expressive power of the presented system in relation to the problem of the formal characterisation of computational trust. For the sake of clarity, we repeat that by {\it computational trust} we mean here a relation that holds between an agent and a possibly nondeterministic computer program---or, more in general, computational process---when the agent trusts the program or process to achieve a certain task. From a technical perspective, we are going to define an operator $\trust$ that formalises the relation of computational trust: a formula of the form $\trust_a^t \,A$, with $t$ that occurs in $A$, will then express that agent $a$ trusts program $t$ with respect to the task specified by the formula $A$. For instance, suppose that the formula $P(t,u)$ expresses that $t$ correctly classifies all pictures occurring in list $u$ with respect to what is depicted in them. We can express by the formula $\trust _a^t \,P(t,u)$ that agent $a$ trusts $t$ to correctly classify all pictures occurring in list $u$ with respect to what is depicted in them.\footnote{Formally, we treat $\trust_a^t $ in this notation as a sentential operator, just as we do for $\know _a $ and $j:\;$.}
%=======
% {\ehhi smoothen the introduction}
Let us finally demonstrate the expressive power of the presented system
in relation to the problem of the formal characterisation of computational
trust. For the sake of clarity, we repeat that by computational trust we mean
here a relation that holds between an agent and a possibly nondeterministic computer program---or, more in general, computational process---when
the agent trusts the program or process to achieve a certain task. From a
technical perspective, we are going to define an operator $\trust$ that formalises the relation of computational trust: a formula of the form $ \trust^t_a  A$, with $ t$ that
occurs in $A$, will then express that agent $a$ trusts program $t$ with respect to
the task specified by the formula $A$.
% For instance, suppose that the formula
% P (t, u) expresses that t correctly classifies all pictures occurring in list u
% with respect to what is depicted in them. We can express by the formula
% Tt P (t, u) that agent a trusts t to correctly classify all pictures occurring in
% a
% list u with respect to what is depicted in them.11
% Let us now consider, from a technical point of view, the problem of defining an operator that formalises the relation of computational trust. First of all, by {\it computational trust} we mean here a relation holding between an agent and a possibly nondeterministic computer program---or, more in general, computational process---when the agent trusts the program or process to achieve a certain task. Suppose, for instance,
For instance, suppose that the formula $P(t,u)$ expresses that $t$ correctly classifies all pictures occurring in list $u$ with respect to what is depicted in them. We can express by the formula $\trust _a^t P(t,u)$ that agent $a$ trusts $t$ to correctly classify all pictures occurring in list $u$ with respect to what is depicted in them.\footnote{Formally, we treat $\trust_a^t $ in this notation as a sentential operator, just as we do for $\know _a $ and $j:\;$.}

Before introducing the definition of this operator, let us present an intuitive example that will be formalised, part by part, throughout the rest of the article.
\begin{example}\label{ex:ad}
Alice is very active politically and she often organises extreme left-wing gatherings on polarising topics. She also has a passion for victorian age love novels and organises tea parties for enthusiast readers with a similar inclination. Alice advertises the events she organises through social networks, but, rather obviously, she would like the advertising software to suitably target people for the two very different kinds of events. She would never want some of the conservative old ladies attending her tea parties to find out about her extremist political views. Alice has used for quite some time the advertising service $s$ of a social network with excellent results. Hence, she trusts it very much. She does not know, though, that also the advertising service $t$ of another social network on which she has a profile is based on the same software as $s$. As a consequence, one cannot say that she trusts $t$. When Alice finds out that a third social network, on which she has a lot of contacts, has started providing an advertising service $u$, she starts studying the new service experimenting with it and comparing it with her beloved $s$. After enough tests with satisfactory results, she starts trusting $u$ to be the same program as $s$. She is very happy. Indeed, it would be great for her to advertise her events on the third social network. Moreover, since $s$ and $u$ are essentially the same service, she trusts $u$ to be effective just as she trusts $s$ to be so.
%
%
%Formally, we model the situation by the hypotheses $\trust ^t_a C(t) $ where $C(x)$ means that $t$ is a good classifier, and  $\neg \know _a s\lid t $.  From these hypotheses, we cannot infer that $\trust ^t_a C(t) $. But since we also have the hypothesis $\know _a s\lid u $, we can infer that $\trust ^t_a C(t) $.
\end{example}

The operator $\trust$ can be simply introduced in the language of our logic by extending the last line of the formal grammar presented in Section \ref{sec:language} as follows:
\begin{align*}
\\\text{Formulae }A,B,C\dots   \;\; ::=  \;\; & \bot  \;\;\mid \;\;t\lid s\;\;\mid\;\; P(t_1,\dots ,t_n)\;\;\mid \;\;A\rightarrow B  \;\;\mid\;\; A\wedge B\\
&  \forall x. A  \;\;\mid\;\; \know _a A   \;\;\mid\;\;  j:A    \;\;\mid\;\;   \trust _a^u C
\end{align*}where $P$ is an $n$-ary predicate symbol; $a$ is an agent name; $t,s,u, t_1 , \dots , t_n$ are terms; and $u$ must occur in $C$.

The rules defining the $\trust _a^u $ operator can then be defined as shown in Figure \ref{fig:trust-rules}.
\begin{figure}[h]\centering
\[ \infer{\Gamma \Rightarrow \trust ^t _a A  }{\Gamma \Rightarrow \know _a j:A} 
\qquad\qquad \infer{\Delta \Rightarrow  \know _a x:B}{\Delta \Rightarrow  \trust ^t _a B}  
%\infer{\Gamma ,\Delta  \Rightarrow \trust ^s _a B[s/x]  }{\Gamma    \Rightarrow \trust^s _a t\lid s   & \Delta \Rightarrow \trust ^t _a B[t/x]} 
\]where $t$ must occur in $A$ and $x$ does not occur in $B$ or in  $\Delta$
%and $x$ must be free in $B$
\caption{Definition of the trust operator}
\label{fig:trust-rules}
\end{figure}

Notice that $t$ must occur in $A$ and, on the other hand, $j$ does not necessarily occur in $\trust ^t_a A$. This means that we analyse trust of agent $a$ towards program $t$ with respect to the property $A$ of $t$ as an attitude of $a$ towards $t$ based on $a$'s knowledge that a justification of $A $ exists. The justification $j$ is indeed abstracted out, as it were, when we introduce the trust operator. As a consequence, if we want to eliminate $\trust$, we can simply infer from it the formula $\know _a x:B$ in which $x$ is a fresh variable. This indicates that a justification exists but we do not know its form. 

We, moreover, extend the necessitation rule for $\know $ as show in Figure \ref{fig:new-nec}.
\begin{figure}[h]\centering
\[ \infer{ \know _a A _1 , \dots , \know _a A_n , \trust ^{t_1} _a B _1 , \dots , \trust ^{t_n} _a B_n \Rightarrow \know _a A  }{\know _a A _1 , \dots , \know _a A_n , \trust ^{t_1} _a B _1 , \dots , \trust ^{t_n} _a B_n \Rightarrow A} \]
\caption{New version of the necessitation rule for $\know$}
\label{fig:new-nec}
\end{figure}

The reason behind this new version of the rule is simple: $\trust ^t_a B$ is an abbreviation of $\know_a j:B$ and hence it does not create problems if used inside a context that should only contain formulae of the form $\know_a A $.\footnote{The $\trust_a$ operator can be simply seen as an abbreviation of $\know_a j: $ both from a syntactic and from a semantic point of view. This is the reason why we do not introduce it directly in the language of the logic. Nevertheless, for the sake of clarity and precision, and considering the non-obvious notational handling of $j, x$ and $t$---see Figure \ref{fig:trust-rules}---we explicitly present rules defining it.}

Since, as shown in Section \ref{sec:hyper}, our justification logic is hyperintensional and our epistemic logic is intensional with respect to identity---that is, agents might not know that two objects are identical even though they are at the considered state---also our formal analysis of trust regards it as a hyperintensional notion. That is, even though two terms provably denote the same object---that is, they are provably identical---an agent might trust one and not the other. This intuitively reflects the fact that computational trust, as argued in Section \ref{sec:intro} and especially in the case of opaque and nondeterministic computation, intimately depends on the way an object is presented.

\begin{example}\label{ex:ad2}
We formally model the situation presented in Example \ref{ex:ad}. Let $C$ be a designated predicate symbol such that $C(x)$ means that $x$ is a good classifier of targets for advertising campaigns. Then, our formalisation comprises the following hypotheses:
\begin{itemize}
\item $ \trust ^s_a C(s) $ meaning that Alice---denoted by the agent name $a$---trusts $s$ to be a good classifier of suitable targets for advertising campaigns,
\item $\neg \trust ^t _a s\lid t $ meaning that Alice does not trust $t$ to be the same program as $s$, and 
\item $\trust _a ^u s\lid u $ meaning that Alice trusts $u$ to be the same program as $s$.
\end{itemize}
We first show that we can derive from these hypotheses that $a$ trusts $u$ to be a good classifier:
\[
\infer{ \trust _a ^u s\lid u , \trust ^s_a C(s) \Rightarrow \trust ^u_a C(u)}{\infer{\trust _a ^u s\lid u , \trust ^s_a C(s) \Rightarrow \know _a x: C(u)}{ 
\infer{\trust _a ^u s\lid u , \trust ^s_a C(s) \Rightarrow x: C(u)}{
\infer{\trust _a ^u s\lid u  \Rightarrow y: s\lid u }{\infer{\trust _a ^u s\lid u \Rightarrow \know _a y: s\lid u}{\infer{\trust _a ^u s\lid u \Rightarrow \trust _a  s\lid u}{}}}
&
\infer{ \trust ^s_a C(s) \Rightarrow x: C(s)}{\infer{\trust ^s_a C(s) \Rightarrow \know _a x: C(s)}{\infer{\trust ^s_a C(s) \Rightarrow \trust ^s _a  C(s)}{}}}
}
}}
\]
Moreover, there is clearly no way of deriving the sequent $\neg \trust ^t _a s\lid t , \trust ^u _a s\lid u , \trust ^s_a C(s) \Rightarrow \trust ^t_a C(t)$. Indeed, semantically, we could very well have:
\begin{itemize}
\item a state $w$ in a model such that $C(s)$ holds and there exists a justification for it at all states reachable from $w$ by $\romr_a$, but also 
\item a state $v$ such that
\begin{itemize}
\item $w\romr_a v$,
\item $s\lid t$ does not holds at $v$, 
\item there exists no justification for $C(t)$ at $v$  or for any formula $C(r)$ such that $t\lid r$ holds.
\end{itemize}
\end{itemize}
The absence of a justification for $C(t)$---and for $C(r)$ where $t\lid r $---at a successor $v$ of $w$ would make the formula $\trust ^t_aC(t)$ false at $w$.
\end{example}

In order to fully develop an applicable theory of computational trust, we also need to define a suitable language concerning computation. Let us then now introduce some useful predicates to express properties of programs and, in particular, nondeterministic ones. The intended use of these predicates is to formalise theories of computation that represent the knowledge of one or more agents about programs. After showing how to encode such a theory, we will also show how to employ it in the presented proof-theoretical and semantical systems in order to represent a set of truths about programs and computation of which an agent has knowldge. Before proceeding, though, let us introduce some useful notation.
\begin{definition}
A term context $\calt$ is a term $t$ containing any number of free occurrences of a designated variable $x$. For any term $s$, we denote by  $\calt[s]$ the result of uniformly replacing all occurrences of $x$ in $\calt$ by $s$. Let us denote the set of all term contexts by $\mathrm{Term~contexts}$.\end{definition}

Let us now briefly explain what kind of relations we would like to encode in our logic.
The most basic computational relation that we have in $\lambda$-calculus is the one-step {\it evaluation relation} $\mapsto$---also called one-step {\it reduction relation}---which is defined by the following $\beta$-reduction rule:\[\calt [(\lambda x.t )s]\mapsto \calt[ t^{s/x}] \]where $\calt $ is a term context and $t^{s/x}$ is the term obtained by uniformly replacing all free occurrences of $x$ in $t$ by $s$.\footnote{This term is the usual result of a $\beta$-reduction in $\lambda$-calculus  \cite{chu32,chu36,bar84}. We do not denote it by $t[s/x]$, as usually done, only because the definition of $t[s/x]$ employed in this work---which is meant to formalise substitutions inside formulae but not inside terms---is not the one suited for $\lambda$-calculus $\beta$-reductions.} This rule essentially determines how a $\lambda$-application triggers a reduction of the term containing it. Intuitively, in the term to the left of $\mapsto$ we have that the function $(\lambda x.t )$ is applied to its argument $s$ and, in the term to the right of $\mapsto$, we have the result of replacing all free occurrences of $x$ by $s$ inside the body $t$ of the function. This is how functions work in $\lambda$-calculus, nothing more is required as far as deterministic computation is concerned.

On the basis of this, we can also define the reducibility relation $\mapsto^*$ as the reflexive and transitive closure of $\mapsto$. The  relation $t\mapsto ^*t'$ indicates that $t$ can reduce to $t'$ in a number of immediate reduction steps, possibly $0$. In other words, $\mapsto^* $ relates any term $t$ to its output (if one exists\footnote{What formalises outputs  in $\lambda$-calculus are those terms that do not reduce to any other term. These are often called {\it values}. Some $\lambda$-calculus terms, though, are non-terminating; which means that they can never reduce to a value.}) and to all the intermediate results that one might go through while reducing $t$.

%Now, the simplest way of encoding a reduction relation among programs is by employing the identity predicate $\lid$ of our logic. 

Now, we can define these relations in our system by specifying their behaviour in a possibly infinite set of formulae $\Sigma$. In order to know what follows from some assumptions $\Gamma$ and these definitions, it is enough to check, for any formula of interest $A$, whether $\Sigma\cup \Gamma\vdash A$. 

We now formally show how the evaluation relations can be defined in our language, but, first of all, let us define term substitution. 
\begin{definition}[Term substitution]For any terms $s, t$ and $u$,
\begin{itemize}
\item $x^{s/x}=s$
\item $y^{s/x}=y$ when $y\neq x$
\item $(tu)^{s/x}=t^{s/x}u^{s/x}$
\item $(\lambda x. t)^{s/x}=\lambda x. t$
\item $(\lambda y. t)^{s/x}=\lambda y. t^{s/x}$ when $y\neq x$
\item $(!t)^{s/x} = !t^{s/x}$
\end{itemize}As usual, we assume that capture of variables is avoided by renaming bound variables.
\end{definition}

\begin{definition}[Syntactic definitions $\Sigma_{\mapsto}$ and $\Sigma_{\mapsto^*}$ of the deterministic evaluation relations]
Given two designated binary predicate symbols $\mapsto$ and $\mapsto^*$, let $\Sigma _{\mapsto}$ be the set of formulae defined as follows:
\[\Sigma _{\mapsto}=\{\calt [(\lambda x.t )s]\mapsto \calt[ t^{s/x}] \;\mid\; t,s\in \mathrm{Terms}, \;  \calt\in \mathrm{Term~contexts}  \}\]

Let $\Sigma _{\mapsto^*}$ be the smallest set $\Sigma $ verifying the following conditions: 
\begin{itemize}
\item If $t\in \mathrm{Terms}$ then $t\mapsto^* t\in \Sigma $. 

\item If $t\mapsto t' \in \Sigma _{\mapsto}$ then $t\mapsto^* t'\in \Sigma $. 

\item If $t\mapsto ^*t', t'\mapsto^* t''\in \Sigma $ then $t\mapsto^* t''\in \Sigma $.
\end{itemize}

\end{definition}

Let us now define the notion of program identity induced by the relation $\mapsto^*$, which is what matters most to us with respect to  computational trust. As usual, we define two programs as identical if one is reducible to the other. The intuition behind this identification is that one of the two programs can be considered as an intermediate result which can be obtained during a run of the other. Hence, they can be seen as the same program at different stages of the computation. 

\begin{definition}[Syntactic definitions $\Sigma _{\lambda =}$ for program identity]
Let $\Sigma _{\lambda =}$ be the set of formulae defined as follows:
\[\Sigma _{\lambda =}=\{t\lid s \;\mid\; t,s\in \mathrm{Terms}, \;  t\mapsto s, s\mapsto t\in \Sigma _{\mapsto}  \}\]
\end{definition}

%We specify this by a formula of the following form:\[\calt [(\lambda x.t)s]\lid  \calt [t^{s/x}]\]for any term context $\calt$ and terms $t,s$.
%In order to account for renaming of bound variables, which does not influence in any way the behaviour of a term, we can also introduce a formula of the following form for any term $t$:\[t\lid t'\]where $t'$ is obtained by replacing all occurrences of any bound variable of $t$ by another variable, without the new one being captured by another binder, obviously. 
Notice that, since $\lid$ is closed under reflexivity, symmetry and transitivity both in the calculus and in the semantics, we obtain exactly what we had aimed for---that is, an equational theory of computation---even without reference to the reflexive and transitive closure $\mapsto^*$ of the relation $\mapsto$.\footnote{We do not go into details here, but we remark that it is easy to also introduce formulae guaranteeing that the equational theory is also closed with respect to $\alpha$-equivalence and $\eta$-expansion, if required.} To be precise, though, what we have now is an equational theory of deterministic computation. Indeed, the only operator that we have here is $\lambda$, which, considering that pure $\lambda$-calculus is confluent and sequential \citep[Chapters 11.1 and 14.4]{bar84}, encodes a simple, deterministic notion of computation.

In order to be able to express judgements about non-deterministic computation, we need to work out something more. The standard way of extending the evaluation relation  to the non-deterministic case is by generalising the one-step evaluation relation $\mapsto $ to a probabilistic version $\mapsto_p$ in which $p$ is a number in the unit interval $[0,1]$. By $t\mapsto _p t'$ we mean that the term $t$ has a probability of $p$ of reducing to $t'$ in one evaluation step. Usually, this is required when $\lambda$-calculus is extended by particular operators that  encode non-deterministic computational  mechanisms---see \cite{rp02,par03,dhw05,ppt05,dlz12,sco14,dk20,akm21} for some specific examples taken from the literature on theoretical computer science and \cite{dp95} for a conceptual characterisation of non-deterministic computation. For instance, according to a rather common notation, the term $(\frac{1}{3}:x\oplus \frac{2}{3}:y)$ encodes a program that reduces to $x$ with probability $\frac{1}{3}$ and to $y$ with probability $\frac{2}{3}$. The behaviour of such a program, thus, is fully determined by the following relation instances:
\[(\frac{1}{3}:x\oplus \frac{2}{3}:y) \mapsto _{\frac{1}{3}} x \qquad (\frac{1}{3}:x\oplus \frac{2}{3}:y)\mapsto _{\frac{2}{3}} y\]
%Clearly, an obvious way to generalise this to a derivability relation $\mapsto^*_p$ exists.
Nevertheless, since we are mainly interested in opaque non-deterministic computation, we will not actually extend our term language by operators for non-deterministic computation but we will simply exploit the term language of pure $\lambda$-calculus in order to admit the possibility that some variables encode calls to black-box non-deterministic programs.
A variable program $x$ might then be  introduced by the following computational theory specifying its behaviour: 
\[x \mapsto _{\frac{1}{3}} t \qquad x\mapsto _{\frac{2}{3}} s\]
The variable $x$ will then represent a non-deterministic black-box program that reduces to the term $t$ with  probability $\frac{1}{3}$ and to the term $s$ with probability $\frac{2}{3}$. We can now study the behaviour of the program $x$---and, more interestingly, what the agents in our system know and think about the behaviour of $x$---without explicitly formalising how it is constructed; just by describing it from the outside, one might say.

For the sake of simplicity, we can skip the definition of $\mapsto _p$ and directly introduce a non-determistic identity relation between programs. Since non-deterministic computation is involved, this is not a trivial task, but neither an impossible one. Instead of trying to define an identity relation indexed by a probability value---a solution which appears rather natural, but which is not so easy to work out in detail---we adopt the solution of keeping together  all possible outcomes of the reduction of one term and specify an identity relation between sets of terms. Intuitively, a non-deterministic term is not equivalent, in general, to its reducti. Indeed, a non-deterministic reduction can constitute a proper computational choice that cannot be undone during a computation: while in pure $\lambda$-calculus the final result of the computation is the same even though we can make different choices along the computation itself---this is technically referred to as {\it confluence}, see \citep[Chapter 11.1]{bar84} for the technical details---in non-deterministic $\lambda$-calculi, we might have choices along a computation that can lead to properly different final results. In order to define a notion of equivalence in the non-deterministic setting, therefore, we need to keep together all possible reducti of a term. We do this by talking about equivalent sets of terms. In order to suitably handle non-deterministic reduction and not to loose essential information, we also specify, next to the element of the set, the probability with which it would actually be the outcome of the reduction. For instance, the black-box non-deterministic program with behaviour specified by\[x \mapsto _{\frac{1}{3}} t \qquad x\mapsto _{\frac{2}{3}} s\]will be introduced as follows in an equational theory of non-deterministic computation of the kind we will adopt:\[(( 1,x)) \;\;\lid\;\; (( \frac{1}{3}, t),( \frac{2}{3}, s))\]
where ordered pairs and lists constructed by $( \; )$ are encoded in pure $\lambda$-calculus in the standard way.\footnote{A pair operator $\lambda x. \lambda y.(x,y) $ can be encoded, in pure $\lambda$-calculus, by the function $\lambda x. \lambda y.\lambda z.zxy  $. Indeed, the pair operator works as follows: when applied to any two terms $t$ and $s$ yields the pair $(t,s)$---$(\lambda x. \lambda y.(x,y) )ts\mapsto^* (t,s)$---which can then be either projected to obtain the first element---$(t,s)\pi_1\mapsto^* t$---or projected to obtain the second one---$(t,s)\pi _2\mapsto^* s$. The encoding by $\lambda x. \lambda y.\lambda z.zxy  $ works in exactly the same way if we employ $\lambda z_1 . \lambda z_2. z_1$ as first projection and $\lambda z_1 . \lambda z_2. z_2$ as the second one. Indeed, $(\lambda x. \lambda y.\lambda z.zxy  )ts\mapsto^* \lambda z.zts$ and, moreover, on the one hand we have $(\lambda z.zts)\lambda z_1 . \lambda z_2. z_1\mapsto^* (\lambda z_1 . \lambda z_2. z_1)ts \mapsto^* t$ while, on the other hand, we have  $(\lambda z.zts)(\lambda z_1 . \lambda z_2. z_2)\mapsto^* (\lambda z_1 . \lambda z_2. z_2)ts \mapsto^* s$.} 

The following definition specifies how a set $\Sigma $ containing formulae of the form $t\mapsto _p s$---that is, formulae stating that $t$ reduces to $s$ with probability $p$---can be extended to an equational theory of probabilistic computation encoded by lists as explained above. The first item concerns the direct translation of $t\mapsto _p s$ for all possible reducti $s$ of $t$, the second generalises it to  pairs occurring inside existing lists, the third guarantees that lists are treated as multisets, and the last one enables us to collapse elements of a list that concern the same term. 
\begin{definition}[Syntactic definitions $\Sigma _{\romp\lambda =}$ for probabilistic program identity]
Given a theory $\Sigma $ specifying the behaviour of probabilistic terms  by formulae of the form $t\mapsto _p s$, let $\Sigma _{\romp\lambda =}$ be the smallest set satisfying the following conditions:
\begin{itemize}
\item If $t, s_1, \dots , s_n\in \mathrm{Terms}$ and $t\mapsto_{p_i} s_i \in \Sigma $ for $1\leq i\leq n$, then  $((1, t))\lid ((p_1 , s_1 ), \dots , (p_n , s_n ))\in \Sigma _{\romp\lambda =}$.

\item If $t, s_1, \dots , s_n\in \mathrm{Terms}$, $t\mapsto_{q_i} s_i \in \Sigma $ for $1\leq i\leq n$ and $\dots \lid (\dots ,(p, t), \dots ) \in \Sigma _{\romp\lambda =}$, then  $(\dots ,(p, t), \dots )\lid (\dots ,(p\cdot q_1, s_1), \dots , (p\cdot q_n, s_n), \dots )\in \Sigma _{\romp\lambda =}$.

\item If $\dots \lid (\dots ,(p, t), (q,s), \dots ) \in \Sigma _{\romp\lambda =}$, then  $(\dots ,(p, t), (q,s), \dots )\lid(\dots , (q,s), (p, t), \dots )\in \Sigma _{\romp\lambda =}$.

\item If $\dots \lid (\dots ,(p, t), (q,t), \dots ) \in \Sigma _{\romp\lambda =}$, then  $(\dots ,(p, t), (q,t), \dots )\lid(\dots , (p+q,t), \dots )\in \Sigma _{\romp\lambda =}$.

%%\item If $t\in \mathrm{Terms}$, then $t\lid t \in  \Sigma _{\lambda =}$.
%%
%%\item If $t, s\in \mathrm{Terms}$ and $t\lid s \in \Sigma _{\romp\lambda =}$, then  $s\lid t \in \Sigma _{\romp\lambda =}$.
%
%\item If $t, s, u\in \mathrm{Terms}$ and $t\lid s , s\lid u \in \Sigma _{\romp\lambda =}$, then  $t\lid u \in \Sigma _{\romp\lambda =}$.

\end{itemize}
\end{definition}

%
%
%
%In order to use these lists to suitably represent multisets of terms associated to their probability value, we will also include formulae of the following form in all our theories:
%\[ ( \dots , ( p,  t),( q,  s), \dots ) \lid ( \dots , ( q,  s), ( p,  t), \dots )  \]\[ ( \dots , ( p,  t) , ( p,  t), \dots ) \lid ( \dots , ( p+p,  t) , \dots ) \]for any list $( \dots , ( p,  t), ( q,  s), \dots )$ and any list $( \dots , ( p,  t) , ( p,  t), \dots ) $. 
%

Obviously, once adopted the generalised notation for non-deterministic computation, we can use it also to talk about deterministic computation. Indeed, we can express by the new notation something like $\calt [(\lambda x.t)s]\lid  \calt [t^{s/x}]$ by\[((1, \calt [(\lambda x.t)s]))\;\;\lid\;\;  ((1,\calt [t^{s/x}]))\]Notice moreover that, if we do this, we can simply stipulate that our identity on terms is always used on lists of pairs of the form $(p,t )$ where $p$ is a number in the unit interval and $t$ a program, thus avoiding any possible ambiguity concerning the use of pairs to encode programs and their use to encode multisets of terms associated to probability values.

Since, as already shown in Section \ref{sec:hyper}, our logic enables us to handle in a rather different way what is objectively true but unknown to agents and what is, on the contrary, known to agents or even trusted by agents; we need, first of all, to specify whether an agent $a$ actually knows that the theory is true or not. If not, we simply need to use our theory $S$ as a set of hypotheses---and we will better specify below how to do so in the calculus and in a model. In doing so, we just assume that $S$ is a true theory but no agent necessarily knows of it. Some agent might know part of it---an agent might actually know it all---but for contingent reasons depending on the situation at hand. If, on the contrary, we wish to force the fact that some agent knows the theory and make this fact robust with respect to the change of situation, then we will have to prefix all formulae of the theory by $\know _a$ and thus consider the set $\know _a \Sigma=\{\know _a A\mid A\in \Sigma \} $ as part of our hypotheses. This will also mean, by factivity of $\know$, that $\Sigma $ is objectively true. Indeed, $\know _a \Sigma \vdash A$ for any $A\in \Sigma $. Since $\trust _a $ is defined in terms of $\know _a $ and justifications, all that has been said for $\know _a $ also holds for $\trust _a$, with the only caveat that, in defining $\trust _a \Sigma$, we must specify a term for each outermost occurrence of the $\trust _a$ operator in the set $\trust _a \Sigma$. We can simply state that $\trust _a \Sigma=\{\trust ^{t_A} _a A\mid A\in \Sigma\} $ where $t_A$ is any term occurring in $A$, or we can be more specific depending on particular requirements of the considered case.

Let us then explain in detail how to 
use the formulae of our theory as additional hypotheses depending on the formalism we are considering. We will focus on $\know _a \Sigma$ since the same considerations also apply to $\Sigma$.

From a proof-theoretical perspective---that is, if we are working with the calculus introduced in Section \ref{sec:calculus}---we can simply add $\know_a \Sigma $ to the hypotheses of our derivations. This means that, if we are interested in the status of a formula $B$ or of a judgement of the form  $\Gamma\vdash B$ under the hypothesis that agent $a$ has knowledge of all that is stated in $\Sigma $, instead of directly checking whether the judgements $\vdash A$ and $\Gamma\vdash B$ hold, we should check whether $\know_a \Sigma  \vdash A$ and $\know_a \Sigma  \cup\Gamma\vdash B$ hold.
Indeed, for instance, if $\know _a \Sigma \vdash A$ holds, then we established the truth of the formula $A$ under the hypothesis that agent $a$ knows the theory $\Sigma $ and can use it to reason about programs. Something analogous can be stated for $\know_a \Sigma\cup \Gamma\vdash B$. Obviously, this can be done cumulatively for different theories and different agents, as in $\know _{a_1} \Sigma _1 \cup \dots \cup \know _{a_n} \Sigma _n\vdash A$ and $\know _{a_1} \Sigma _1\cup \dots \cup \know _{a_n} \Sigma _n \cup \Gamma \vdash B$.

From a model-theoretical perspective, on the other hand,  $\know _a \Sigma  $ can be added among our hypotheses by simply considering statements of the form $\know_a \Sigma  \vDash A$ and $\know_a \Sigma \cup\Gamma\vDash B$ instead of $ \vDash A$ and $ \Gamma\vDash B$. Indeed, for instance, if $\know _a \Sigma \vDash A$ holds, then we established the truth of the formula $A$ at all states at which agent $a$ knows that $\Sigma $ is true; and, again, something analogous can be stated for $\know_a \Sigma  \cup \Gamma\vDash B$. Finally, rather obviously, in this case as well the method can be used also for multiple theories and different agents.

Notice, in general, that Corollary \ref{cor:soundness} and Theorem \ref{thm:completeness} guarantee that both the calculus and the semantics will behave equivalently with respect to any additional set of hypotheses $\Sigma $, $\know_a \Sigma $ or  $\know _{a_1} \Sigma _1, \dots , \know _{a_n} \Sigma _n$. 

Notice moreover that, if the elements of $\Sigma $ are neither of the form $\know _a T$ nor of the form $j:T$, then adding the elements of $\Sigma $ as additional hypotheses in a derivability statement of the form $\vdash A$ or $\Gamma \vdash B$ will correspond to considering semantical states at which all elements of $\Sigma $ are true. That is to say, the elements of $\Sigma $ will not be necessarily true at all states of the model we are considering, but just at the state that we take to be the actual state and at which we evaluate formulae. To the contrary, adding the elements of $\know _a \Sigma $ as additional hypotheses in a derivability statement corresponds to forcing the truth of all elements of $ \Sigma $ at all states  reachable by $\romr_a$ from the actual state---thus including the actual state itself since $\romr_a$ is always reflexive.

\begin{example}\label{ex:ad3}
The definitions that we just introduced in order to describe properties of nondeterministic computation can be used, for instance, to model the way in which Alice, in Example \ref{ex:ad}, reaches the conclusion that service $s$ and service $u$ are equivalent.

Suppose that $s$ and $u$ are described by the following statements:
\[s\mapsto _\frac{1}{4}o_1 \quad s\mapsto _\frac{3}{4}o_2  \]\[u\mapsto _\frac{3}{4}u_1 \quad u\mapsto _\frac{1}{4}u_2\]\[u_1 \mapsto _\frac{1}{3}o_1\quad u_1 \mapsto _\frac{2}{3}o_2\]\[u_2\mapsto _1 o_2\]where $o_1, o_2$ are possible outputs of $s$ and $u$ and $u_1,u_2$ are intermediate results of program $u$.

We can build a set $\Sigma _{\romp\lambda =}$ by taking
\[\Sigma = \{s\mapsto _\frac{1}{4}o_1 , \; s\mapsto _\frac{3}{4}o_2  , \;  u\mapsto _\frac{3}{4}u_1  , \;  u\mapsto _\frac{1}{4}u_2  , \;  u_1 \mapsto _\frac{1}{3}o_1 , \;  u_1 \mapsto _\frac{2}{3}o_2 , \; u_2\mapsto _1 o_2\}\]as set defining the behaviour of $s$ and $u$.\footnote{Notice that the presented system does not provide a native method for extracting data from experiments on the behaviour of a program, but a system for reasoning about probabilistic computational processes, such as the one presented by by \citet{dp21,dgp22}, can be used to obtain the data describing $s$ and $u$ that Alice uses to reach the conclusion that they are equivalent.}

In  $\Sigma _{\romp\lambda =}$ we will have an element of the form $((1, s))\lid ((\frac{1}{4}, o_1), (\frac{3}{4}, o_2))$, quite obviously, but also a chain of elements of the form $((1, u))\lid r_1, \; r_1 \lid r_2 ,\;  \dots ,\; r_{n-1} \lid r_n ,\; r_n  \lid  ((\frac{1}{4}, o_1), (\frac{3}{4}, o_2))$. The latter is less obvious, but not too complicated to verify.
Therefore, if we assume that Alice---that is, agent $a$---trusts $s$ and $u$ to act according to the theory $\Sigma _{\romp\lambda =}$ by employing hypotheses in the set $\trust  _a \Sigma _{\romp\lambda =} $, we can formally derive that agent $a$ trusts $u$ to be the same program as $s$. Formally, we can derive the sequent 
\[\trust ^{((1, s))} _a ((1, s))\lid ((\frac{1}{4}, o_1), (\frac{3}{4}, o_2)) ,\;\; \trust ^{((1, u))} _a ((1, u))\lid r_1,\;\; \trust ^{r_1} _a r_1 \lid r_2 , \;\;\dots  \]
\[\dots ,\;\; \trust ^{r_{n-1}}  _a r_{n-1} \lid r_n , \;\; \trust  ^{r_n}_a r_n  \lid  ((\frac{1}{4}, o_1), (\frac{3}{4}, o_2))\;\;   \Rightarrow\;\; \trust ^u _a s\lid u\]by a derivation with the same structure as the one shown in Example \ref{ex:ad2}---that is, a derivaiton that starts with $\trust$ eliminations followed by $\know$ eliminations, that continues with applications of the rules for substituting identicals inside justified formulae, and concludes by reintroducing $\know$ and then $\trust$ in order to obtain $\trust ^u _a s\lid u$ as desired.
\end{example}

Obviously, all that has been said about the theory of  computational trust can also be applied, more generally, to less specific notions of trust, as, for instance, trust among humans. What holds true for computational theories, moreover, also holds for theories in general, that can be simply seen as sets of truths that some agent knows.

The particular attention that we devote here to computational trust is due to the fact that the main motivation behind the definition of the presented system is to formalise an analysis of subjective trust and non-deterministic computation based on analyses of computational trustworthiness such as the one conducted by \citet{dp21,dgp22}. The aim of the present section is simply to argue that the language and formal mechanisms provided by the presented logical system is enough to endeavour in such an analysis. We leave a full development of such a theory of computational trust, a formal  study of its metalogical properties and expressive power, and a discussion of its conceptual and practical applications  for future work.

\backmatter

%\bmhead{Supplementary information}
%
%If your article has accompanying supplementary file/s please state so here. 
%
%Authors reporting data from electrophoretic gels and blots should supply the full unprocessed scans for key as part of their Supplementary information. This may be requested by the editorial team/s if it is missing.
%
%Please refer to Journal-level guidance for any specific requirements.

\bmhead{Acknowledgements}

We would like, first of all, to thank the anonymous reviewers for the very useful comments that greatly helped to improve the present work. We would like, moreover, to thank Ekaterina Kubyshkina, Mattia Petrolo and Giuseppe Primiero for several very useful discussions on hyperintensionality and computational trust. 

%We thank, moreover, all reviewers for their comments and suggestions. 
%
This work was supported by the project BRIO (BIAS, RISK, OPACITY in AI: design, verification and development of Trustworthy AI -- PRIN project n. 2020SSKZ7R).

%\section*{Declarations}
%
%Some journals require declarations to be submitted in a standardised format. Please check the Instructions for Authors of the journal to which you are submitting to see if you need to complete this section. If yes, your manuscript must contain the following sections under the heading `Declarations':
%
%\begin{itemize}
%\item Funding
%\item Conflict of interest/Competing interests (check journal-specific guidelines for which heading to use)
%\item Ethics approval and consent to participate
%\item Consent for publication
%\item Data availability 
%\item Materials availability
%\item Code availability 
%\item Author contribution
%\end{itemize}

%\noindent
%If any of the sections are not relevant to your manuscript, please include the heading and write `Not applicable' for that section. 

%%===================================================%%
%% For presentation purpose, we have included        %%
%% \bigskip command. Please ignore this.             %%
%%===================================================%%
%\bigskip
%\begin{flushleft}%
%Editorial Policies for:
%
%\bigskip\noindent
%Springer journals and proceedings: \url{https://www.springer.com/gp/editorial-policies}
%
%\bigskip\noindent
%Nature Portfolio journals: \url{https://www.nature.com/nature-research/editorial-policies}
%
%\bigskip\noindent
%\textit{Scientific Reports}: \url{https://www.nature.com/srep/journal-policies/editorial-policies}
%
%\bigskip\noindent
%BMC journals: \url{https://www.biomedcentral.com/getpublished/editorial-policies}
%\end{flushleft}

\begin{appendices}

\section{ }\label{appendix}

\noindent {\bf Lemma \ref{lem:subst-variant-corr}.}  {\it For any formula $A$, term $t$, model $\calm$, state $w$ in $\calm$, assignment $g$ such that $g(x)=\romi_w(x)$ and $x$-variant $f$ of $g$ such that $f(x)=g(t)$, it holds that $\calm , w , f \Vdash A$ if, and only if,  it holds that $\calm , w , g\Vdash A [t/x] $.
}
\begin{proof}
The proof is by induction on the form of $A$.
\begin{itemize}

\item  $A=\bot$. Since $x$ does not occur in $\bot$ and $\calm , w , g\nVdash \bot $, the statement trivially holds.

\item $A=(u\lid v)$. 

We have four cases with respect to $x$: $u=x=v$, $u\neq x\neq v$, $u= x\neq v$, or $u\neq x= v$. 

If $u=x=v$, then $\calm , w , g\Vdash A [t/x] $ if, and only if, $\calm , w , f \Vdash A$ since $A=(x\lid x)$ and $A[t/x]=(t\lid t)$.

If $u\neq x\neq v$, then $\calm , w , g\Vdash A [t/x] $ if, and only if,  $\calm , w , f \Vdash A$ since $A=(u\lid v)=A[t/x]$ and the interpretation of $x$ does not play any role with respect to the truth of $(u\lid v)$. 

If $u= x\neq v$, then $\calm , w , g\Vdash A [t/x] $ if, and only if,  $\calm , w , f \Vdash A$ since $A[t/x]=(t\lid v)$ and $A=(x\lid v)$ but $g(t)=f(x)$. 

If $u\neq x= v$, then $\calm , w , g\Vdash A [t/x] $ if, and only if,  $\calm , w , f \Vdash A$ since $A[t/x]=(u\lid t)$ and $A=(u\lid x)$ but $g(t)=f(x)$. 

\item $A= P(t_1,\dots ,t_n)$. In this case, either $x$ is an element of the list $t_1,\dots ,t_n$ or not. If not, the statement trivially holds since  $A[t/x]=P(t_1,\dots ,t_n)=A$ and the interpretation of $x$ does not play any role with respect to the truth of $P(t_1,\dots ,t_n)$. If $x$ is an element of the list $t_1,\dots ,t_n$, then $\calm , w , g\Vdash A [t/x] $ if, and only if,  $\calm , w , f \Vdash A$ since, by Definition \ref{def:forcing}, it is clear that the truth value of $(P(t_1,\dots ,t_n))[t/x]$ under  $g$ and the truth value of $P(t_1,\dots ,t_n)$ under $f$ must be the same.

\item $A= B\rightarrow C$. Now, $(B\rightarrow C)[t/x]=B[t/x]\rightarrow C[t/x]$ and, by inductive hypothesis, the statement holds for both $B[t/x]$ and $C[t/x]$ with respect to, respectively, $B$ and $C$. By Definition \ref{def:forcing}, it is clear that then the statement holds also for $(B\rightarrow C)[t/x]$ with respect to $B\rightarrow C$.

\item  $A= B\wedge C$. Now, $(B\wedge C)[t/x]=B[t/x]\wedge  C[t/x]$ and, by inductive hypothesis, the statement holds for both $B[t/x]$ and $C[t/x]$ with respect to, respectively, $B$ and $C$. By Definition \ref{def:forcing}, it is clear that then the statement holds also for $(B\wedge  C)[t/x]$ with respect to $B\wedge  C$.

\item  $A= \forall y. B $. Now, we can assume that $t\neq y$ by possibly renaming variables. Hence we have two cases, either $x=y$ or $x\neq y$.

If $x=y$, then $A[t/x]=\forall y. B=A$ and the statement holds since the interpretation of $x$ and $y$ does not play any role with respect to the truth of $\forall y. B$.

Let us then suppose that $x\neq y$. Then $( \forall y. B)[t/x]= \forall y. B[t/x]$. By inductive hypothesis, the statement holds for $B[t/x]$ and $B$ under, respectively, any assignment $g'$ and any $x$-variant $f'$ of $g'$ such that $f'(x)=g'(t)$. Hence, if an $y$-variant $g'$ of $g$ is such that $g'(y)=f'(y)$ for some $y$-variant $f'$ of $f$, then we have that the statement holds for $B[t/x]$ and $B$ under, respectively, $g'$ and $f'$. Then, by Definition \ref{def:forcing}, we can conclude that the statement holds also for $(\forall y. B)[t/x]$ with respect to $\forall y. B$.

\item $A=\know _a B$. Now, $A[t/x]=\know _a B[t/x]$. In order to show that the statement holds, we need to check that $\calm , v , f_{w\hookrightarrow v} \Vdash  B$ if, and only if, $\calm , v , g_{w\hookrightarrow v} \Vdash  B[t/x]$ for any $v\in \romw $ such that $w\romr _a v$.  Fixed a $v\in \romw $ such that $w\romr _a v$, since $f$ is an $x$-variant of $g$, then, by Definition \ref{def:ass-comb}, $f_{w\hookrightarrow v} $ is and $x$-variant of $g_{w\hookrightarrow v}$. Moreover, since $g(x)=\romi _w(x)$, then, by Definition \ref{def:ass-comb}, $g_{w\hookrightarrow v}(x)=\romi _v(x)$.  Hence, by inductive hypothesis,  $\calm , v , f_{w\hookrightarrow v} \Vdash  B$ if, and only if, $\calm , v , g_{w\hookrightarrow v} \Vdash  B[t/x]$ for any $v\in \romw $ such that $w\romr _a v$. And this proves the statement.

\item $A=j: B$. Now, $A[t/x]= j: B[t/x]$. In order to show that the statement holds, we need, first, to check that $\calm , v , f_{w\hookrightarrow v} \Vdash  B$ if, and only if, $\calm , v , g_{w\hookrightarrow v} \Vdash  B[t/x]$ for any $v\in \romw $ such that $w\romr _\gamma v$.  Fixed a $v\in \romw $ such that $w\romr _\gamma v$, since $f$ is an $x$-variant of $g$, then, by Definition \ref{def:ass-comb}, $f_{w\hookrightarrow v} $ is and $x$-variant of $g_{w\hookrightarrow v}$. Moreover, since $g(x)=\romi _w(x)$, then, by Definition \ref{def:ass-comb}, $g_{w\hookrightarrow v}(x)=\romi _v(x)$.  Hence, by inductive hypothesis,  $\calm , v , f_{w\hookrightarrow v} \Vdash  B$ if, and only if, $\calm , v , g_{w\hookrightarrow v} \Vdash  B[t/x]$ for any $v\in \romw $ such that $w\romr _\gamma v$. 

Secondly, we need to check that there is a list of terms $t_1\dots t_n$ such that $B[t_1/x_1\dots t_n/x_n]\in \rome_w(j)$ where $x_1,\dots , x_n$ are all free variables of $B$ and, for $i\in \{1,\dots , n\}$, $f_{w\hookrightarrow v}(t_i)=f_{w\hookrightarrow v}(x_i)$ if, and only if, 
there is a list of terms $t_2\dots t_{n}$ such that $B[t/x][t_2/x_2\dots t_n/x_n]\in \rome_w(j)$ where $x_2,\dots , x_n$ are all free variables (supposing, without loss of generality, that we counted $x$ as the first free variable of $B$) of $B[t/x]$ and, for $i\in \{2,\dots , n\}$, $g_{w\hookrightarrow v}(t_i)=g_{w\hookrightarrow v}(x_i)$. But this is clearly true since, because of the conditions on $f$ and $g$, $f_{w\hookrightarrow v}$ and $g_{w\hookrightarrow v}$ agree on all terms but $x$, $x$ is not among the free variables of $B[t/x]$, and $f_{w\hookrightarrow v}(x)= g_{w\hookrightarrow v}(t)$. And this proves the statement.
\end{itemize}
\end{proof}

\noindent {\bf Theorem \ref{thm:der-preservation}.}
{\it If  $\Gamma \Rightarrow A $ is derivable, then all states that verify $\Gamma$ also verify $A$.}
\begin{proof}
The proof is by induction on the number of rules applied in the derivation of $\Gamma \Rightarrow A$.

Since no derivation can contain $0$ rule applications, in the base case, the derivation contains one rule application. 

The rule is one of the following:\[\infer{A\Rightarrow A}{}\qquad\qquad\infer{ \Rightarrow t\lid t}{}\]
In case of the first one, the statement is trivial. Let us then consider the second one. According to Definition \ref{def:forcing}, $t\lid t$ is true at a state if, and only if, the interpretation of $\lid $ contains $(t,t)$. But, according to Definition \ref{def:model}, all interpretation functions $\romi _w$ for all states $w$ in all models $\calm$ meet this requirement.

Suppose now that the statement holds for any derivation containing less than $n$ rule applications. We show that it holds also for any derivation that contains $n$ rule applications. 

Let us reason by cases on the last rule applied in the derivation.
\begin{itemize}
%\item $\vcenter{\infer{t\lid t}{}}$

\item $\vcenter{\infer{\know _a A_1 , \dots \know _a A_n\Rightarrow \know _a A}{\know _a A_1 , \dots \know _a A_n\Rightarrow A}}\quad$ By inductive hypothesis, $A$ is true at all states at which each $ \know _a A_1 , \dots \know _a A_n$ is true. Consider a state $w$ of this kind in a model $\calm$. Since $\romr_a$ is Euclidean, all successors of $w$ with respect to $\romr_a$ not only verify $A_1 , \dots A_n$, but also $\know _a A_1 , \dots \know _a A_n$. Indeed, all successors of successors of $w$ are also successors of $w$. Hence, all successors of successors of $w$ verify $A_1 , \dots A_n$, which implies that all successors of $w$ verify $\know _a A_1 , \dots \know _a A_n$. But by our assumptions, this means that all successors of $w$ verify $A$. Which implies that $w$ verifies $\know _a A$, as desired.

%by Definition \ref{def:forcing} for $\know _a$, $\know_a C$ must be true at all states of all models too. Indeed, if the state has no successors with respect to $\romr_a$, all requirements of Definition \ref{def:forcing} for $\know _a$ are vacuously met; if the state has some successors with respect to $\romr_a$, $C$ must be true at all the successors because it is true at all states of all models.

\item $\vcenter{\infer{\Gamma \Rightarrow s\lid t}{\Gamma \Rightarrow t\lid s}}\quad$ By inductive hypothesis, for any state $w$ that verifies $\Gamma$,  $\calm , w, \romi _w\Vdash t\lid s$. According to Definition \ref{def:forcing}, $\calm , w, \romi _w\Vdash t\lid s$ if, and only if, $(t^w,s^w)\in \romi _w(\lid )$. According to Definition \ref{def:forcing}, $\calm , w, \romi _w\Vdash s\lid t$ if, and only if, $(s^w,t^w)\in \romi _w (\lid )$. But according to Definition \ref{def:model}, if $\romi _w(\lid )$ contains $(t^w,s^w)$ then it must contain $(s^w,t^w)$.
Hence, $\calm , w, \romi _w\Vdash s\lid t$, as desired.

\item $\vcenter{\infer{\Gamma ,\Delta \Rightarrow u\lid v}{\Gamma \Rightarrow u\lid t&\Delta \Rightarrow t\lid v}}\quad$ By inductive hypothesis, for any state $w$ that verifies $\Gamma$ we have $\calm , w, \romi _w\Vdash u\lid t$ and for any state $w$ that verifies $\Delta $ we have $\calm , w, \romi _w\Vdash t\lid v$. Let us consider a state $w$ that verifies $\Gamma, \Delta$. According to Definition \ref{def:forcing}, $\calm , w, \romi _w\Vdash u\lid t$ and $\calm , w, \romi _w\Vdash t\lid v$ if, and only if, $(u^w,t^w),(t^w,v^w)\in \romi _w(\lid )$. According to Definition \ref{def:forcing}, $\calm , w, \romi _w\Vdash u\lid v$ if, and only if, $(u^w,v^w)\in \romi _w (\lid )$. But according to Definition \ref{def:model}, if $\romi _w(\lid )$ contains $(u^w,t^w)$ and $(t^w,v^w)$ then it must contain $(u^w,v^w)$. Hence, $\calm , w, \romi _w\Vdash u\lid v$, as desired.

\item $\vcenter{\infer{\Gamma , \Delta\Rightarrow A(s/x)}{\Gamma \Rightarrow t\lid s&\Delta \Rightarrow A(t/x)}}\quad$ By inductive hypothesis, for any state $w$ that verifies $\Gamma$ we have $\calm , w, \romi _w\Vdash t\lid s$ and for any state $w$ that verifies $\Delta $ we have $\calm , w, \romi _w\Vdash A(t/x)$. Let us consider a state $w$ that verifies $\Gamma, \Delta$ and let us show that it verifies $A(s/x]$. Since for $t=s$ the claim trivially holds, we also suppose that $t\neq s$ and reason by induction on the form of $A$ and show that also $\calm , w, \romi _w\Vdash A(s/x)$.
\begin{itemize}
\item  $A=\bot$. Since $\bot (t/x)=\bot (s/x)$, the statement trivially holds.

\item $A=(u\lid v)$. Now, $\calm , w, \romi _w\Vdash t\lid s$, which implies that $(t^w,s^w)\in \romi_w (\lid )$. Moreover, $\calm , w, \romi _w\Vdash (u\lid v)(t/x)$, which implies that  $( (u(t/x))^w,(v(t/x))^w)\in \romi_w (\lid )$. We have four cases with respect to $x$: $u=x=v$, $u\neq x\neq v$, $u= x\neq v$, or $u\neq x= v$. 

If $u=x=v$, then $\calm , w, \romi _w\Vdash (u\lid v)(s/x)$ since $(u\lid v)(s/x)=(s\lid s)$.

If $u\neq x\neq v$, then $\calm , w, \romi _w\Vdash (u\lid v)(s/x)$  since $(u\lid v)(s/x)=(u\lid v)=(u\lid v)(t/x)$. 

If $u= x\neq v$, then $(u\lid v)(s/x)=(s\lid v)$ and $(u\lid v)(t/x)=(t\lid v)$. Moreover, $\calm , w, \romi _w\Vdash t\lid v$ and $\calm , w, \romi _w\Vdash t\lid s$ hold. But this implies, by the transitive and symmetric closure of $\romi ^w(\lid )$, that $\calm , w, \romi _w\Vdash  s\lid v$ with $ (s\lid v)= (u\lid v)(s/x)$. 

If $u\neq x= v$, then $(u\lid v)(s/x)=(u\lid s)$ and $(u\lid v)(t/x)=(u\lid t)$. Moreover, $\calm , w, \romi _w\Vdash u\lid t$ and $\calm , w, \romi _w\Vdash t\lid s$ hold. But this implies, by the transitive closure of $\romi ^w(\lid )$, that $\calm , w, \romi _w\Vdash u\lid s$ with $(u\lid s) =  (u\lid v)(s/x)$.

%---and considering that Definition \ref{def:model} requires that the set of $n$-uples $\romi_w(P)$ is closed under all possible uniform replacements of $u$ by $v$ if $(u^w,v^w)\in \romi_w (\lid )$, we have that  $P(t_1,\dots ,t_n)(t/x)$ and $P(t_1,\dots ,t_n)(s/x)$ must have the same truth value at $w$. hence, also $P(t_1,\dots ,t_n)(s/x)$  is true at $w$.

\item $A= P(t_1,\dots ,t_n)$. Now, $\calm , w, \romi _w\Vdash P(t_1,\dots ,t_n)(t/x)$ and\[P(t_1,\dots ,t_n)(t/x)=P(t_1(t/x),\dots ,t_n(t/x))\] imply that $(t_1(t/x)^w,\dots ,t_n(t/x)^w)\in \romi _w (P)$. Moreover, $\calm , w, \romi _w\Vdash t\lid s$ implies that  $(t^w,s^w)\in \romi_w (\lid )$. Moreover, since $(t^w,s^w)\in \romi_w (\lid )$ holds, Definition \ref{def:model} requires the set of $n$-uples $\romi_w(P)$ to be closed under all possible uniform replacements of $t^w$ by $s^w$. 
Hence, $(t_1(t/x)^w,\dots ,t_n(t/x)^w)\in \romi _w (P)$ implies that $(t_1(s/x)^w,\dots ,t_n(s/x)^w)\in \romi _w (P)$. But since, by Definition \ref{def:forcing}, $(t_1(s/x)^w,\dots ,t_n(s/x)^w)\in \romi _w (P)$ if, and only if, $\calm , w, \romi _w\Vdash P(t_1(s/x),\dots ,t_n(s/x))$ and since \[P(t_1(s/x),\dots ,t_n(s/x))=P(t_1,\dots ,t_n)(s/x)\]we have, as desired, that $\calm , w, \romi _w\Vdash P(t_1,\dots ,t_n)(s/x)$.

%we have that  $P(t_1,\dots ,t_n)(t/x)$ and $P(t_1,\dots ,t_n)(s/x)$ must have the same truth value at $w$. Hence, $\calm , w, \romi _w\Vdash P(t_1,\dots ,t_n)(t/x)$ implies that $\calm , w, \romi _w\Vdash P(t_1,\dots ,t_n)(s/x)$, as desired.

\item $A= B\rightarrow C$. Now, by inductive hypothesis, $\calm , w, \romi _w\Vdash B(t/x)$ if, adn only if, $\calm , w, \romi _w\Vdash B(s/x)$; and $\calm , w, \romi _w\Vdash C(t/x)$ if, and only if, $\calm , w, \romi _w\Vdash C(s/x)$. Therefore, $\calm , w, \romi _w\Vdash B(t/x)\rightarrow C(t/x)$ if, and only if,  $\calm , w, \romi _w\Vdash B(s/x)\rightarrow C(s/x)$. But since $(B\rightarrow C)(t/x)= B(t/x)\rightarrow C(t/x)$ and $(B\rightarrow C)(s/x)= B(s/x)\rightarrow C(s/x)$, we have, as desired, that $\calm , w, \romi _w\Vdash (B\rightarrow C)(t/x)$ implies  $\calm , w, \romi _w\Vdash (B\rightarrow C)(s/x)$.

\item  $A= B\wedge C$. Now, by inductive hypothesis, $\calm , w, \romi _w\Vdash B(t/x)$ if, adn only if, $\calm , w, \romi _w\Vdash B(s/x)$; and $\calm , w, \romi _w\Vdash C(t/x)$ if, and only if, $\calm , w, \romi _w\Vdash C(s/x)$. Therefore, $\calm , w, \romi _w\Vdash B(t/x)\wedge C(t/x)$ if, and only if,  $\calm , w, \romi _w\Vdash B(s/x)\wedge C(s/x)$. But since $(B\wedge C)(t/x)= B(t/x)\wedge C(t/x)$ and $(B\wedge C)(s/x)= B(s/x)\wedge C(s/x)$, we have, as desired, that $\calm , w, \romi _w\Vdash (B\wedge C)(t/x)$ implies  $\calm , w, \romi _w\Vdash (B\wedge C)(s/x)$.

\item  $A= \forall y. B $. If $x=y$, then $(\forall y. B)(t/x)=\forall y. B=(\forall y. B)(s/x)$ and the statement trivially holds. Let us then suppose that $x\neq y$. Since $t/neq s$, have three cases:  $t\neq y\neq s$, $t= y\neq s$ or $t\neq y= s$. 

Let us suppose that $t\neq y\neq s$. By inductive hypothesis, for any term $u$, $\calm , w, \romi _w\Vdash B(t/x)[u/y] $ if, and only if, $\calm , w, \romi _w\Vdash B(s/x)[u/y]$. Hence, by Definitions \ref{def:forcing} and \ref{def:model}  (there exists a term $u$ such that $u^w=e$ for each $e\in \romu$), $\calm , w, \romi _w\Vdash \forall y . (B(t/x)) $ if, and only if, $\calm , w, \romi _w\Vdash \forall y . ( B(s/x))$. But $x\neq y$ and $t\neq y\neq s$ imply that $ \forall y.(B(s/x))= (\forall y.B)(s/x)$ and $ \forall y.(B(t/x))= (\forall y.B)(t/x)$. Therefore, as desired, $\calm , w, \romi _w\Vdash (\forall y.B)(t/x) $ implies $\calm , w, \romi _w\Vdash (\forall y.B)(s/x) $.

If $x\neq y$ but $t= y\neq s$ or $t\neq y= s$, the case is completely analogous to the previous one, but, as detailed in Definition \ref{def:substitution}, before proceeding with the substitution, we rename the bound variable $y$ in such a way that no capture occurs.

\item  $A=\know _a B$. %By Definition \ref{def:substitution}, 
Since $(\know _a B)(t/x)=\know _a B=(\know _a B)(s/x)$, the statement trivially holds.

\item  $A=j: B$. %By Definition \ref{def:substitution}, 
Since $(j: B)(t/x)=j: B=(j: B)(s/x)$, the statement trivially holds.
\end{itemize}

\item $\vcenter{\infer{\Gamma,\Delta \Rightarrow \know _a B}{\Gamma \Rightarrow \know _a( A\rightarrow B) &\Delta \Rightarrow \know _a A}}\quad$ By inductive hypothesis, for any state $w$  in any model $\calm$ that verifies $\Gamma$ we have  $\calm , w, \romi _w\Vdash \know _a (A\rightarrow B)$  and for any state $w$ that verifies $\Delta $ we have  $\calm , w, \romi _w\Vdash \know _a A$. Let us consider a state $w$ that verifies $\Gamma, \Delta$. According to Definition \ref{def:forcing} for $\know _a$, $\calm , u, \romi _u\Vdash A\rightarrow B$ and $\calm , u, \romi _u\Vdash  A$ hold for any $u$ such that $w\romr_a u$. By Definition \ref{def:forcing} for $\rightarrow$, we have that $\calm , u, \romi _u\Vdash  B$ hold for any $u$ such that $w\romr_a u$. But then, by Definition \ref{def:forcing} for $\know_a$,  $\calm , w, \romi _w\Vdash  \know _a  B$, as desired.

\item $\vcenter{\infer{\Gamma \Rightarrow A}{\Gamma \Rightarrow \know  _a A}}\quad$ By inductive hypothesis, for any state $w$  in any model $\calm$ that verifies $\Gamma$, $\calm , w, \romi _w\Vdash \know _a A$. According to Definition \ref{def:forcing} for $\know _a$, $\calm , u, \romi _u\Vdash A$ holds for any $u$ such that $w\romr_a u$.
But, by Definition \ref{def:model}, $\romr_a$ is reflexive and thus $w\romr_a w$. Hence, by Definition \ref{def:forcing} for $\know_a$,  $\calm , w, \romi _w\Vdash  A$, as desired.

\item $\vcenter{\infer{\Gamma \Rightarrow \know _a \neg \know _a   A }{\Gamma \Rightarrow \neg \know _a  A }}\quad$ By inductive hypothesis, for any state $w$ in any model $\calm$ that verifies $\Gamma$, $\calm , w, \romi _w\Vdash \neg \know _a A$. According to Definition \ref{def:forcing} for $\neg$ and $\know _a$, there must exist a state $u$ such that $w\romr_a u$ and  $\calm , u, \romi _u\not\Vdash  A$ hold. Since $\romr _a$ is Euclidean, for any state $v$ such that $w\romr_a v$, we have that $v\romr_a u$. Hence,  by Definition \ref{def:forcing}, for any state $v$ such that $w\romr_a v$, $\calm , v, \romi _v\Vdash \neg \know _a A$  holds. But this implies that 
 $\calm , w, \romi _w\Vdash  \know _a\neg \know _a A$ holds too, as desired.

\item $\vcenter{\infer{ j_1:A_1 , \dots , j_n:A_n \Rightarrow x^{j_1 , \dots , j_n}:A}{j_1:A_1 , \dots , j_n:A_n\Rightarrow A }}\quad $
By inductive hypothesis, $A$ is true at all states at which each $ j_1: A_1 , \dots j_n: A_n$ is true. Consider a state $w$ of this kind in a model $\calm$. Since $\romr_\gamma$ is  transitive, all successors of $w$ with respect to $\romr_\gamma$ not only verify $A_1 , \dots A_n$, but also $j_1: A_1 , \dots j_n: A_n$. Indeed, all successors of successors of $w$ are also successors of $w$. Hence, all successors of successors of $w$ verify $A_1 , \dots A_n$, which implies that all successors of $w$ verify $j_1: A_1 , \dots j_n: A_n$. But by our assumptions, this means that all successors of $w$ verify $A$. This, along with the fact that Definition \ref{def:model} requires 
$A\in \rome _w(x^{j_1 , \dots , j_n})$ in case $j_1:A_1 , \dots , j_n:A_n\vdash A $, implies that $w$ verifies $x^{j_1 , \dots , j_n}:A$, as desired.

\item $\vcenter{ \infer{\Gamma , \Delta\Rightarrow jk:B}{\Gamma \Rightarrow j: (A\rightarrow B) &\Delta  \Rightarrow k: A }}\quad $
By inductive hypothesis, for any state $w$  in any model $\calm$ that verifies $\Gamma$ we have  $\calm , w, \romi _w\Vdash j: (A\rightarrow B)$  and for any state $w$ that verifies $\Delta $ we have  $\calm , w, \romi _w\Vdash k: A$. Let us consider a state $w$ that verifies $\Gamma, \Delta$. According to Definition \ref{def:forcing} for justifications, $\calm , u, \romi _u\Vdash A\rightarrow B$ and $\calm , u, \romi _u\Vdash  A$ hold for any $u$ such that $w\romr_\gamma u$. By Definition \ref{def:forcing} for $\rightarrow$, we have that $\calm , u, \romi _u\Vdash  B$ hold for any $u$ such that $w\romr_\gamma u$. But then, by Definition \ref{def:forcing} for justifications and since $B\in \rome _w(jk)$ whenever $A\rightarrow B \in \rome _w(j)$ and $A \in \rome _w(k)$, we have that $\calm , w, \romi _w\Vdash  jk:B$, as desired. 

\item $\vcenter{ \infer{\Gamma\Rightarrow A }{\Gamma\Rightarrow  j:A}}\quad $  
By inductive hypothesis, for any state $w$  in any model $\calm$ that verifies $\Gamma$, $\calm , w, \romi _w\Vdash j: A$. According to Definition \ref{def:forcing} for justifications, $\calm , u, \romi _u\Vdash A$ holds for any $u$ such that $w\romr_\gamma u$. But, by Definition \ref{def:model}, $\romr_\gamma$ is reflexive and thus $w\romr_\gamma w$. Hence, by Definition \ref{def:forcing} for justifications,  $\calm , w, \romi _w\Vdash  A$, as desired.

\item $\vcenter{\infer{\Gamma \Rightarrow {}!j:(j:A)}{\Gamma \Rightarrow j:A }}\quad$ 
By inductive hypothesis, for any state $w$  in any model $\calm$ that verifies $\Gamma$, $\calm , w, \romi _w\Vdash j: A$. According to Definition \ref{def:forcing} for justifications, $\calm , u, \romi _u\Vdash A$ holds for any $u$ such that $w\romr_\gamma u$. Moreover, since $\romr_\gamma$ is transitive, all successors of the successors of $w$, are also successors of $w$. Hence, by Definition \ref{def:forcing} for justifications, also $\calm , u, \romi _u\Vdash j:A$ holds for any $u$ such that $w\romr_\gamma u$. But, since $j:A\in \rome _w(!j)$ whenever $A\in \rome _w(j)$ and since $\rome _w(!j)\subseteq \rome _v(!j)$, this means that also $\calm , w, \romi _w\Vdash !j:j:A$ holds, as desired.

%=-closure of justif.
\item $\vcenter{\infer{\Gamma , \Delta \Rightarrow j:A(s/x)}{\Gamma \Rightarrow k:t\lid s & \Delta  \Rightarrow j:A(t/x)}} \quad$ 
By inductive hypothesis, for any state $w$  in any model $\calm$ that verifies $\Gamma$, it holds that  $\calm , w, \romi _w\Vdash k:t\lid s$ and, for any state $w$  in any model $\calm$ that verifies $\Delta$, it holds that $\calm , w, \romi _w\Vdash  j:A(t/x)$. Suppose that $w$ verifies $\Gamma$ and $\Delta$. First, it must hold that $ A(t/x)\in  \rome _w(j)$, $ t\lid s\in  \rome _w(k)$. Moreover, according to Definition \ref{def:forcing} for justifications, $\calm , u, \romi _u\Vdash A(t/x)$ and $\calm , u, \romi _u\Vdash t\lid s$ hold for any $u$ such that $w\romr_\gamma u$. Hence, also $\calm , u, \romi _u\Vdash A(s/x)$ holds for any $u$ such that $w\romr_\gamma u$. Moreover, since $\rome _w(j)$ must contain $ A(s/x)$ when $ A(t/x)\in  \rome _w(j)$ and there exists a $k$ such that $ t\lid s \in \rome _w(k)$, we have, by Definition \ref{def:forcing} for justifications, that $\calm , w, \romi _w\Vdash  j:A(s/x)
$ holds, as desired.

\item $\vcenter{\infer{\Gamma , \Delta \Rightarrow j:A(s/x)}{\Gamma \Rightarrow k:s\lid t & \Delta  \Rightarrow j:A(t/x)}}\quad$ This case is completely analogous to the previous one.

\item $\vcenter{\infer{\Gamma \Rightarrow A\rightarrow B}{\Gamma ,A \Rightarrow B}} \quad$ By inductive hypothesis, for any state $w$ in any model $\calm$ that verifies $\Gamma$ and $A$, it holds that  $\calm , w, \romi _w\Vdash B$. We show that each state that verifies $ \Gamma$ also verifies $A\rightarrow B$. Now, consider a generic state $v$ in a generic model $\caln$ such that $\caln , v, \romi _v\Vdash  G $ for each $G \in \Gamma$. If $\caln , v, \romi _v\nVdash A$, we have, by Definition \ref{def:forcing}, that $\caln , v, \romi _v\Vdash A\rightarrow B $, as desired. If, otherwise, $\caln , v, \romi _v\Vdash A$, we know, by local hypothesis, that $\caln , v, \romi _v\Vdash B$. But, by Definition \ref{def:forcing}, $\caln , v, \romi _v\Vdash B$ implies, as desired, $\caln , v, \romi _v\Vdash A\rightarrow B $. 

\item $\vcenter{\infer{\Gamma , \Delta \Rightarrow B}{\Gamma \Rightarrow A\rightarrow  B & \Delta \Rightarrow A}} \quad$ By inductive hypothesis, for any state $w$ in any model $\calm$ that verifies $\Gamma$ we have  $\calm , w, \romi _w\Vdash A\rightarrow B$  and for any state $w$ that verifies $\Delta $ we have  $\calm , w, \romi _w\Vdash A$. Let us consider a state $w$ that verifies $\Gamma, \Delta$. According to Definition \ref{def:forcing} for $\rightarrow$, $\calm , w, \romi _w\Vdash A\rightarrow B$ and $\rightarrow$, $\calm , w, \romi _w\Vdash B$ together imply that $\calm , w, \romi _w\Vdash B$, as desired.

\item $\vcenter{\infer{\Gamma , \Delta\Rightarrow A_1\wedge A_2}{\Gamma \Rightarrow A_1& \Delta \Rightarrow A_2}}\quad$ By inductive hypothesis, for any state $w$ in any model $\calm$ that verifies $\Gamma$ and for $i\in \{1,2\}$, we have that $\calm , w, \romi _w\Vdash A_i$. Let us consider a state $w$ that verifies $\Gamma, \Delta$. According to Definition \ref{def:forcing} for $\wedge$, $\calm , w, \romi _w\Vdash A_1$ and $\calm , w, \romi _w\Vdash A_2$ together imply that $\calm , w, \romi _w\Vdash A_1\wedge A_2$, as desired.

\item $\vcenter{\infer[i\in \{1,2\}]{\Gamma \Rightarrow A_i}{\Gamma \Rightarrow A_1\wedge A_2}}\quad$ By inductive hypothesis, for any state $w$ in any model $\calm$ that verifies $\Gamma$, it holds that  $\calm , w, \romi _w\Vdash A_1\wedge A_2$. We show that each state that verifies $ \Gamma$ also verifies $A_i$ for $i\in \{1,2\}$. Consider then a state $w$ in $\calm$  such that $\calm , w, \romi _w\Vdash  G $ for each $G \in \Gamma$. By the local hypothesis, we have that   $\calm , w, \romi _w\Vdash A_1\wedge A_2$ and hence, by Definition \ref{def:forcing} for $\wedge$, we have that $\calm , w, \romi _w\Vdash A_1 $ and $\calm , w, \romi _w\Vdash A_2 $, as desired.

\item $\vcenter{\infer{\Gamma \Rightarrow P}{\Gamma \Rightarrow \bot}}\quad$ By inductive hypothesis, for any state $w$ in any model $\calm$ that verifies $\Gamma$, it holds that  $\calm , w, \romi _w\Vdash \bot $. But since for any state $w$ in any model $\calm$ it holds that $\calm , w, \romi _w\nVdash \bot $, by contraposition, we have that no state $w$ exists such that $\calm , w, \romi _w\Vdash G $ for all $G\in \Gamma$. Therefore, vacuously, for any state  $w$ in any model $\calm$ such that $w$ verifies $\Gamma$, it holds that $\calm , w, \romi _w\Vdash P $, as desired.

\item $\vcenter{\infer{\Xi\Rightarrow \forall x . A }{\Xi \Rightarrow A[y/x]}}\quad$ By inductive hypothesis, all states that verify $\Xi$ also verify  $A[y/x]$. Since the premiss is derivable and because of the conditions on the rule application, it is possible to replace $y$ by any other term $t$ in the derivation of $\Xi \Rightarrow A[y/x]$ and obtain a derivation of $\Xi \Rightarrow A[t/x]$. Since $\Xi \Rightarrow A[t/x]$ is derivable by a derivation containing $n-1$ rule applications, we know that all states that verify $\Xi$ also verify  $A[t/x]$, for any term $t$. Now, Definition \ref{def:model} (existence, for any state $w$ and element $e\in \romu$, of a term $t$ such that $\romi _w (t)=e$), Lemma \ref{lem:subst-variant-corr} and Definition \ref{def:forcing} for $\forall$, imply that all states that verify $\Xi$ also verify $ \forall x . A$.

\item $\vcenter{\infer{\Gamma \Rightarrow A[t/x]}{\Gamma \Rightarrow \forall x . A}} \quad $  By inductive hypothesis, all states that verify $\Gamma $ also verify  $\forall x . A$. Hence, according to Definition \ref{def:forcing} for $\forall$, for any state $w$ in which $\Gamma$ holds, for any $x$-variant $f$ of $\romi _w$, $\calm , w, f \Vdash A$ must hold. According to Definition \ref{def:x-variant}, for any term $t$, there exists an $x$-variant $f'$ among these $x$-variants of $\romi _w$  such that $f'(x)=\romi ^w(t)$. By  Lemma \ref{lem:subst-variant-corr} and since $\calm , w, f' \Vdash A$, we can conclude, as desired, that $\calm , w, \romi_w \Vdash A[t/x] $.

\item $\vcenter{\infer{\Gamma \Rightarrow A}{\Gamma  \Rightarrow \neg \neg A}}\quad$ By inductive hypothesis, for any state $w$ in any model $\calm$ that verifies $\Gamma$, it holds that  $\calm , w, \romi _w\Vdash \neg \neg A$. We show that each state that verifies $ \Gamma$ also verifies $A$. Consider then a state $w$ in $\calm$  such that $\calm , w, \romi _w\Vdash  G $ for each $G \in \Gamma$. By the local hypothesis, we have that $\calm , w, \romi _w\Vdash \neg \neg A$ and hence, by Definition \ref{def:forcing} for $\rightarrow$ (indeed, $\neg A$ abbreviates $(A\rightarrow \bot )\rightarrow \bot $), we have that if $\calm , w, \romi _w\Vdash \neg A $ then $\calm , w, \romi _w\Vdash \bot$, which is impossible. Hence, $\calm , w, \romi _w\not\Vdash \neg A $. By Definition \ref{def:forcing} for $\rightarrow$, this implies that $\calm , w, \romi _w\Vdash A $ and $\calm , w, \romi _w\not\Vdash \bot$. This clearly boils down to $\calm , w, \romi _w\Vdash A $, which is exactly  what we needed to show.
\end{itemize}
\end{proof}

\noindent {\bf Lemma \ref{lem:ultrafilter}.}
{\it For any consistent set of formulae $\Sigma$, there exists a maximally consistent, counterexemplar  set (MCC) of formulae $\Sigma^*$ such that $\Sigma\subseteq \Sigma^*$.
}
\begin{proof}We construct an infinite chain $\Sigma_0\subset \Sigma_1 \subset \Sigma_2 \subset  \Sigma_3\dots $ of consistent sets of formulae {\ehi by standardly using a countable set of new variables}. Let $\Sigma_0=\Sigma$ and, given any ordering $A_1, A_2, A_3\dots $ of all the formulae of our language, we define the set $\Sigma_n$, for $n>0$, as follows:\begin{itemize} 
\item $\Sigma_n = \Sigma_{n-1}\cup \{A_n\}$ if $\Sigma_{n-1}\cup \{A_n\}$ is consistent,

\item $\Sigma_n = \Sigma_{n-1}\cup \{\neg A_n\}$ if  $\Sigma_{n-1}\cup \{A_n\}$ is not consistent and either $A_n\neq \forall x.B$ for any variable $x$ and formula $B$, or $A_n= \forall x.B$ and $\neg B[t/x] \in \Sigma_{n-1}$ for some term $t$,

\item $\Sigma_n = \Sigma_{n-1}\cup \{\neg A_n , \neg B[y/x]\}$ for some variable $y$ that does not occur in $\Sigma_{n-1}\cup \{\neg A_n\}$ if  $\Sigma_{n-1}\cup \{A_n\}$  is not consistent, $A_n= \forall x.B$, and, for each term $t$, $B[t/x] \notin \Sigma_{n-1}$.
\end{itemize}

Let\[\Sigma^*=\bigcup_{i\in \mathbb{N}} \Sigma_i\]

In order to show that $\Sigma^*$ is consistent, we first show that each set $\Sigma_n$ for $n\in \mathbb{N}$ is consistent. We do it by induction on $n$. If $n=0$, this is an assumption. Suppose that $\Sigma_i$ is consistent for any $i<n$, we show that $\Sigma_n$ is consistent. 

We have three cases:
\begin{itemize}

\item $\Sigma_n= \Sigma_{n-1}\cup \{A_n\}$. In this case, $\Sigma_n $ is obviously consistent since $\Sigma_n$ is defined as $\Sigma_{n-1}\cup \{A_n\}$ only if $\Sigma_{n-1}\cup \{A_n\}$ is consistent. 

\item $\Sigma_n=\Sigma_{n-1}\cup \{\neg A_n\}$ because $\Sigma_{n-1}\cup \{A_n\}$ is not consistent and either $A_n\neq \forall x.B$ for any variable $x$ and formula $B$, or $A_n= \forall x.B$ and $B[t/x] \in \Sigma_{n-1}$ for some term $t$. By reasoning indirectly, suppose that $\Sigma_n=\Sigma_{n-1}\cup \{\neg A_n\}$ is not consistent. This means that there exists a derivation of  $\Gamma , \neg A_n \Rightarrow \bot $ for some $\Gamma \subseteq \Sigma_{n-1}$. The formula $\neg A_n$ must be among the hypotheses of the derivation since, by inductive hypothesis, $s_{n-1} $ is consistent. Since there is a derivation with conclusion $\Gamma , \neg A_n \Rightarrow \bot $,  by an implication introduction (since $\neg A_n$ abbreviates $A_n\rightarrow \bot$), we can obtain a derivation with conclusion  $\Gamma \Rightarrow \neg \neg A_n$. Now, because of how $\Sigma_n$ si defined, we also know that $\Sigma_{n-1}\cup \{ A_n\} $ is not consistent. Hence, there exists a derivation with conclusion $\Delta , A_n \Rightarrow \bot $ for $ \Delta \subseteq \Sigma_{n-1}$. Again, the formula $A_n$ must be among the hypotheses of the derivation since, by inductive hypothesis, $\Sigma_{n-1} $ is consistent. Since there is a derivation with conclusion  $\Delta , A_n \Rightarrow \bot $,  by an implication introduction (since $\neg\neg  A_n$ abbreviates $\neg A_n\rightarrow \bot$), we can construct a derivation with conclusion $\Delta \Rightarrow \neg A_n$. Therefore, we have two derivations, one with conclusion $\Gamma \Rightarrow \neg\neg A_n$ and one with conclusion $\Delta \Rightarrow \neg A_n$ for $\Gamma \cup \Delta\subseteq \Sigma_{n-1}$. But this clearly implies that $\Sigma_{n-1}$ is not consistent: by an implication elimination we can show that $\bot$ is derivable from a subset of $\Sigma_{n-1}$. Since this goes against our assumptions about the consistency of $\Sigma_{n-1}$, we have that $\Sigma_n$ must be consistent also in this case.

\item $\Sigma_n = \Sigma_{n-1}\cup \{\neg A_n , \neg B[y/x]\}$ for some variable $y$ that does not occur in $\Sigma_{n-1}\cup \{\neg A_n\}$ because $\Sigma_{n-1}\cup \{A_n\}$ is not consistent, $A_n= \forall x.B$, and, for each term $t$, $B[t/x] \notin \Sigma_{n-1}$. By using exactly the same argument employed in the previous case, we can conclude that $\Sigma_{n-1}\cup \{\neg A_n\}= \Sigma_{n-1}\cup \{\neg \forall . B\}$ is consistent. Therefore, it is enough to show that, if $\Sigma_{n-1}\cup \{\neg \forall . B\}$ is consistent and $y$ dos not occur in $\Sigma_{n-1}\cup \{\neg \forall x . B\}$, then $\Sigma_{n-1}\cup \{\neg \forall . B , \neg B[y/x]\}$ is consistent too. By reasoning indirectly, suppose that $\Sigma_{n-1}\cup \{\neg \forall . B , \neg B[y/x]\}$ is not consistent. Hence, there exists a derivation of  $\Gamma , \neg B[y/x] \Rightarrow \bot $ for some $\Gamma \subseteq \Sigma_{n-1}\cup \{\neg \forall x.B\}$. From this derivation, since $y$ does not occur in $\Gamma \subseteq \Sigma_{n-1}\cup \{\neg \forall x.B\} $, we can construct the following derivation:\[\infer{\Gamma \Rightarrow  \forall x. B}{\infer{\Gamma  \Rightarrow  B[y/x]}{\infer{\Gamma  \Rightarrow \neg \neg B[y/x]}{\infer*{\Gamma , \neg B[y/x] \Rightarrow \bot }{}}}}\]where the first inference displayed is an implication introduction, the second a double negation elimination and the third a $\forall$ introduction.

Moreover, obviously, there exists a derivation with conclusion $ \neg \forall x.B  \Rightarrow \neg \forall x.B $ and $\{ \neg \forall x.B\}\subseteq \Sigma_{n-1}\cup \{\neg \forall . B \}$. Therefore, we have two derivations, one with conclusion $\Gamma \Rightarrow \forall x.B$ and one with conclusion $ \forall x.B\Rightarrow \neg \forall x.B$ where $\Gamma \cup \{\neg \forall . B\}\subseteq \Sigma_{n-1}$. This clearly implies, by an implication elimination, that $ \Sigma_{n-1}\cup \{\neg \forall . B \} \vdash \bot $ and thus that $ \Sigma_{n-1}\cup \{\neg \forall . B \} $  is not consistent. Since this goes against our assumptions, we have also in this case that $\Sigma_n$ must be consistent also in this case.
\end{itemize}
Hence, we can conclude that each set $\Sigma_n$ for $n\in \mathbb{N}$ must be consistent.

We show now that, from the fact that each element of the chain $\Sigma_0\subset \Sigma_1 \subset \Sigma_2 \subset  \Sigma_3\dots $ is consistent, it follows that also $\Sigma^*=\bigcup_{i\in \mathbb{N}} \Sigma_i$ is consistent. Suppose, indeed, that it is not. Then, by Definition \ref{def:max-cons},  $\Theta  \vdash \bot$ for a multiset $\Theta  $  of elements of $\Sigma^*$. But by Definition \ref{def:der}, there must exist a finite multiset  $\Theta '\subseteq\Theta $ such that $\Theta  ' \vdash \bot$. By construction of $\Sigma^*$, there must exist a $\Sigma_n $ such that $\Theta  '$ is a multiset of elements of $\Sigma_n$. If all formulae in $\Theta  '$ are in $\Sigma$, then $n=0$ and $\Sigma_n=\Sigma$. Otherwise, some formulae in $ \Theta  '$ must have been added to some element of the chain  $\Sigma_0\subset \Sigma_1 \subset \Sigma_2 \subset  \Sigma_3\dots $ during the construction of $\Sigma^*$. Consider the last element of $\Theta  '$ considered during the enumeration $A_1, A_2, A_3\dots $ used for the construction of $\Sigma^*$ and suppose it is $A_m$. Then we have our $\Sigma_n$: it is enough to set $n=m$. Thus, $\Sigma_n  \vdash \bot$. But this, obviously, means, against our assumptions, that $\Sigma_n$ is inconsistent. Hence,  $\Sigma^*$ must be consistent.

What is left to show is that $\Sigma^*$ is maximally consistent and counterexemplar. 

Let us first focus on the fact that $\Sigma^*$ is maximally consistent. By reasoning indirectly, let us suppose that there exists a formula $A\notin \Sigma^*$ such that $\Sigma^*\cup \{A\}$ is consistent. Now, even though the formula $A$ is not an element of $\Sigma^*$, it is an element of the enumeration $A_1, A_2, A_3 \dots $ Suppose that $A = A_n$. Therefore, it has been considered in order to define the set $\Sigma_n $ of the chain $\Sigma_0\subset \Sigma_1 \subset \Sigma_2 \subset  \Sigma_3\dots $ Since $ A=A_n$ can be consistently added to $\Sigma^*$ is must be possible to consistently add it to any subset of $\Sigma^*$. Hence, $\Sigma_n = \Sigma_{n-1} \cup \{A\}$ and $A\in \Sigma^*=\bigcup_{i\in \mathbb{N}} \Sigma_i$. But this contradict our assumption that $ A\notin \Sigma^*$. 

Finally, we need to show that $\Sigma^*$ is counterexemplar. Consider any element of $\Sigma^*$ of the form $\neg \forall x.A$. We show that $\Sigma^*$ is counterexemplar with respect to $\neg \forall x.A$, that is, that $\Sigma^*$    contains an element of the form $\neg A[t/x]$. Now, $\neg \forall x.A$ is either an element of $\Sigma$ or  has been added to a set $\Sigma_n $ during the definition of the chain $\Sigma_0\subset \Sigma_1 \subset \Sigma_2 \subset  \Sigma_3\dots $ In the second case, $\Sigma^*$ is clearly counterexemplar with respect to $\neg \forall x.A$ because of how $\Sigma_n$ has been defined. Let us then consider the first case. Now, even though the  formula $\neg \forall x.A$  is an element of $\Sigma$, it is also an element of the enumeration $A_1, A_2, A_3 \dots $ Suppose that $\neg \forall x.A= A_n$. Therefore, it has been considered in order to define the set $\Sigma_n $ of the chain $\Sigma_0\subset \Sigma_1 \subset \Sigma_2 \subset  \Sigma_3\dots $ Rather obviously, since $\neg \forall x.A\in \Sigma=\Sigma_0\subseteq \Sigma_{n-1}$ and since $\Sigma_{n-1}$ is consistent, $\Sigma_{n-1} \cup \{\neg \forall x.A\}=\Sigma_{n-1}$ cannot be inconsistent. Thus, $\Sigma_n$ has been defined as either $\Sigma_{n-1} \cup \{\neg \forall x.A , \neg A[y/x] \}$ for a $y$ that does not occur in $\Sigma_{n-1} \cup \{\neg \forall x.A\}$ or $\Sigma_{n-1} \cup \{\neg \forall x.A \}$ in case a formula of the form $\neg A[t/x]$ was already contained in $\Sigma_{n-1}$. In either case, $\Sigma^*$ is counterexemplar with respect to $\neg \forall x.A$.
\end{proof}

\noindent {\bf Lemma \ref{lem:canonical-model-model}.} {\it The canonical model $\canm$ is a model.}
\begin{proof}
We show that $\canr _a$ is reflexive and Euclidean for any agent name $a$. 
As for reflexivity, consider any agent name $a$ and state $w\in \canw$. Since $\know_a A \vdash  A$ and since Lemma \ref{lem:closure} guarantees us that MCC are closed with respect to derivability, we have that, for any $\know _a A\in w $, $A\in w$. By Definition \ref{def:canonical-model} then, $w\canr _a w$. As for Euclideanness, consider any agent name $a$ and states $w,v,w\in \canw$ and suppose that $w\canr _a v$ and $w \canr _a  u$. We need to show that $v\canr _a u$. Now, if $w\canr _a v$ and $w \canr _a  u$ hold, then $\know _a ^-(w)\subseteq v$ and $\know _a ^-(w)\subseteq u$ hold. Suppose, by reasoning indirectly, that  $v\canr _a u$ does {\it not} hold. By Definition \ref{def:canonical-model}, $\know _a ^-(v)\not\subseteq u$. Hence, there exists a formula $A$ such that $\know _a A\in v$ and $A\notin u$. Since $A\notin u$ and since $\know _a ^-(w)\subseteq u$, $\know _a A\notin  w$. This, the fact that $w$ is an MCC and Lemma \ref{lem:maximality}, imply that $\neg \know _a A\in  w$. But, since $\neg \know _a A\vdash  \know _a \neg \know _a A$ holds and since Lemma \ref{lem:closure} guarantees us that MCC are closed with respect to derivability, we have that $\know _a \neg \know _a A\in w $. This and the fact that $\know _a ^-(w)\subseteq v$ holds imply that $\neg \know _a A\in v$ holds. But this contradicts the fact that $\know _a A\in v$, because $v$ is and MCC and thus consistent. Hence, we can conclude, as desired, that, if $w\canr _a v$ and $w \canr _a  u$ hold, then  $v\canr _a u$ holds.

We show that $\canr _\gamma$ is reflexive and transitive. As for reflexivity, consider any  state $w\in \canw$. Since $j: A \vdash  A$ and since Lemma \ref{lem:closure} guarantees us that MCC are closed with respect to derivability, we have that, for any $j: A\in w $, $A\in w$. By Definition \ref{def:canonical-model} then, $w\canr _\gamma w$. As for transitivity, consider any agent name $a$ and states $w,v,w\in \canw$ and suppose that $w\canr _\gamma v$ and $v \canr _\gamma  u$. We need to show that $w\canr _\gamma u$. Now, if $w\canr _\gamma v$ and $v \canr _\gamma  u$ hold, then $\gamma ^-(w)\subseteq v$ and $ \gamma^-(v)\subseteq u$ hold. Suppose, by reasoning indirectly, that  $w\canr _\gamma u$ does {\it not} hold. By Definition \ref{def:canonical-model}, $\gamma ^-(w)\not\subseteq u$. Hence, there exists a formula $A$ such that $j: A\in w$ and $A\notin u$. But since $j: A\in w$, $j: A\vdash !j:j:A$ and $w$ is an MCC, Lemma \ref{lem:closure} guarantees us that $ !j:j:A\in w$. But we also know that $\gamma ^-(w)\subseteq v$ and hence that $j:A\in v $. And $j:A\in v $ along with the fact that $\gamma ^-(v)\subseteq u$ implies that $A\in u$, which contradicts the fact that $A\notin u$. Hence, we must conclude that $w\canr _\gamma v$ and $v \canr _\gamma  u$ do indeed imply that $w\canr _\gamma u$.

We now show that $\cani$ is suitably defined as a family of interpretation functions on $\canw$. In particular, we have to show, first of all that, for any element $e\in \canu$ and state $w\in \canw$, there exists a term $t$ such that $\cani_w (t)=e$; secondly, that the interpretation of $\lid $ is always a reflexive, symmetric and transitive relation; and, thirdly, that the interpretation of predicates is substitutive, as indicated in Definition \ref{def:model}. 

We first show that, at any state $w\in \canw$,  all elements of $\canu$ are the interpretation of some term. Now, according to the construction of $\canu$ in Definition \ref{def:canonical-model}, it is clear that, for any $e \in \canu$, $e$ must be of the form $\underline{t}$ for some term $t$. Definition \ref{def:canonical-model}, moreover, indicates that $\cani_w (t)=\underline{t}$, for any $w\in \canw$. Hence, we have our term $t$ such that $\cani_w (t)=\underline{t}=e$.

Let us then focus on $\lid $ first. Since Lemma \ref{lem:closure} guarantees us that all elements of $\canw$ are closed with respect to derivability, by the existence of the following derivations\[\infer{ \Rightarrow t\lid t}{}\qquad\infer{t\lid s \Rightarrow s\lid t}{\infer{t\lid s \Rightarrow t\lid s}{}}\qquad \infer{ u\lid t, t\lid v \Rightarrow u\lid v}{\infer{u\lid t \Rightarrow u\lid t}{}&\infer{t\lid v \Rightarrow t\lid v}{}} \]it is easy to conclude that, for any $w\in \canw$, $\cani_w(\lid)$  complies with definition \ref{def:model}.

Let us then consider $\cani_w(P)$ for a generic predicate $P$ of arity $n$ and state $w\in \canw$. Now, suppose that $(\underline{t_1} , \dots  , \underline{t_n})\in\cani _w (P)$ and that $(\underline{t_i},\underline{s})\in \cani_w (\lid )$ for some $1\leq i \leq n$. We need to show that 
$(\underline {t_1'} , \dots , \underline {t_n'})\in \cani _w (P)$ where, for any $1\leq j\leq n$,  $\underline {t_j'}=\underline {t_j}$ in case $\underline {t_j}\neq \underline {t_i}$ and $\underline {t_j'}=\underline{s}$ in case $\underline {t_j}=\underline {t_i}$. By Definition \ref{def:canonical-model}, $(\underline{t_1} , \dots  , \underline{t_n})\in\cani _w (P)$ implies that $P(t_1 , \dots  , t_n)\in w $, and 
$(\underline{t_i},\underline{s})\in \cani_w (\lid )$ implies that $t_i \lid s\in w$. Moreover, Lemma \ref{lem:closure} guarantees us that $w$ is closed under derivability. Hence, the existence of the following derivation
\[\qquad \infer{t_i\lid s , P(t_1 , \dots  , x, \dots , t_n)(t_i/x) \Rightarrow P(t_1 , \dots  , x, \dots , t_n)(s/x)}{\infer{t_i\lid s\Rightarrow t_i\lid s}{}& \infer{P(t_1 , \dots  , x, \dots , t_n)(t_i/x) \Rightarrow P(t_1 , \dots  , x, \dots , t_n)(t_i/x)}{}}\](where the list of terms $t_1 , \dots  , x, \dots , t_n$ is constructed in such a way that  $x$ occurs at any position at which a term $t=t_i$ occurs in $t_1 , \dots , t_n$) guarantees us that  $P(t_1 , \dots  , t_n)\in w $ and $t_i \lid s\in w$ imply that
\[P(t_1' , \dots  , t_n')\in w \]
where, for any $1\leq j\leq n$,  $t_j'=t_j$ in case $ t_j\neq t_i$, and $t_j'= s$ in case $ t_j= t_i$. But then, by the clause on $\cani$ in  Definition \ref{def:canonical-model}, we obtain exactly what we need:
\[(\underline {t_1'} , \dots , \underline {t_n'})\in \cani _w (P)\]
where, for any $1\leq j\leq n$,  $\underline {t_j'}=\underline {t_j}$ in case $\underline {t_j}\neq \underline {t_i}$, and $\underline {t_j'}=\underline{s}$ in case $\underline {t_j}=\underline {t_i}$.

We show that, for any state $w\in \canw$, the function $ \cane _w (t)$ complies with the following conditions:
\begin{enumerate}
\item \label{can-subs} for any state $v\in \canw$ and term $t$, if $w \canr _\gamma v$, then $ \cane_w(t)\subseteq  \cane_v(t)$,

%%%%NecJustifications
%\item \label{can-nec-justif} { if $j_1:A_1 ,  \dots , j_n:A_n \vdash A$ holds, then $A\in \rome_w(x^{j_1 , \dots , j_n})$ for the designated variable $x^{j_1 , \dots , j_n}$,}

%\item  \label{can-id-justif}  for any two terms $t,s$, if $( \cani_w(t), \cani_w(s))\in  \cani_w(\lid)$, then $  \cane_w(t)= \cane_w(s)$, 

\item  \label{can-id-justified} for any term $t$, if $( \cani_w(t), \cani_w(s))\in  \cani_w(\lid)$ and $ A[t/x]\in  \cane_w(t)$, then $ A[s/x]\in  \cane_w(t)$,

\item \label{can-checker}  for any term $j$, if $A\in  \cane_w(j)$, then $ j:A \in \cane_w(!j)$.

\item \label{can-applic} for any two terms $j,k$, if $ A\rightarrow B \in  \cane_w(j)$ and $ A \in  \cane_w(k)$, then $ B\in \cane_w(jk)$.

% =-closure justifications
\item \label{can-id-clos-just}  for any term $j$, if $A(t/x)\in \cane_w(j)$ and there exists a $k$ such that $
t\lid s \in \cane_w(k)$ or $
s\lid t \in \cane_w(k)$, then $A(s/x)\in \cane_w(j)$.
\end{enumerate}

Let us consider \cref{can-subs}. If $A\in \cane_v(j) $, then $j:A\in  v$. But since $j:A\vdash !j:j:A$, we also have $ !j:j:A\in  v$. By Definition \ref{def:canonical-model}, $ v \canr _\gamma w$ implies that $j:A\in w$ and thus that $A\in \cane_v(j) $.

%%%NecJustifications
%{ Let us consider  \cref{can-nec-justif}. If $j_1:A_1,  \dots , j_n:A_n \vdash A$ holds, then $j_1:A_1,  \dots , j_n:A_n \vdash x^{j_1,  \dots , j_n}: A$ holds. But then, by Lemma \ref{lem:closure}, we have that $x^{j_1,  \dots , j_n}: A\in w$. Hence, by Definition \ref{def:canonical-model}, $A\in \cane _w (x^{j_1,  \dots , j_n})$, as desired. }

Let us consider \cref{can-id-justified}. If $( \cani_w(t), \cani_w(s))\in  \cani_w(\lid)$, we have that  $t\lid s\in w $. If, moreover, $ A[t/x]\in  \cane_w(j)$, we have that $j:A[t/x]\in w$. Since $t\lid s, j:A[t/x]\vdash j:A[s/x]$ and by Lemma \ref{lem:closure}, $j:A[s/x]\in w $, which implies that $ A[s/x]\in  \cane_w(j)$ by Definition \ref{def:canonical-model}.

Let us consider \cref{can-checker}. If $A\in  \cane_w(j)$, we have that $j:A\in w $. Since $j:A\vdash !j:j:A$ and by Lemma \ref{lem:closure}, $!j:j:A \in w $, which implies that $ j:A\in  \cane_w(!j)$ by Definition \ref{def:canonical-model}.

Let us consider \cref{can-applic}. If $ A\rightarrow B \in  \cane_w(j)$ and $ A \in  \cane_w(k)$, we have that $j:A\rightarrow B, k:A\in w$. 
Since $j:A\rightarrow B, k:A\vdash jk:B$ and by Lemma \ref{lem:closure}, $jk:B \in w $, which implies that  $ B\in \cane_w(jk)$ by Definition \ref{def:canonical-model}.

Let us consider \cref{can-id-clos-just}. If $A(t/x)\in \cane_w(j)$, we have that $j:A(t/x)\in w$ by Definition \ref{def:canonical-model}. If, moreover, $t\lid s\in \cane_w(k)$ or $s\lid t\in \cane_w(k)$, we have that $k:t\lid s$ or, respectively, $k:s\lid t$. Thus, by Lemma \ref{lem:closure}, we can conclude that $j:A(s/x)\in w$. But this implies that $A(s/x)\in \cane_w(j)$ by Definition \ref{def:canonical-model}.
\end{proof}

\noindent {\bf Lemma \ref{lem:existence}} (Existence lemma){\bf .} {\it 
For any $w\in \canw$, if $\neg \know  _a A\in w $, then there exists a $v\in \canw$ such that $\neg A\in v$ and $w\canr_a v$.}
\begin{proof}

  Assume that $\neg \know  _a A\in w $. {\ehi We show that we can construct a maximally consistent and counterexemplar set containing $\neg A$ in the same language as $w$. 
  %Let $\overline{v} = \{\neg A\}\cup \know_a ^-(w)$.
  Suppose that $U_1, U_2, U_3\dots $ is the list of all formulae of the language of the form $ \forall x D$. We define the succession $G _0, G _1 , G_2 ,  \dots $ by taking $G _0 = \neg A$ and $G_{n+1}= G _n \wedge (\neg \forall x D \rightarrow \neg D [y/x]) $ where $\forall x D =U_n$ and $y$ is chosen in such a way that  $\know_a  ^-(w)\cup \{G _n \wedge  (\neg \forall x D \rightarrow \neg D [y/x])\} $ is consistent. We show by induction on $n$ that it is possible to always choose a suitable $y$.

As for the base case,
% in order to prove the consistency of $\know_a  ^-(w)\cup \{G _0\}$ we use the standard argument for propositional modal logic, presented in detail after this inductive argument.
}
% Assume that $\neg \know  _a A\in w $.
let $\overline{v} = \know_a ^-(w)\cup \{\neg A\}= \know_a ^-(w)\cup \{G_0\}$. We show that $\overline{v}$ is consistent. By reasoning indirectly, let us suppose that $\overline{v}$ is inconsistent. If this is the case then, for a list of formulae $B_1 , \ldots , B_n \in \know_a ^-(w)$, it must be the case that  $B_1 \wedge \ldots \wedge B_n \rightarrow \neg\neg A$ is a theorem of the logic that we are considering. Indeed, if $\{B_1,  \ldots  B_n , \neg A\}$ is contradictory, we can derive $\bot $ from it by a derivation of the form
\[\infer*{B_1 , \ldots , B_n , \neg A \Rightarrow  \bot }{\infer{B_1  \Rightarrow  B_1 }{}&\dots &\infer{B_n  \Rightarrow  B_n }{}&\infer{\neg A\Rightarrow \neg A}{}
}\]
and hence we can also construct the following derivation:
\[\infer{\Rightarrow\know _a (B_1 \wedge \ldots \wedge B_n) \rightarrow\know _a  A}{\infer{\Rightarrow\know _a (B_1 \wedge \ldots \wedge B_n \rightarrow A)}{\infer{\Rightarrow B_1 \wedge \ldots \wedge B_n \rightarrow A}{\infer{ B_1 \wedge \ldots \wedge B_n \Rightarrow A}{\infer{B_1 \wedge \ldots \wedge B_n \Rightarrow \neg \neg A }{\infer*{B_1 \wedge \ldots \wedge B_n , \neg A \Rightarrow  \bot }{\infer{B_1 \wedge \ldots \wedge B_n  \Rightarrow  B_1 }{\infer{B_1 \wedge \ldots \wedge B_n  \Rightarrow  B_1 \wedge \ldots \wedge B_n }{}}&\dots &\infer{B_1 \wedge \ldots \wedge B_n  \Rightarrow  B_n }{\infer{B_1 \wedge \ldots \wedge B_n  \Rightarrow  B_1 \wedge \ldots \wedge B_n }{}}&\infer{\neg A\Rightarrow\neg  A}{}}
}}}}}\]by which we show that $\know _a (B_1 \wedge \ldots \wedge B_n) \rightarrow\know _a   A$ is a theorem of our logic. Let us call this one $\delta_1$ and let us call the following one $\delta _2$:
{\footnotesize \[ \infer{\know _a B_1 \wedge \ldots \wedge \know _a  B_n\Rightarrow \know _a (B_1 \wedge \ldots \wedge B_n)}{\infer*[\rightarrow \textrm{elim.}]{\bigwedge\know _a \overline{B}\Rightarrow \know _a B_n  \rightarrow \know _a (B_1 \wedge \ldots \wedge B_n)}{\infer{\bigwedge\know _a \overline{B}\Rightarrow \know _a B_2  \rightarrow\dots  \rightarrow \know _a B_n  \rightarrow \know _a (B_1 \wedge \ldots \wedge B_n)}{\infer{\Rightarrow \know _a B_1  \rightarrow \know _a B_2  \rightarrow \dots  \rightarrow \know _a B_n  \rightarrow \know _a (B_1 \wedge \ldots \wedge B_n)}{
\infer*[\rightarrow\textrm{intro.}]{\know _a B_1\Rightarrow  \know _a B_2  \rightarrow \dots  \rightarrow \know _a B_n  \rightarrow \know _a (B_1 \wedge \ldots \wedge B_n)}{\infer{
\know _a B_1, \ldots , \know _a  B_{n-1}
 \Rightarrow  \know _a B_n  \rightarrow \know _a (B_1 \wedge \ldots \wedge B_n)}{\infer{ \know _a B_1, \ldots , \know _a  B_{n-1}  , \know _a  B_n\Rightarrow    \know _a (B_1 \wedge \ldots \wedge B_n)}{\infer*[\wedge\textrm{intro.}]{\know _a B_1, \ldots , \know _a  B_{n-1}  , \know _a  B_n \Rightarrow B_1 \wedge \ldots \wedge B_n}{ \infer{\know _a  B_1\Rightarrow B_1}{ \infer{\know _a B_1\Rightarrow\know _a B_1}{}}  &\dots & \infer{\know _a  B_n\Rightarrow B_n}{\infer{\know _a B_n\Rightarrow\know _a B_n}{}}}}}}
}&\infer*[\wedge\textrm{elim.}]{\bigwedge\know _a \overline{B}\Rightarrow \know _a B_1}{\infer{\bigwedge\know _a \overline{B} \Rightarrow\bigwedge\know _a \overline{B}}{}}}}  & \infer*[\wedge\textrm{elim.}]{\bigwedge\know _a \overline{B} \Rightarrow \know _a B_n}{\infer{\bigwedge\know _a \overline{B} \Rightarrow\bigwedge\know _a \overline{B}}{} }}\]}
where we denote by $\bigwedge \know _a \overline{B}$ the formula $\know _a B_1 \wedge \ldots \wedge \know _a  B_n$ and we recall the reader that we associate $\rightarrow $ to the right.

By derivations $\delta_1$ and $\delta_2$ above, we can finally construct the following derivation:\[\infer{\know _a B_1 \wedge \ldots \wedge \know _a  B_n\Rightarrow \know _a A}{\deduce{\know _a B_1 \wedge \ldots \wedge \know _a  B_n \Rightarrow \know _a (B_1 \wedge \ldots \wedge B_n) }{\delta_2}&\deduce{\Rightarrow\know _a (B_1 \wedge \ldots \wedge B_n) \rightarrow\know _a   A}{\delta_1}}\]
which shows that $\know _a B_1 \wedge \ldots \wedge \know _a  B_n\vdash \know _a  A $ and hence that $\know _a  A \in w$, because $\know _a B_1 \wedge \ldots \wedge \know _a  B_n\in w $ and because  Lemma \ref{lem:closure} guarantees that MCC are closed with respect to derivability. But $\know _a  A \in w$  contradicts the hypotheses that $  \neg \know _a A \in w$ since $w$  is consistent. Therefore
% $\overline{v} = \{\neg A\}\cup \know_a ^-(w)$
 $\overline{v} = \know_a ^-(w)\cup \{\neg A\}= \know_a ^-(w)\cup \{G_0\}$ must be  consistent.

% Hence, according to  Lemma \ref{lem:ultrafilter}, we can then extend $\overline{v}$ to the MCC $\overline{v}^*$. Now, by Definition \ref{def:canonical-model}, $\overline{v}^*\in \canw$. Obviously, $\neg A \in \overline{v}^*$ and, again by Definition \ref{def:canonical-model}, $w\canr_a  \overline{v}^*$ since $ \know_a ^-(w)\subseteq \overline{v}^* $. We have thus showed, as desired,  that  $\overline{v}^* \in \canw$ exists, that $\neg A\in \overline{v}^* $ and $w\canr_a \overline{v}^* $.

{\ehi
  Suppose then that $\know_a  ^-(w)\cup \{G _n \}$ is consistent, we prove, by contradiction, that there exists a $y$ for which   $\know_a  ^-(w)\cup \{G _n \wedge(\neg \forall x D \rightarrow \neg D [y/x])\} $ is consistent too. Let us assume that that no such $y$ exists. For each $y$ there must be  a list $B_1, \dots , B_n$ of formulae in $\know_a  ^-(w)$ such that  $B_1\wedge \dots \wedge B_n \wedge G_n \wedge (\neg \forall x D \rightarrow \neg D [y/x])$ is contradictory. Thus, $B_1\wedge \dots \wedge B_n \rightarrow  (G_n \rightarrow \neg (\neg \forall x D \rightarrow \neg D [y/x]))$ is a theorem of the logic.  Hence, by necessitation, also $\know_a( B_1\wedge \dots \wedge B_n \rightarrow  (G_n \rightarrow  \neg (\neg \forall x D \rightarrow \neg D [y/x])))$ is a theorem, and, by the axiom K, also $\know_a( B_1\wedge \dots \wedge B_n) \rightarrow  \know_a(G_n \rightarrow  \neg (\neg \forall x D \rightarrow \neg D [y/x]))$. But since $B_1, \dots ,  B_n\in \know_a  ^-(w) $, we have that $\know_a B_1, \dots ,  \know_a B_n\in w $ and, by conjunction introduction and the axiom K, also that  $ \know_a( B_1\wedge \dots \wedge B_n)\in w$. Hence, by implication elimination, $\know_a (G_n \rightarrow  \neg (\neg \forall x D \rightarrow \neg D [y/x])) \in w$.

Now, we have assumed that $y$ is a generic variable occurring in $w$. Hence, $\know_a (G_n \rightarrow  \neg (\neg \forall x D \rightarrow \neg D [y/x])) \in w$ for each $y$. Thus, $w$ cannot contain $ \neg \forall z \know_a (G_n \rightarrow  \neg (\neg \forall x D \rightarrow \neg D [z/x]))$ because it is counterexemplar. Since $w$ cannot contain $ \neg \forall z \know_a (G_n \rightarrow   \neg (\neg \forall x D \rightarrow \neg D [z/x]))$, it must contain $ \forall z \know_a (G_n \rightarrow  (\neg D [z/x]  \rightarrow \forall x D))$ since it is maximal. Moreover, the fact that $y$ could be any variable occurring in $w$ implies that we can certainly find a variable $z$ that does not appear in $G_n$ such that $\forall z \know_a (G_n \rightarrow \neg (\neg \forall x D \rightarrow \neg D [z/x]))\in w$.

Since, as we will show below, the Barcan Formula (BF) is a theorem of our logic, from $\forall z \know_a (G_n \rightarrow   (\neg D [z/x]  \rightarrow \forall x D))\in w$, we can derive  $ \know_a\forall z (G_n \rightarrow   \neg (\neg \forall x D \rightarrow \neg D [z/x]))$, which implies that $ \know_a\forall z (G_n \rightarrow   \neg (\neg \forall x D \rightarrow \neg D [z/x]))\in w$. Since $z$ does not appear in $G_n$, it also follows that $ \know_a (G_n \rightarrow \forall z   \neg (\neg \forall x D \rightarrow \neg D [z/x]))\in w$.

% , and, by contraposition, that $ \know_a ( \neg \forall z   D[z/x]\rightarrow \neg G_n)\in w$ as well.

%By the axiom K, $ \know_a \neg \forall z D[z/x]\rightarrow \know_a  \neg G_n \in w$.

Now, $\forall z   \neg (\neg \forall x D \rightarrow \neg D [z/x])$ is classically contradictory. Indeed, intuitively, if there exists some $z$ such that $\neg D [z/x]$ is true, then the implication is true in virtue of the fact that the consequent is true for that $z$; if no $z$ verifies $\neg D [z/x]$, then $D [z/x]$ is true for any $z$, and thus the implication is true in virtue of the fact that the antecedent is false. In either case, the formula is false. Since the fragment of the calculus that only includes the classical rules in Figure \ref{fig:con-quant-rules} is a standard, complete calculus for classical logic, we can construct a proof of $\neg \forall z   \neg (\neg \forall x D \rightarrow \neg D [z/x])$ in a rather standard way.
%%%
%%%NEED PROOF? FIND IT HERE
%%%
Indeed, we can translate the instance $\forall x D\vee\neg\forall x D$ of the law of the excluded middle  as $\neg(\neg \forall x D\wedge \neg \neg\forall x D)$ and prove it as follows
\[\infer{\Rightarrow \neg(\neg \forall x D\wedge \neg \neg\forall x D)}{\infer{\neg \forall x D\wedge \neg \neg\forall x D\Rightarrow \bot}{\infer{\neg \forall x D\wedge \neg \neg\forall x D\Rightarrow  \neg\neg\forall x D}{\infer{\neg \forall x D\wedge \neg \neg\forall x D\Rightarrow  \neg \forall x D\wedge \neg \neg\forall x D}{}}&\infer{\neg \forall x D\wedge \neg \neg\forall x D\Rightarrow  \neg\forall x D}{\infer{\neg \forall x D\wedge \neg \neg\forall x D\Rightarrow  \neg \forall x D\wedge \neg \neg\forall x D}{}}}}\]
And, in order to simulate  an elimination of the disjunction $\forall x D\vee\neg\forall x D$ of the form
\[\infer{\Gamma\Rightarrow F}{\deduce{\Rightarrow \forall x D\vee\neg\forall x D}{\dots}&
\deduce{\Gamma , \forall x D\Rightarrow F}{\dots}
&
\deduce{\neg \forall x D , \Gamma \Rightarrow F}{\dots}
}\]we can use the following standard schema:
\[\infer{\Gamma\Rightarrow F}{\infer{\Gamma\Rightarrow \neg\neg F}{\infer{\neg F , \Gamma\Rightarrow \bot}{\deduce{\Rightarrow \neg(\neg \forall x D\wedge \neg \neg\forall x D)}{\dots}&\infer{\neg F , \Gamma\Rightarrow \neg \forall x D\wedge \neg \neg\forall x D}{\infer{\neg F , \Gamma\Rightarrow \neg \forall x D}{\infer{\neg F , \Gamma, \forall x D\Rightarrow\bot}{\infer{\neg F\Rightarrow \neg F}{}&\deduce{ \Gamma , \forall x D  \Rightarrow F}{\dots}
}}&\infer{\neg F , \Gamma\Rightarrow \neg \neg\forall x D}{\infer{\neg F  , \neg \forall x D , \Gamma\Rightarrow\bot}{\infer{\neg F\Rightarrow \neg F}{}&\deduce{\neg \forall x D , \Gamma \Rightarrow F}{\dots}
}}}}}}\]
Therefore, we can translate the following proof in our calculus:
\[\infer{\Rightarrow \neg\forall z   \neg (\neg \forall x D \rightarrow \neg D [z/x])}{\infer{\forall z   \neg (\neg \forall x D \rightarrow \neg D [z/x])\Rightarrow \bot}{\deduce{\Rightarrow\forall x D\vee\neg\forall x D}{\dots}&
\deduce{\forall z   \neg (\neg \forall x D \rightarrow \neg D [z/x]), \forall x D\Rightarrow\bot}{\delta _1}
%
%\infer{\forall z   \neg (\neg \forall x D \rightarrow \neg D [z/x]), \forall x D\Rightarrow\bot}{\infer{\forall z   \neg (\neg \forall x D \rightarrow \neg D [z/x])\Rightarrow\neg (\neg \forall x D \rightarrow \neg D [z/x])}{\infer{\forall z   \neg (\neg \forall x D \rightarrow \neg D [z/x])\Rightarrow \forall z   \neg (\neg \forall x D \rightarrow \neg D [z/x])}{}}&
%\infer{ \forall x D \Rightarrow\neg \forall x D\rightarrow\neg D[z/x]}{
%\infer{\neg \forall x D , \forall x D\Rightarrow\neg D[z/x]}{\infer{\neg \forall x D , \forall x D\Rightarrow\bot}{
%\infer{\neg \forall x D\Rightarrow\neg \forall x D}{} & \infer{ \forall x D\Rightarrow\forall x D}{}}}}
%}
&
\deduce{\neg \forall x D , \forall z   \neg (\neg \forall x D \rightarrow \neg D [z/x]) \Rightarrow\bot}{\delta _2}
%
%\infer{\neg \forall x D , \forall z   \neg (\neg \forall x D \rightarrow \neg D [z/x]) \Rightarrow\bot}{\infer{\neg \forall x D\Rightarrow \neg \forall x D}{}
%&
%\infer{\forall z   \neg (\neg \forall x D \rightarrow \neg D [z/x])\Rightarrow  \forall x D}{\infer{\forall z   \neg (\neg \forall x D \rightarrow \neg D [z/x])\Rightarrow D[y/x]}{\infer{\forall z   \neg (\neg \forall x D \rightarrow \neg D [z/x]) \Rightarrow  \neg\neg D[y/x]}{\infer{\forall z   \neg (\neg \forall x D \rightarrow \neg D [z/x])  , \neg D[y/x]\Rightarrow  \bot}{
%\infer{\forall z   \neg (\neg \forall x D \rightarrow \neg D [z/x]) \Rightarrow   \neg (\neg \forall x D \rightarrow \neg D [z/x])}{\infer{\forall z   \neg (\neg \forall x D \rightarrow \neg D [z/x]) \Rightarrow\forall z   \neg (\neg \forall x D \rightarrow \neg D [z/x])}{}}
%&
%\infer{\neg D[y/x]\Rightarrow \neg \forall x D \rightarrow \neg D [y/x]}{\infer{\neg D[y/x]\Rightarrow\neg D[y/x]}{}}}}}}}
%
}}\]where $\delta_1$ is
\[
\infer{\forall z   \neg (\neg \forall x D \rightarrow \neg D [z/x]), \forall x D\Rightarrow\bot}{\infer{\forall z   \neg (\neg \forall x D \rightarrow \neg D [z/x])\Rightarrow\neg (\neg \forall x D \rightarrow \neg D [z/x])}{\infer{\forall z   \neg (\neg \forall x D \rightarrow \neg D [z/x])\Rightarrow \forall z   \neg (\neg \forall x D \rightarrow \neg D [z/x])}{}}&
\infer{ \forall x D \Rightarrow\neg \forall x D\rightarrow\neg D[z/x]}{
\infer{\neg \forall x D , \forall x D\Rightarrow\neg D[z/x]}{\infer{\neg \forall x D , \forall x D\Rightarrow\bot}{
\infer{\neg \forall x D\Rightarrow\neg \forall x D}{} & \infer{ \forall x D\Rightarrow\forall x D}{}}}}
}
\]and $\delta_2$ is 
\[\infer{\neg \forall x D , \forall z   \neg (\neg \forall x D \rightarrow \neg D [z/x]) \Rightarrow\bot}{\infer{\neg \forall x D\Rightarrow \neg \forall x D}{}
&
\infer{\forall z   \neg (\neg \forall x D \rightarrow \neg D [z/x])\Rightarrow  \forall x D}{\infer{\forall z   \neg (\neg \forall x D \rightarrow \neg D [z/x])\Rightarrow D[y/x]}{\infer{\forall z   \neg (\neg \forall x D \rightarrow \neg D [z/x]) \Rightarrow  \neg\neg D[y/x]}{\infer{\forall z   \neg (\neg \forall x D \rightarrow \neg D [z/x])  , \neg D[y/x]\Rightarrow  \bot}{
\infer{\forall z   \neg (\neg \forall x D \rightarrow \neg D [z/x]) \Rightarrow   \neg (\neg \forall x D \rightarrow \neg D [z/x])}{\infer{\forall z   \neg (\neg \forall x D \rightarrow \neg D [z/x]) \Rightarrow\forall z   \neg (\neg \forall x D \rightarrow \neg D [z/x])}{}}
&
\infer{\neg D[y/x]\Rightarrow \neg \forall x D \rightarrow \neg D [y/x]}{\infer{\neg D[y/x]\Rightarrow\neg D[y/x]}{}}}}}}}
\]

Now, since, as just shown, $ \forall z   \neg (\neg \forall x D \rightarrow \neg D [z/x])$ is a contradiction, $G_n \rightarrow \forall z   \neg (\neg \forall x D \rightarrow \neg D [z/x])$ is logically equivalent to $\neg G_n$. Hence, since $ \know_a (G_n \rightarrow \forall z   \neg (\neg \forall x D \rightarrow \neg D [z/x]))\in w$, we must have that $\know_a  \neg G_n\in w$.
But this means that $\neg G_n\in\know_a  ^-(w)$ and thus that $\know_a  ^-(w)\cup \{G _n \}$ is not consistent. This contradicts our assumption on $\know_a  ^-(w)\cup \{G _n \}$. Hence, there exists a $y$ for which $\know_a  ^-(w)\cup \{G _n \wedge \neg D [y/x]\} $ is consistent.

This concludes our induction and proves that we can consistently extend $\know_a  ^-(w)$ to a set defined as the union $\know_a  ^-(w)\cup \{G_1 , G_2, G_3, \dots \} $ where $G _0 = \neg A$ and $G_{n+1}= G _n \wedge (\neg \forall x D \rightarrow \neg D [y/x]) $. Note, indeed, that the  consistency of $ \{G_1 , G_2, G_3, \dots \} $ is guaranteed by the fact that each $G_n$ is consistent and implies $G_{n-1}$. Since, moreover, all negated universally quantified  formulae of the language have been considered in the construction of $\know_a  ^-(w)\cup \{G_1 , G_2, G_3, \dots \} $, for every possible formula required to make the set  $\know_a  ^-(w)\cup \{G_1 , G_2, G_3, \dots \} $ counterexemplar, a suitable variable in the language of $w$ has already been selected. By Lemma \ref{lem:ultrafilter}, we can now extend $\know_a  ^-(w)\cup \{G_1 , G_2, G_3, \dots \} $ to a maximally consistent and counterexemplar set  $\overline{v}^* $ without further extending the language with respect to the variables occurring in $w$. Indeed, whenever we need to add a formula $\neg \forall x D$ to the set which is being constructed by the procedure in the proof of  Lemma \ref{lem:ultrafilter}, the presence of the formula $\neg \forall x D \rightarrow \neg D [y/x]$ in the original set will already give us a suitable instance $ \neg D [y/x]$ that can make the set counterexemplar without introducing a new variable in the language.

Since $\overline{v}^* $ is maximally consistent and counterexemplar,
% set $\overline{v}^* $ without extending its language with new variables but by only using the variables already occurring in $w$.
% Indeed, during the extension defined in the proof of  Lemma \ref{lem:ultrafilter}, formulae of the form $\neg \forall x A$ are only added in case adding $\forall x A $ does not yield a consistent set. Moreover, formulae guaranteeing that the resulting set is counterexemplar are only added when required.
by Definition \ref{def:canonical-model}, $\overline{v}^*\in \canw$. Obviously, $\neg A \in \overline{v}^*$ and, again by Definition \ref{def:canonical-model}, $w\canr_a  \overline{v}^*$ since $ \know_a ^-(w)\subseteq \overline{v}^* $. We have thus showed, as desired,  that  $\overline{v}^* \in \canw$ exists, that $\neg A\in \overline{v}^* $ and $w\canr_a \overline{v}^* $.

We just need to show now that the Barcan formula (BF) $\forall x \know _a A\rightarrow  \know _a\forall x A$, used in the argument above, is a theorem of the calculus:
\[
\infer{\Rightarrow\forall x \know _a A\rightarrow  \know _a\forall x A } {
    \infer{\forall x \know _a A\Rightarrow\know _a\forall x A} {
      \deduce{ \Rightarrow\know _a\neg \know _a\neg \forall x\know _a A \rightarrow  \know _a\forall x A} {\delta_1}
      &
      \deduce{\forall x \know _a A\Rightarrow\know _a\neg \know _a\neg \forall x\know _a A} {\delta _2}
    }
  }\]
where $\delta_1 $ is
\[\infer{ \Rightarrow\know _a\neg \know _a\neg \forall x\know _a A \rightarrow  \know _a\forall x A} {
\infer{ \know_a\neg\know_a \neg \forall x \know_a A\Rightarrow\know _a  \forall xA} {
\infer{  \know_a\neg\know_a \neg \forall x \know_a A\Rightarrow\forall xA} {
\infer{\know_a\neg\know_a \neg \forall x \know_a A\Rightarrow A}{\infer{\know_a\neg\know_a \neg \forall x \know_a A\Rightarrow\know_a A} {
\infer{\know_a\neg\know_a \neg \forall x \know_a A\Rightarrow\neg\neg \know_a A}{\infer{\know_a\neg\know_a \neg \forall x \know_a A,\neg \know_a A\Rightarrow\bot } {\infer{\know_a\neg\know_a \neg \forall x \know_a A \Rightarrow\neg \know _a\neg \know _a A}{\infer{\know_a\neg\know_a \neg \forall x \know_a A , \know_a \neg \know _a A \Rightarrow\bot} {\infer{\know_a\neg\know_a \neg \forall x \know_a A\Rightarrow\neg\know_a \neg \forall x \know_a A }{\infer{\know_a\neg\know_a \neg \forall x \know_a A\Rightarrow \know_a\neg\know_a \neg \forall x \know_a A}{}}&\infer{\know_a \neg \know _a A \Rightarrow\know_a \neg \forall x \know_a A}{ \infer{\know_a \neg \know _a A \Rightarrow\neg \forall x \know_a A}{\infer{\know_a \neg \know _a A , \forall x  \know _a A\Rightarrow\bot}{\infer{\know_a \neg \know _a A\Rightarrow\neg \know _a A}{\infer{\know_a \neg \know _a A\Rightarrow\know_a \neg \know _a A}{}}&\infer{\forall x  \know _a A\Rightarrow\know _a A }{\infer{\forall x  \know _a A\Rightarrow\forall x  \know _a A}{}}}}}}}&\infer{ \neg \know_a A\Rightarrow\know _a\neg \know _a A}{\infer{\neg \know_a A\Rightarrow \neg \know_a A}{}}}}}}}}}\]
and $\delta_2$ is
\[\infer{\forall x \know _a A\Rightarrow\know _a\neg \know _a\neg \forall x\know _a A} {\infer{\forall x \know _a A\Rightarrow\neg \know _a\neg \forall x\know _a A}{\infer{\forall x \know _a A, \know _a \neg \forall x \know_a A\Rightarrow\bot}{\infer{\forall x \know _a A\Rightarrow\neg\neg \forall x \know _a A}{\infer{\neg \forall x \know _a A, \forall x \know _a A\Rightarrow\bot}{\infer{\neg\forall x \know _a A\Rightarrow\neg \forall x \know _a A}{}&\infer{ \forall x \know _a A\Rightarrow \forall x \know _a A}{}}}&\infer{\know _a \neg \forall x \know_a A\Rightarrow\neg \forall x \know _a A}{\infer{\know _a \neg \forall x \know_a A\Rightarrow\know _a \neg \forall x \know_a A}{}}}}}\]
}

\end{proof}

\noindent {\bf Lemma \ref{lem:truth}} (Truth lemma){\bf .} {\it For any $w\in \canw$ and formula $A$, $A\in w$ if, and only if, $\canm , w, \cani_w \Vdash A$.}
\begin{proof}
The proof is by induction on the number of occurrences of sentential operators---that is, connectives, quantifiers, $\know$ and justifications---in $A$. In the base case, $A$ is atomic. We have then two subcases.
\begin{itemize}
\item $A=(u\lid v)$. Suppose $u\lid v\in w $. Then, by Defintion \ref{def:canonical-model}, $(\underline{u}, \underline{v})\in \cani_w(\lid)$ and hence $\canm , w, \cani _w \Vdash u\lid v$. 

Suppose  $\canm , w, \cani _w \Vdash u\lid v$. This is the case if, and only if, $(\underline{u}, \underline{v})\in \cani_w(\lid)$. But by Defintion \ref{def:canonical-model}, this implies that $u\lid v\in w $.

\item $A= P(t_1,\dots ,t_n)$. Suppose $P(t_1,\dots ,t_n) \in w $. Then, by Defintion \ref{def:canonical-model}, $(\underline{t_1},\dots , \underline{t_n})\in \cani_w(P)$ and hence $\canm , w, \cani _w \Vdash P(t_1,\dots ,t_n)$. 

Suppose  $\canm , w, \cani _w \Vdash P(t_1,\dots ,t_n)$. This is the case if, and only if, $(\underline{t_1},\dots , \underline{t_n})\in \cani_w(P)$. But by Defintion \ref{def:canonical-model}, this implies that $P(t_1,\dots ,t_n)\in w $.
\end{itemize}

Let us then  suppose that the statement holds for any formula containing less than $n$ occurrences of sentential operators. We show it holds also for any formula $A$ that contains $n$ occurrences of sentential operators. Let us reason on the form of $A$ in case it contains at least one occurrence of a sentential operator. We consider $\bot$ as a $0$-ary operator of this kind.

\begin{itemize}
\item $A=\bot $. Obviously, $A\not\in w$  and $\canm , w, \cani_w \not\Vdash \bot$. Indeed, respectively, $w$ is an MCC and thus consistent, and $\canm$ is a model.

\item $A= B\rightarrow C$. Suppose $B\rightarrow C\in w $. Either $B\notin w$ or $C\in w$ holds. Otherwise, $w$ would be contradictory, which is impossible since $w$ is an MCC. By inductive hypothesis, then, either $\canm , w, \cani _w \not\Vdash B$ or $\canm , w, \cani _w \Vdash C$ holds. But this precisely implies that $\canm , w, \cani _w \Vdash B \rightarrow C$. 

Suppose $\canm , w, \cani _w \Vdash A \rightarrow B$. By Definition \ref{def:forcing} for $\rightarrow$, either $\canm , w, \cani _w \not\Vdash B$ or $\canm , w, \cani _w \Vdash A$  holds. 
By inductive hypothesis, then, either $B\notin w$ or $A\in w$ holds. But since $w$ is an MCC, by Lemma \ref{lem:maximality} and by the consistency of MCCs, we can conclude that $B\rightarrow C\in w $. 

\item  $A= B\wedge C$. Suppose $B\wedge C\in w $. Then both $B\in w$ and $C\in w$ hold. Otherwise, $w$ would not be maximally consistent, which is impossible since $w$ is an MCC. By inductive hypothesis, then, both $\canm , w, \cani _w \Vdash B$ and $\canm , w, \cani _w \Vdash C$ hold. But this precisely implies that $\canm , w, \cani _w \Vdash B \wedge C$. 

Suppose $\canm , w, \cani _w \Vdash A \wedge B$. By Definition \ref{def:forcing} for $\wedge$, both $\canm , w, \cani _w \Vdash B$ and $\canm , w, \cani _w \Vdash A$ hold. 
By inductive hypothesis, then, both $B\in w$ and $A\in w$ hold. But since $w$ is an MCC, by Lemma \ref{lem:maximality} and by the consistency of MCCs, we can conclude that $B\wedge C\in w $. 

\item  $A= \forall y. B $. Suppose $\forall y. B\in w $. Then $B[t/y]\in w$ holds for any term $t$. Otherwise, $w$ would not be maximally consistent, which is impossible since $w$ is an MCC. By inductive hypothesis, then, $\canm , w, \cani _w \Vdash B[t/y] $ holds for any term $t$. But this along with 
\begin{itemize}
\item  the fact that, for any object $\underline {t}\in \canu$, there exists a term $t$ such that $\cani  _v (t)=\underline{t}$, for any $v\in \canw$, and

\item the fact the value of $f(x)$, for any $x$-variant $f$ of any assignment, is the same at all states $v\in \canw$,
\end{itemize}
implies that, for any $x$-variant $f$ of $ \cani _w$, it must hold that $\canm , w, f \Vdash B$. And from this we can deduce that
$\canm , w, \cani _w \Vdash \forall y. B$ holds. 

Suppose $\canm , w, \cani _w \Vdash \forall y. B$. By Definition \ref{def:forcing} for $\forall$, $\canm , w, f \Vdash B$ holds for any $x$-variant $f$ of $\cani _w $. 
We show, by reasoning indirectly, that $B[t/y] \in w$ must hold for any term $t$. Suppose that this is not the case and thus that $B[s/y] \notin w$ for some term $s$. By taking the contrapositive of one direction of the inductive hypothesis then, we can deduce that $\canm , w, \cani _w \not\Vdash B[s/y]$.
Since this---along with the fact that the value of $f(x)$, for any $x$-variant $f$ of any assignment, is the same at all states---implies that there is some $x$-variant $f$ of $\cani_w$ such that $\canm , w, f \not\Vdash B$. We can deduce that $\canm , w, \cani _w \not\Vdash \forall y.B$. Which contradicts our assumptions. Hence, 
$B[t/y] \in w$ must hold for any term $t$. Since $w$ is an MCC, by Lemma \ref{lem:maximality} and by the counterexemplarity of MCCs, we can conclude that $\forall y. B\in w $. Indeed,  $\neg \forall y. B\in w $ would imply that there exists a term $s$ such that $\neg B[s/y] \in w$.

\item $A=\know _a B$. Suppose $\know _a B\in w $. Then $B\in v$ must hold for any $v\in \canw$ such that $w\canr _a v$ because of Definition \ref{def:canonical-model}. By inductive hypothesis then, we have that $\canm , v, \cani _v \Vdash B$ for any $v\in \canw$ such that $w\canr _a v$. But this is exactly the condition under which
$\canm , w, \cani _w \Vdash \know _a B$ holds. 

Suppose $\canm , w, \cani _w \Vdash \know _a B$. Then it must be the case that $\canm , v, \cani _v \Vdash B$ for any $v\in \canw$ such that $w\canr _a v$. But then, by inductive hypothesis, we must have that $B\in v$ for any $v\in \canw$ such that $w\canr _a v$.
Now, by Lemma \ref{lem:maximality}, since $w$ is an MCC, it must either be the case that $\know _a B\in w$ or be the case that $\neg \know _a B\in w$. Suppose, by reasoning indirectly, that $\neg \know _a B\in w$. But then, by Lemma \ref {lem:existence}, we know that there must  exist a $v\in \canw$ such that $\neg B\in v$ and $w\canr_a v$.
Since this---along with the fact that any $v\in \canw$ is an MCC---contradicts the fact that $B\in v$ for any $v\in \canw$ such that $w\canr _a v$, we can conclude that $\know _a B\in w$.

\item $A=j: B$. Suppose $j: B\in w $. Then, because of Definition \ref{def:canonical-model}, $B\in v$ must hold for any $v\in \canw$ such that $w\canr _\gamma v$ and we must have that $B \in \cane _w(j)$. By inductive hypothesis then, we have that $\canm , v, \cani _v \Vdash B$ for any $v\in \canw$ such that $w\canr _\gamma v$. But this and $B \in \cane _w(j)$---which must hold since $j:B\in w$---are exactly the conditions under which
$\canm , w, \cani _w \Vdash j: B$ holds. 

Suppose $\canm , w, \cani _w \Vdash j: B$. Then it must be the case that $B\in \cane_w(j)$ and that $\canm , v, \cani _v \Vdash B$ for any $v\in \canw$ such that $w\canr _\gamma v$. But then, since  $B\in \cane_w(j)$, we must have, by Definition \ref{def:canonical-model},  that $j:B\in w $, as desired.

%%NecJustifications
%Suppose $\canm , w, \cani _w \Vdash j: B$. Then it must be the case that $B\in \cane_w(j)$ and that $\canm , v, \cani _v \Vdash B$ for any $v\in \canw$ such that $w\canr _\gamma v$. But then, by inductive hypothesis, we must have that $B\in v$ for any $v\in \canw$ such that $w\canr _\gamma v$.
%Now, by Lemma \ref{lem:maximality}, since $w$ is an MCC, it must either be the case that $j: B\in w$ or be the case that $\neg j: B\in w$. Suppose, by reasoning indirectly, that $\neg j: B\in w$. But then, by Lemma \ref {lem:existence}, we know that there must  exist a $v\in \canw$ such that $\neg B\in v$ and $w\canr_\gamma v$.
%Since this---along with the fact that any $v\in \canw$ is an MCC---contradicts the fact that $B\in v$ for any $v\in \canw$ such that $w\canr _\gamma v$, we can conclude that $j: B\in w$.

\end{itemize}
\end{proof}

\end{appendices}

%%===========================================================================================%%
%% If you are submitting to one of the Nature Portfolio journals, using the eJP submission   %%
%% system, please include the references within the manuscript file itself. You may do this  %%
%% by copying the reference list from your .bbl file, paste it into the main manuscript .tex %%
%% file, and delete the associated \verb+\bibliography+ commands.                            %%
%%===========================================================================================%%

\bibliography{bib-hyper.bib}% common bib file
%% if required, the content of .bbl file can be included here once bbl is generated
%%\input sn-article.bbl

\end{document}